\numberwithin{equation}{section}
\newtheorem{Theorem}{Theorem}[section]
\newtheorem*{Theorem*}{Theorem}
\newtheorem{Corollary}[Theorem]{Corollary}
\newtheorem{Lemma}[Theorem]{Lemma}
\newtheorem{Proposition}[Theorem]{Proposition}
 { \theoremstyle{definition}

\newtheorem{Example}[Theorem]{Example}
\newtheorem{Remark}[Theorem]{Remark} }
\newcommand{\pd}{\partial}
\newcommand{\bC}{{\mathbb C}}
\newcommand{\bP}{{\mathbb P}}
\newcommand{\bZ}{{\mathbb Z}}
\newcommand{\cB}{{\mathcal B}}
\newcommand{\cF}{{\mathcal F}}
\newcommand{\half}{\frac{1}{2}}
\newcommand{\tC}{{\widetilde C}}
\newcommand{\tG}{{\widetilde G}}
\newcommand{\hf}{{\hat f}}
\DeclareMathOperator{\Aut}{Aut}
\DeclareMathOperator{\Id}{id}
\newcommand{\be}{\begin{equation}}
\newcommand{\ee}{\end{equation}}
\newcommand{\bea}{\begin{eqnarray}}
\newcommand{\ben}{\begin{eqnarray*}}
\newcommand{\een}{\end{eqnarray*}}
\newcommand{\eea}{\end{eqnarray}}
\begin{document}
\allowdisplaybreaks

\newcommand{\arXivNumber}{2210.08712}

\renewcommand{\PaperNumber}{085}

\FirstPageHeading

\ShortArticleName{Diagonal Tau-Functions of 2D Toda Lattice Hierarchy}

\ArticleName{Diagonal Tau-Functions of 2D Toda Lattice Hierarchy,\\ Connected $\boldsymbol{(n,m)}$-Point Functions, and Double\\ Hurwitz Numbers}

\Author{Zhiyuan WANG~$^{\rm a}$ and Chenglang YANG~$^{\rm b}$}

\AuthorNameForHeading{Z.~Wang and C.~Yang}

\Address{$^{\rm a)}$~School of Mathematics and Statistics, Huazhong University of Science and Technology,\\
\hphantom{$^{\rm a)}$}~Wuhan, P.R.~China}
\EmailD{\href{mailto:wangzy23@hust.edu.cn}{wangzy23@hust.edu.cn}}

\Address{$^{\rm b)}$~Hua Loo-Keng Center for Mathematical Sciences, Academy of Mathematics\\
\hphantom{$^{\rm b)}$}~and Systems Science, Chinese Academy of Sciences, Beijing, P.R.~China}
\EmailD{\href{mailto:yangcl@amss.ac.cn}{yangcl@amss.ac.cn}}

\ArticleDates{Received December 18, 2022, in final form October 21, 2023; Published online November 04, 2023}

\Abstract{We derive an explicit formula for the connected $(n,m)$-point functions associated to an arbitrary diagonal tau-function $\tau_f(\bm t^+,\bm t^-)$ of the 2d Toda lattice hierarchy using fermionic computations and the boson-fermion correspondence. Then for fixed $\bm t^-$, we compute the KP-affine coordinates of $\tau_f(\bm t^+,\bm t^-)$.
As applications, we present a unified approach to compute various types of connected double Hurwitz numbers, including the ordinary double Hurwitz numbers, the double Hurwitz numbers with completed $r$-cycles, and the mixed double Hurwitz numbers. We also apply this method to the computation of the stationary Gromov--Witten invariants of $\bP^1$ relative to two points.}

\Keywords{2d Toda lattice hierarchy; connected $(n,m)$-point functions; boson-fermion cor\-respondence; double Hurwitz numbers}

\Classification{37K10; 14N10; 14N35}

\section{Introduction}

\subsection{Double Hurwitz numbers}

Hurwitz numbers~\cite{hur} count the numbers of branched covers between Riemann surfaces
with specified ramification types.
They relate the geometry of Riemann surfaces to many other mathematical theories
such as the representation theory,
integrable hierarchies,
and combinatorics.
In particular,
Hurwitz numbers play an important role in the intersection theory on moduli spaces of curves
and Gromov--Witten theory,
see, e.g.,~\cite{bems, bm, dij, elsv1, elsv2, gv, op2, op, pa}.

The (possibly disconnected) ordinary Hurwitz numbers count all branched covers
between not necessarily connected Riemann surfaces,
and can be calculated using representation theory and the Burnside formula,
see, e.g.,~\cite{dij}.
In mathematical physics,
sometimes it is more natural to consider connected Hurwitz numbers.
For example,
the famous ELSV formula~\cite{elsv1, elsv2}
relates the connected single Hurwitz numbers to some Hodge integrals over the moduli spaces of stable curves
\cite{dm, kn},
and implies the polynomiality of such Hurwitz numbers.

Our main objects of interest in this work are several types of connected double Hurwitz numbers,
and the relative stationary Gromov--Witten invariant of $\mathbb{P}^1$.
The simplest example of double Hurwitz numbers is the
ordinary double Hurwitz number labeled by two partitions $\mu^\pm$,
which counts the branched covers which has ramification type $\mu^+$, $\mu^-$ over two given points
and simple ramifications over other points.
In~\cite{Ok1},
Okounkov showed that the generating series of all (possibly disconnected)
double Hurwitz numbers is a tau-function of the 2d Toda lattice hierarchy,
and found a fermionic representation of this tau-function:
\begin{equation*}
	\tau^{(2)} ( \bm t^+ , \bm t^-; \beta) =
	\big\langle 0 \big| \Gamma_+(\bm t^+) {\rm e}^{\beta K^{(2)}} \Gamma_- (\bm t^-) \big|0\big\rangle,
\end{equation*}
where $K^{(2)}$ is the cut-and-join operator.
In~\cite{gjv}, Goulden--Jackson--Vakil proved the piecewise polynomiality of the connected double Hurwitz numbers
using a purely combinatorial method.
Roughly speaking,
the whole affine space \big(with coordinates the parts of $\mu^\pm$\big)
is separated into some chambers by some walls,
and the piecewise polynomiality means that these numbers are polynomials in the parts of $\mu^\pm$
inside each chamber.
In~\cite{jo},
Johnson derived a formula for connected double Hurwitz numbers in each chamber,
and proved the strong piecewise polynomiality for ordinary double Hurwitz numbers using this formula.

There are also double Hurwitz numbers of other types in literatures.
For example,
the double Hurwitz numbers with completed $r$-cycles~\cite{op, ssz},
monotone and mixed double Hurwitz numbers~\cite{ggn}.
In~\cite{ssz},
Shadrin, Spitz, and Zvonkine derived a formula for double Hurwitz numbers with completed $r$-cycles
(see~\cite[equation~(17)]{ssz})
and proved the strong piecewise polynomiality using a method similar to the method in~\cite{jo}.
The monotone Hurwitz numbers were introduced by Goulden, Guay-Paquet, and Novak in~\cite{ggn2}
to formulate the Harish--Chandra--Itzykson--Zuber matrix model,
which are also attractive in many mathematical researches.
And a more complicated kind of Hurwitz numbers called the mixed double Hurwitz numbers were introduced in~\cite{ggn}
in the study of the combinatorial aspects of Cayley graphs of the symmetric groups.
The generating series of the mixed double Hurwitz numbers is also a tau-function of the 2-Toda hierarchy~\cite{ggn}.

\subsection{Motivation}

In this work,
we derive an explicit formula for connected $(n,m)$-point functions of
a diagonal tau-function~\cite{ca} of the 2d Toda lattice hierarchy,
and apply it to compute the connected double Hurwitz numbers
(both in chambers and on walls)
and the stationary GW invariants of $\mathbb{P}^1$ relative to two points.

This paper is part of a series of works~\cite{jwy, wy, wy1, wyz}
in which the fermionic approach to integrable hierarchies
are applied to solve problems in Gromov--Witten type theories.
These works are inspired by Zhou~\cite{zhou1}.
In that work, Zhou derived a formula for the connected bosonic $n$-point functions
of a tau-function of the KP hierarchy in terms of the KP-affine coordinates on the Sato Grassmannian.
See~\cite{jm, djm, sa, sw} for the basics of the boson-fermion correspondence and Sato's theory
of integrable hierarchies,
and see~\cite{by, hb, zhou3} for an introduction of the KP-affine coordinates
and the application to the Witten--Kontsevich tau-function~\cite{kon, wit}.
Inspired by Zhou's work on KP hierarchy,
we have derived formulas to compute the connected bosonic~$n$-~or $(n,m)$-point functions
for other integrable hierarchies
including the BKP hierarchy~\cite{djkm, jm} and diagonal tau-functions of $2$-BKP hierarchy,
see~\cite{wy} and~\cite{wy1} respectively.
Moreover,
in~\cite{jwy} the authors have developed a strategy to find the quantum spectral curve of type $B$
in the sense of Gukov--Su{\l}kowski~\cite{gs} using BKP-affine coordinates,
and computed the quantum spectral curve for spin Hurwitz numbers~\cite{eop, gkl}.
In~\cite{wyz}, the same method have been applied to find the quantum spectral curve of type $B$
for the generalized Br\'ezin--Gross--Witten models.
Now the present paper is devoted to the computation of the free energy of a diagonal tau-functions of $2$-Toda hierarchy.

In the case of KP (resp.~BKP) hierarchy,
the information of a tau-function $\tau$ is encoded entirely in its KP- (resp.~BKP-) affine coordinates,
and finding affine coordinates is equivalent to expressing the tau-function
as a Bogoliubov transform of the fermionic vacuum using only fermionic creators,
see~\cite{wy, zhou1}.
However,
in the case of $2$-Toda~\cite{ut} or $N$-component KP hierarchy~\cite{kv} (for a general $N\geq 2$),
such Bogoliubov transforms are not unique and yet we do not know a canonical way to
specify a set of coordinates.
Nevertheless,
in the case of diagonal tau-functions of $2$-Toda hierarchy,
the information is encoded in a function
$f\colon \bZ +\half \to \bC$ which can be regarded as a substitute of affine coordinates,
and we are able to express the connected $(n,m)$-point functions in terms of $f$.

\subsection{Main results}

Now we state our main results of this paper.
Let
$f\colon \bZ +\half \to \bC$
be an arbitrary function defined on the set of half-integers,
and let
\begin{equation*}
	\label{eq-def-hatf}
	\hf = \sum_{s\in \bZ+\half} f(s) {:}\psi_s \psi_{-s}^*{:}
\end{equation*}
be an operator on the fermionic Fock space,
then
\begin{equation*}
	\tau_f (\bm t^+,\bm t^-) = \big\langle 0\big| \Gamma_+(\bm t^+) \exp\big(\hf\big) \Gamma_-(\bm t^-) \big|0\big\rangle
\end{equation*}
is a diagonal tau-function of the 2d Toda lattice hierarchy.
Our main theorem of the paper is the following formula for the connected $(n,m)$-point functions:
\begin{Theorem}
	The connected $(n,m)$-point functions are given by
	\begin{gather}
		 \sum_{j_1,\dots,j_n,k_1,\dots,k_m \geq 1}
		\frac{\pd^{m+n} \log\tau_f(\bm t^+,\bm t^-)}{\pd t_{j_1}^+ \cdots
			\pd t_{j_n}^+ \pd t_{k_1}^- \cdots \pd t_{k_m}^- }
		\bigg|_{\bm t=0} \cdot
		\prod_{a=1}^n z_a^{-j_a-1} \cdot \prod_{b=1}^m z_{n+b}^{k_b-1} \nonumber\\
		\qquad\qquad = (-1)^{n+m-1} \sum_{\text{$(n+m)$-{\rm cycles}}}
		\prod_{i=1}^{n+m} B_{\sigma(i), \sigma(i+1)}
		- \frac{\delta_{n,2}\delta_{m,0} + \delta_{n,0}\delta_{m,2} }{(z_1-z_2)^2} ,\label{eq-intro-main}
	\end{gather}
	where the summations are taken over all $(n+m)$-cycles $\sigma$,
	and we denote $\sigma(n+m+1) = \sigma(1)$.
	And $B_{i,j}$ are given by
	\begin{equation*}
		B_{i,j} = \begin{cases}
			\displaystyle \sum_{k\geq 0} z_i^{-k-1} z_j^k & \text{if} \ i< j\leq n \ \text{or} \ n<i< j,\\
			\displaystyle\sum_{k\geq 0} {\rm e}^{-f(k+\half)} z_i^{-k-1}z_j^k & \text{if} \ i\leq n<j,\\
			\displaystyle-\sum_{k\geq 0}z_j^{-k-1} z_i^k & \text{if} \ j<i\leq n \ \text{or} \ n<j< i,\\
		\displaystyle	- \sum_{k\geq 0} {\rm e}^{f(-k-\half)} z_j^{-k-1}z_i^k & \text{if} \ j\leq n<i.
		\end{cases}
	\end{equation*}
\end{Theorem}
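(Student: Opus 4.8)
The plan is to pass to the fermionic side via the boson--fermion correspondence and reduce the connected $(n,m)$-point function to a sum of fermionic two-point functions. First I would record that $\Gamma_+(\bm t^+) = \exp\big(\sum_{j\ge 1} t_j^+\alpha_j\big)$ and $\Gamma_-(\bm t^-) = \exp\big(\sum_{k\ge 1}t_k^-\alpha_{-k}\big)$, where $\alpha_n$ are the bosonic current modes, so that differentiating in $t_j^+$ (resp.\ $t_k^-$) and setting $\bm t = 0$ inserts a mode $\alpha_j$ to the left of $\exp\big(\hf\big)$ (resp.\ $\alpha_{-k}$ to the right). Packaging these insertions with the prescribed powers $z_a^{-j_a-1}$ and $z_{n+b}^{k_b-1}$ turns the disconnected $(n,m)$-point generating series into a vacuum expectation of a product of current fields $J(z)={:}\psi(z)\psi^*(z){:}$, with $n$ of them standing to the left and $m$ to the right of $\exp\big(\hf\big)$.

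Next I would apply Wick's theorem to this product of normal-ordered fermion bilinears. Each term is a complete pairing of the $\psi$'s with the $\psi^*$'s, and the contribution of a pairing factorizes into a product of elementary contractions, i.e.\ fermionic two-point functions. Taking the logarithm to extract the connected part, I would invoke the standard linked-cluster argument: the surviving pairings are exactly those whose contraction graph is a single cycle through all $n+m$ insertion points, so the connected correlator becomes a sum over $(n+m)$-cycles $\sigma$ of $\prod_i B_{\sigma(i),\sigma(i+1)}$. The overall sign $(-1)^{n+m-1}$ is then the bookkeeping of the fermionic reordering signs incurred in closing up an $(n+m)$-cycle, which as a permutation has sign $(-1)^{n+m-1}$.

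To identify the factors $B_{i,j}$ I would compute the individual contractions, which are fermionic propagators dressed by $\exp\big(\hf\big)$. Since $\hf$ is diagonal in the modes, conjugation acts by scaling, $\exp\big(\hf\big)\psi_s\exp\big({-}\hf\big) = {\rm e}^{f(s)}\psi_s$ and $\exp\big(\hf\big)\psi_s^*\exp\big({-}\hf\big) = {\rm e}^{-f(-s)}\psi_s^*$. A contraction therefore depends only on the relative positions of its two endpoints with respect to $\exp\big(\hf\big)$: when both fields sit on the same side the dressing factors cancel and one recovers the free propagator $\sum_{k\ge 0} z_i^{-k-1}z_j^k = 1/(z_i - z_j)$, while a contraction joining the left and right blocks must be carried through $\exp\big(\hf\big)$ and picks up a factor ${\rm e}^{-f(k+\half)}$ or ${\rm e}^{f(-k-\half)}$ according to whether the crossing field is a $\psi^*$ (with $-s=k+\half$) or a $\psi$ (with $s=-k-\half$). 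Keeping track of the orientation and of the resulting expansion region produces exactly the four cases in the statement, the sign difference between $i<j$ and $j<i$ reflecting the order of the two modes in the vacuum expectation.

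The step I expect to be the main obstacle is the honest treatment of the normal ordering in $J={:}\psi\psi^*{:}$, which is what produces the correction term $-(\delta_{n,2}\delta_{m,0}+\delta_{n,0}\delta_{m,2})/(z_1-z_2)^2$. The naive cycle sum is built from contractions of the unnormal-ordered bilinears, so for each adjacent pair of currents it overcounts by the c-number self-contraction subtracted in the normal ordering; this discrepancy survives only when the entire correlator consists of a single pair of currents on one side, i.e.\ in the cases $(n,m)=(2,0)$ and $(0,2)$. I would pin down the correction by checking these two cases directly against the triviality $\tau_f(\bm t^+,0)=\tau_f(0,\bm t^-)=1$, which forces the one-sided connected two-point functions to vanish: since the bare cycle sum there equals $1/(z_1-z_2)^2$, the stated subtraction is precisely what is needed. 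Assembling the four-case propagator, the cyclic sum, and this correction then yields~\eqref{eq-intro-main}.
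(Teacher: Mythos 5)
Your overall architecture is the same as the paper's: Wick's theorem with the dressed propagators (the four cases of $B_{i,j}$ coming from ${\rm e}^{-\hf}\psi_r{\rm e}^{\hf}={\rm e}^{-f(r)}\psi_r$), a determinant-to-cycle-sum step to extract the connected part (this is exactly Zhou's combinatorial lemma that the paper invokes), and a separate treatment of the $(2,0)$, $(0,2)$ correction. However, there is a genuine gap at the very first step, where you identify the generating series of $t$-derivatives of $\tau_f$ at $\bm t=0$ with the vacuum expectation value of a product of \emph{full} currents $\alpha(z)={:}\psi(z)\psi^*(z){:}$. This identification is false whenever two currents stand on the same side of ${\rm e}^{\hf}$: differentiating $\Gamma_+(\bm t^+)$ inserts only the modes $\alpha_j$ with $j\geq 1$ (and only $\alpha_{-k}$, $k\geq 1$, on the right), whereas $\alpha(z)$ contains all modes, and the extra modes of a non-extremal same-side current produce central terms via $[\alpha_k,\alpha_{-k}]=k\cdot\Id$ rather than annihilating the vacuum. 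Concretely, for $(n,m)=(2,0)$ one has $\sum_{k,l\geq 1}\pd^2_{t_k^+ t_l^+}\tau_f\big|_{\bm t=0}\,z_1^{-k-1}z_2^{-l-1}=0$, while $\langle \alpha(z_1)\alpha(z_2){\rm e}^{\hf}\rangle=\sum_{k\geq 1}kz_1^{-k-1}z_2^{k-1}$. Relatedly, your diagnosis of where the correction term comes from is not correct: it is not an ``overcount by the c-number self-contraction subtracted in the normal ordering,'' since for $n+m\geq 2$ the cycle sum never uses the diagonal entries $B_{i,i}$, and the off-diagonal contractions are insensitive to the normal ordering of each individual current. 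The correction is precisely the central-term mismatch just described, i.e., a mismatch between current insertions and $t$-derivatives, not a Wick-contraction artifact.

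Because of this, your argument correctly establishes ``connected current VEV $=$ cycle sum,'' but it does not establish ``derivative series of $\log\tau_f$ $=$ connected current VEV'' except in the two cases you verify by hand via $\tau_f(\bm t^+,0)=\tau_f(0,\bm t^-)=1$. The assertion that the discrepancy ``survives only'' for $(n,m)=(2,0),(0,2)$ is true, but it is exactly the nontrivial content that must be proved: one needs that for $(1,1)$ and for all $(n,m)$ with $n+m\geq 3$ the same-side central terms cancel in the connected part, so that the connected correlator of full currents coincides with the generating series of derivatives of $F_f$. The paper proves this by induction on $n+m$ (adapting Zhou's Proposition 5.1), keeping $\bm t$ generic: the insertion of $\alpha(z_1)$ is realized as the operator $\tilde\alpha(z_1)=\sum_{k\geq 1}\bigl(z_1^{-k-1}\pd/\pd t_k^+ + kt_k^+z_1^{k-1}\bigr)$ acting on $\tau_f$ times lower-point correlators, and the Leibniz expansion is compared against the M\"obius inversion formula; the central terms are seen to survive only in the one-sided two-point functions. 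To close the gap you would need to supply this induction, or an equivalent statement such as: the disconnected VEV of full currents equals the sum, over partial matchings of same-side pairs, of products of central factors $i_{z_i,z_j}(z_i-z_j)^{-2}$ times the $\tau$-derivative series of the complementary index set, from which the cancellation in the cumulants for $n+m\geq 3$ follows formally.
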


Then we are able to apply this formula to the concrete computations of
the connected double Hurwitz numbers mentioned above.
Using the results in~\cite{ggn, Ok1, ssz},
one may find that the corresponding functions $f\colon \bZ+\half \to \bC$
for the various double Hurwitz numbers are as follows:
\begin{itemize}\itemsep=0pt
	\item[(1)]
	for the ordinary double Hurwitz numbers: $f^{(2)}(s) = \frac{s^2}{2}$,
	\item[(2)]
	for the double Hurwitz numbers with completed $r$-cycles: $f^{(r)}(s) = \frac{s^r}{r!}$,
	\item[(3)]
	for the mixed double Hurwitz numbers:
	\begin{equation*}
		f^{\rm mix} (s) =\begin{cases}
		\displaystyle	\frac{s^2}{2}u - \log \prod\limits_{j=1}^{-s-\half} (1-jt)
			& \text{if} \ s<0,\\
	\displaystyle		\frac{s^2}{2}u + \log \prod\limits_{j=1}^{s-\half} (1+jt)
			& \text{if} \ s>0.
		\end{cases}
	\end{equation*}
\end{itemize}

Moreover,
the generating series of
the stationary Gromov--Witten invariants of $\bP^1$ relative to two points $0,\infty \in \bP^1$
is also a diagonal tau-function of the 2d Toda lattice hierarchy.
This was established by Okounkov and Pandharipande~\cite{op} using the GW/Hurwitz correspondence,
and in this case the function $f$ is
\begin{equation*}
	f_{\bP^1} (s) =
	\sum_{i\geq 0} x_i \cdot \frac{s^{i+1}}{(i+1)!},
	\qquad \forall s\in \bZ+\half.
\end{equation*}

Furthermore,
we will fix $\bm t^-$ and regard $\tau_f(\bm t^+,\bm t^-)$
as a tau-function of the KP hierarchy with KP-time variables $\bm t^+$.
Then the following result gives an example of
finding relations between different hierarchies from the fermionic point of view
(see Section~\ref{sec-red-KP} for details):
\begin{Theorem}
	The KP-affine coordinates for
	$\tau_f (\bm t^+,\bm t^-)$ $($with fixed $\bm t^-$ and KP-time variables~$\bm t^+)$ are
	\begin{gather}
	\label{eq-KPaffine-diag}
	a_{n,m}^f = (-1)^n \cdot
	s_{(m|n)}(\bm t^-)
	\cdot {\rm e}^{ f(-m-\half) - f(n+\half) },
	\end{gather}
	for every $m,n\geq 0$.
\end{Theorem}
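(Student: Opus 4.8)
\emph{Proof strategy.}\quad The plan is to fix $\bm t^-$ and treat the fermionic state
\[
|W\rangle:=\exp\big(\hf\big)\,\Gamma_-(\bm t^-)\vac
\]
as the point of the Sato Grassmannian whose KP tau-function in the times $\bm t^+$ is $\tau_f(\bm t^+,\bm t^-)=\lvac\Gamma_+(\bm t^+)|W\rangle$. Both $\hf$ and $\Gamma_-(\bm t^-)$ are exponentials of fermion bilinears, so $|W\rangle$ is a genuine Grassmannian point, and---as recalled in the introduction---computing its KP-affine coordinates amounts to rewriting $|W\rangle$ as a Bogoliubov transform of the vacuum built from creators alone. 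In the fermionic normalization used for affine coordinates (see~\cite{zhou1,wy}), the coordinate $a^f_{n,m}$ is read off from the normalized matrix element
\[
a^f_{n,m}=\frac{\lvac \psi_{n+\half}\psi^*_{m+\half}\,|W\rangle}{\lvac W\rangle},
\]
which picks out the coefficient of the single excitation $\psi_{-m-\half}\psi^*_{-n-\half}\vac$, since $\lvac\psi_{n+\half}\psi^*_{m+\half}$ is dual to that state and is orthogonal to all multi-pair excitations produced by the exponential. First I would settle the normalization: $\hf$ is normal ordered, so $\lvac\exp(\hf)=\lvac$, and $\Gamma_-(\bm t^-)$ involves creation modes only, so $\lvac\Gamma_-(\bm t^-)\vac=1$; hence $\lvac W\rangle=1$ and the denominator is harmless.

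The crux is to commute $\exp(\hf)$ to the left past the bilinear $\psi_{n+\half}\psi^*_{m+\half}$. Since $\hf=\sum_s f(s){:}\psi_s\psi^*_{-s}{:}$ is diagonal, a one-line computation using $\{\psi_s,\psi^*_{-s}\}=1$ (the normal-ordering constants, being scalars, drop out of the commutators) gives the eigen-relations
\[
[\hf,\psi_r]=f(r)\,\psi_r,\qquad [\hf,\psi^*_r]=-f(-r)\,\psi^*_r,
\]
so conjugation merely rescales each mode and
\[
\exp(-\hf)\,\psi_{n+\half}\psi^*_{m+\half}\,\exp(\hf)={\rm e}^{\,f(-m-\half)-f(n+\half)}\,\psi_{n+\half}\psi^*_{m+\half}.
\]
Combined with $\lvac\exp(\hf)=\lvac$, this factors the full $f$-dependence out of the matrix element,
\[
a^f_{n,m}={\rm e}^{\,f(-m-\half)-f(n+\half)}\,\lvac \psi_{n+\half}\psi^*_{m+\half}\,\Gamma_-(\bm t^-)\vac,
\]
reproducing precisely the exponential prefactor in the claimed formula.

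It then remains to identify the $f$-independent matrix element with a hook Schur function. Expanding $\Gamma_-(\bm t^-)\vac=\sum_\lambda s_\lambda(\bm t^-)\,|\lambda\rangle$ in the Schur basis, the bra $\lvac\psi_{n+\half}\psi^*_{m+\half}$ is orthogonal to every $|\lambda\rangle$ except the single hook of Frobenius type $(m\,|\,n)$; writing that state as $|(m|n)\rangle=(-1)^n\,\psi_{-m-\half}\psi^*_{-n-\half}\vac$ and using $\lvac\psi_{n+\half}\psi^*_{m+\half}\,\psi_{-m-\half}\psi^*_{-n-\half}\vac=1$ yields $\lvac\psi_{n+\half}\psi^*_{m+\half}\Gamma_-(\bm t^-)\vac=(-1)^n s_{(m|n)}(\bm t^-)$. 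Substituting into the previous display gives the asserted $a^f_{n,m}=(-1)^n s_{(m|n)}(\bm t^-)\,{\rm e}^{\,f(-m-\half)-f(n+\half)}$. The only genuinely delicate point is the sign and index bookkeeping: producing $(-1)^n$ (rather than $(-1)^m$) together with the correct arm/leg assignment $(m|n)$ relies on fixing once and for all the adjoint convention $\psi_r^\dagger=\psi^*_{-r}$ and the Frobenius sign $(-1)^{\sum_i\beta_i}$ in the wedge presentation of $|\lambda\rangle$; the remaining fermionic manipulations are short and routine.
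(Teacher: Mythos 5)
Your proposal is correct and follows essentially the same route as the paper: both arguments rest on the conjugation relations ${\rm e}^{-\hf}\psi_r{\rm e}^{\hf}={\rm e}^{-f(r)}\psi_r$, ${\rm e}^{-\hf}\psi_r^*{\rm e}^{\hf}={\rm e}^{f(-r)}\psi_r^*$, the expansion $\Gamma_-(\bm t^-)\vac=\sum_\mu s_\mu(\bm t^-)|\mu\rangle$, and the identification of $a_{n,m}$ with $(-1)^n$ times the coefficient of the hook vector $|(m|n)\rangle$. The only cosmetic difference is that the paper diagonalizes ${\rm e}^{\hf}$ on every basis vector $|\mu\rangle$ and then reads off the hook coefficients, whereas you extract that coefficient directly as the matrix element $\lvac\psi_{n+\half}\psi^*_{m+\half}|W\rangle$ and conjugate the annihilator bilinear instead---equivalent bookkeeping with the same exponent ${\rm e}^{f(-m-\half)-f(n+\half)}$.
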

Combining this result with Zhou's original formula for KP tau-functions (see~\cite[Section~5]{zhou1})
will enable one to compute the single Hurwitz numbers.

The rest of this paper is arranged as follows.
In Section~\ref{sec-pre}, we recall some preliminaries of the boson-fermion correspondence.
In Section~\ref{sec-disconn}, we compute the disconnected fermionic and bosonic
$(n,m)$-point functions of a diagonal tau-function.
Then in Section~\ref{sec-conn-nmpt}, we compute the connected bosonic $(n,m)$-point functions
and prove the formula~\eqref{eq-intro-main}
using the results in Section~\ref{sec-disconn}.
In Section~\ref{sec-red-KP},
we fix $\bm t^-$ and compute the KP-affine coordinates of $\tau_f(\bm t^+,\bm t^-)$.
Finally,
we apply~\eqref{eq-intro-main} to the connected double Hurwitz numbers
and the relative stationary GW invariants of $\bP^1$ in Sections~\ref{sec-app-Hurwitz}
and~\ref{sec-app-P1}, respectively.

\section{Preliminaries}
\label{sec-pre}

In this section,
we recall some preliminaries of the boson-fermion correspondence
and the 2d Toda lattice hierarchy.
See, e.g.,~\cite{jm, djm, Ok2, ut} for more details.

\subsection{Free fermions and fermionic Fock space}

In this subsection,
we recall the semi-infinite wedge construction of the fermionic Fock space $\cF$
and the action of free fermions.
See~\cite{djm} and~\cite[Chapter~14]{kac}.

Let $\bm{a}=(a_1,a_2,\dots)$ be a sequence of half-integers $a_i \in \bZ+\half$
satisfying the condition $a_1<a_2<a_3 <\cdots$.
The sequence $\bm{a}$ is said to be admissible if
\begin{equation*}
	\bigg|\bigg(\bZ_{\geq 0}+\half\bigg)-\{a_1,a_2,\dots\}\bigg|<\infty,
	\qquad
	\bigg|\{a_1,a_2,\dots\}-\bigg(\bZ_{\geq 0}+\half\bigg)\bigg|<\infty.
\end{equation*}
For an admissible sequence $\bm{a} =(a_1,a_2,\dots) $,
we denote by $|\bm a\rangle$ the following semi-infinite wedge product
\begin{equation*}
	| \bm a\rangle =
	z^{a_1} \wedge z^{a_2} \wedge z^{a_3} \wedge \cdots,
\end{equation*}
and denote by $\cF$ the infinite-dimensional vector space of all formal (infinite) summations of the form
\begin{equation*}
	\sum_{\bm a: \text{ admissible}} c_{\bm a} |\bm a\rangle,
	\qquad c_{\bm a} \in \bC.
\end{equation*}

The charge of the basis vector $|\bm a\rangle \in \cF$ is defined to be the following integer:
\begin{equation*}
	\text{charge}(|\bm a\rangle)=
	\bigg|\bigg(\bZ_{\geq 0}+\half\bigg)-\{a_1,a_2,\dots\}\bigg|
	-\bigg|\{a_1,a_2,\dots\}-\bigg(\bZ_{\geq 0}+\half\bigg)\bigg|.
\end{equation*}
This gives us a decomposition of the fermionic Fock space
\[
\cF=\bigoplus_{n\in \bZ} \cF^{(n)},
\]
where $\cF^{(n)}$ is spanned by all basis vectors $|\bm a\rangle$ of charge $n$.
We will denote
\[
|n\rangle = z^{n+\half}\wedge z^{n+\frac{3}{2}}
\wedge z^{n+\frac{5}{2}} \wedge \cdots \in \cF^{(n)},
\]
and in particular,
\[
|0\rangle = z^{\half}\wedge z^{\frac{3}{2}}
\wedge z^{\frac{5}{2}} \wedge \cdots \in \cF^{(0)}.
\]
The vector $|0\rangle $ is called the fermionic vacuum vector.
The subspace $\cF^{(0)}$ has a basis labeled by all partitions of integers $\{\mu\}$.
Let $\mu=\{a_1,a_2,\dots\}$ be a partition where
$\mu_1\geq \mu_2\geq \cdots\geq \mu_l >\mu_{l+1}=\mu_{l+2}=\cdots=0$,
and denote
\begin{gather}
\label{eq-cFbasis-mu}
|\mu\rangle =
z^{\frac{1}{2}-\mu_1}\wedge z^{\frac{3}{2}-\mu_2}\wedge
z^{\frac{5}{2}-\mu_3}\wedge \cdots \in\cF^{(0)},
\end{gather}
then $\{|\mu\rangle\}$ form a basis for $\cF^{(0)}$.

Now we recall the action of free fermions $\psi_r$, $\psi_r^*$ (where $r\in \bZ+\half$) on $\cF$.
Let $\psi_r$, $\psi_r^*$ be the following operators on $\cF$:
\begin{gather}
\label{eq-Cli-wedge-1}
\psi_r |\bm a\rangle = z^r \wedge |\bm a\rangle,
\qquad \forall r\in \bZ+\half,
\end{gather}
 and
\begin{gather}
\label{eq-Cli-wedge-2}
\psi_r^* | \bm a\rangle =
\begin{cases}
	(-1)^{k+1} \cdot z^{a_1}\wedge z^{a_2}\wedge \cdots \wedge \widehat{z^{a_k}} \wedge \cdots
	&\text{if} \ a_k = -r \ \text{for some} \ k,\\
	0 &\text{otherwise.}
\end{cases}
\end{gather}
Then one easily checks that the following anti-commutation relations hold:
\begin{gather}
\label{eq-ferm-anticomm}
[\psi_r,\psi_s]_+ = [\psi_r^*,\psi_s^*]_+ =0,
\qquad
[\psi_r,\psi_s^*]_+ = \delta_{r+s,0}\cdot \Id,
\qquad \forall r,s\in \bZ+\half,
\end{gather}
where the bracket is defined by $[a,b]_+ = ab+ba$.
In other words,
\eqref{eq-Cli-wedge-1} and~\eqref{eq-Cli-wedge-2} define an action of the Clifford algebra
on $\cF$.
The operators $\{\psi_r\}$ all have charge $-1$,
and $\{\psi_r^*\}$ all have charge~$1$.
Moreover,
one easily checks that
\begin{gather*}
\psi_r |0\rangle = \psi_r^* |0\rangle =0,
\qquad
\forall r>0,
\end{gather*}
and every basis vector $|\mu\rangle \in \cF^{(0)}$ (where $\mu$ is a partition)
can be obtained by applying operators~$\{\psi_r,\psi_r^*\}_{r<0}$ to the vacuum $|0\rangle$ in the following way:
\begin{gather}
\label{eq-vectormu-psi}
|\mu\rangle=(-1)^{n_1+\cdots+n_k}\cdot
\psi_{-m_1-\half} \psi_{-n_1-\half}^* \cdots
\psi_{-m_k-\half} \psi_{-n_k-\half}^* |0\rangle,
\end{gather}
where $\mu=(m_1,\dots, m_k \mid n_1,\dots,n_k)$ is the Frobenius notation
(see, e.g.,~\cite{mac} for an introduction)
for the partition $\mu$.
The operators $\{\psi_r,\psi_r^*\}_{r<0}$ are called the fermionic creators,
and $\{\psi_r,\psi_r^*\}_{r>0}$ are called the fermionic annihilators.

Furthermore,
one can define an inner product $(\cdot,\cdot)$ on the Fock space $\cF$ by taking
$\{|\bm a\rangle \mid \text{$\bm a$ is} \allowbreak \text{admissible}\}$ to be an orthonormal basis.
Given two admissible sequences $\bm a$ and~$\bm b$,
we denote by $\langle \bm b | \bm a\rangle =(|\bm a\rangle,|\bm b\rangle)$
the inner product of $|\bm a\rangle$ and $|\bm b\rangle$.
Then $\psi_r$ and $\psi_{-r}^*$ are adjoint to each other with respect to this inner product.
Let $A$ be an arbitrary operator (in terms of $\psi_r$, $\psi_s^*$) on $\cF$,
then the inner product of $|\bm a\rangle$ with $A|\bm b \rangle$
will be denoted by $\langle \bm a |A| \bm b \rangle$.
We will also denote by $\langle A \rangle$ the vacuum expectation value of an operator $A$:
\begin{equation*}
	\langle A \rangle = \langle 0 |A| 0\rangle.
\end{equation*}

\subsection{Cut-and-join operator}

In this subsection we recall the cut-and-join operator and its eigenvalues.
The cut-and-join operator plays an important role in the study of Hurwitz numbers,
see, e.g.,~\cite{ gjv, Ok1, zhou2}.

Let $\mu = (\mu_1,\mu_2,\dots,\mu_l)$ be a partition,
and define
\[
\kappa_\mu = \sum_{i=1}^l \mu_i (\mu_i-2i+1).
\]
In particular,
we denote $\kappa_{(\varnothing)} =0$ for the empty partition.
The cut-and-join operator $K^{(2)}$ on the fermionic Fock space is defined to be
\begin{equation}
\label{eq-def-C&Jopr}
K^{(2)} = \sum_{s\in \bZ+\half} \frac{s^2}{2}
{:}\psi_s \psi_{-s}^*{:}.
\end{equation}
Then one has (see~\cite{Ok1})
\begin{equation}
\label{eq-eigen-C&J}
K^{(2)} |\mu \rangle = \frac{\kappa_\mu}{2} |\mu\rangle,
\end{equation}
where $|\mu\rangle$ is the basis vector~\eqref{eq-cFbasis-mu} of $\cF^{(0)}$.

\subsection{Boson-fermion correspondence}

In this subsection,
we recall the bosonic Fock space and boson-fermion correspondence.
See~\cite{djm} for details.

Let $\alpha_n$ be the following operators on $\cF$:
\[
\alpha_n = \sum_{s\in \bZ+\half} {:}\psi_{-s} \psi_{s+n}^*{:},
\qquad n\in \bZ,
\]
where ${:} \psi_{-s} \psi_{s+n}^* {:}$ denotes the normal-ordered product of fermions
defined by
\begin{equation*}
	{:}\phi_{r_1}\phi_{r_2}\cdots \phi_{r_n}{:}
	=(-1)^\sigma \phi_{r_{\sigma(1)}}\phi_{r_{\sigma(2)}}\cdots\phi_{r_{\sigma(3)}},
\end{equation*}
where $\phi_k$ is either $\psi_k$ or $\psi_k^*$,
and $\sigma\in S_n$ is a permutation such that $r_{\sigma(1)}\leq\cdots\leq r_{\sigma(n)}$.
The operator $\alpha_0$ is called the charge operator on $\cF$.
These operators $\{\alpha_n\}_{n\in \bZ}$ satisfy the following commutation relations:
\begin{gather}
\label{eq-comm-boson}
[\alpha_m,\alpha_n]= m\delta_{m+n,0} \cdot \Id,
\end{gather}
i.e.,
they generate a Heisenberg algebra.
The normal-ordered products for the bosons $\{\alpha_n\}_{n\in \bZ}$
are defined by
\begin{equation*}
	{:}\alpha_{n_1}\cdots\alpha_{n_k}{:}=
	\alpha_{n_{\sigma(1)}}\cdots\alpha_{n_{\sigma(k)}},
\end{equation*}
where $\sigma\in S_k$ such that $n_{\sigma(1)}\leq\cdots \leq n_{\sigma(k)}$.
Denote
\begin{gather*}
\psi(\xi)= \sum_{s\in\bZ+\half} \psi_s \xi^{-s-\half},
\qquad
\psi^*(\xi)= \sum_{s\in\bZ+\half} \psi_s^* \xi^{-s-\half},
\end{gather*}
and
\begin{gather}
\label{eq-gen-bos}
\alpha(\xi)= {:}\psi(\xi)\psi^*(\xi){:}=
\sum_{n\in\bZ} \alpha_n \xi^{-n-1},
\end{gather}
then the commutation relation~\eqref{eq-comm-boson} and the anti-commutation relations~\eqref{eq-ferm-anticomm} are
equivalent to the following operator product expansions, respectively:
\begin{gather*}
		\alpha(\xi)\alpha(\eta)= {:}\alpha(\xi)\alpha(\eta){:}+
		\frac{1}{(\xi-\eta)^2},\\
		\psi(\xi)\psi(\eta)= {:}\psi(\xi)\psi(\eta){:},\\
		\psi^*(\xi)\psi^*(\eta)= {:}\psi^*(\xi)\psi^*(\eta){:},\\
		\psi(\xi)\psi^*(\eta)= {:}\psi(\xi)\psi^*(\eta){:}+\frac{1}{\xi-\eta},
\end{gather*}
Moreover,
we have
\begin{equation*}
		\langle \psi(z) \psi^*(w) \rangle = \langle \psi^*(z) \psi(w) \rangle =
		\sum_{k=0}^\infty z^{-k-1}w^k = i_{z,w} \frac{1}{z-w},
\end{equation*}
where the notation $i_{z,w}$ means expanding on $\{|z|>|w|\}$.

The bosonic Fock space $\cB$ is defined by $\cB:=\Lambda\big[\big[w,w^{-1}\big]\big]$,
where $\Lambda$ is the space of symmetric functions in some formal variables
$\bm x=(x_1,x_2,\dots)$,
and $w$ is a formal variable.
The boson-fermion correspondence is a linear isomorphism $\Phi\colon\cF \to \cB$ of vector spaces,
given by (see, e.g.,~\cite[Section~5]{djm}):
\[
\Phi\colon \
|\bm a\rangle \in \cF^{(m)}
\mapsto
w^m\cdot \big\langle m \big|
{\rm e}^{\sum_{n=1}^\infty \frac{p_n}{n} \alpha_n}
\big| \bm a \big\rangle,
\]
where $p_n = p_n(\bm x) \in \Lambda$ ($n\geq 1$) is the Newton symmetric function of degree $n$.
In particular,
by restricting to $\cF^{(0)}$ one obtains an isomorphism
\begin{gather*}
\cF^{(0)}\to \Lambda,
\qquad
|\mu\rangle \mapsto s_\mu =
\big\langle 0 \big| {\rm e}^{\sum_{n=1}^\infty \frac{p_n}{n} \alpha_n} \big| \mu \big\rangle,
\end{gather*}
where $s_\mu = s_\mu( \bm t)$ is the Schur function (see~\cite{mac} for an introduction)
indexed by the partition~$\mu$, and $\bm t = (t_1,t_2, t_3,\dots)$ where $t_n = \frac{p_n}{n}$.
Using the above isomorphism,
one can represent the bosons $\{\alpha_n\}$ and fermions $\{\psi_r,\psi_s^*\}$
as operators on the bosonic Fock space.
One has
\begin{gather*}
\Phi (\alpha_n | \bm a\rangle)
=\begin{cases}
	\displaystyle n\frac{\pd}{\pd p_n} \Phi(|\bm a\rangle), & n>0,\\
	p_{-n}\cdot \Phi(|\bm a\rangle), & n<0,
\end{cases}
\end{gather*}
and
\begin{equation*}
	\Phi(\psi(\xi)|\bm a\rangle)=
	\Psi(\xi) \Phi(|\bm a\rangle),
	\qquad
	\Phi(\psi^*(\xi)|\bm a\rangle)=
	\Psi^*(\xi) \Phi(|\bm a\rangle),
\end{equation*}
where $\Psi(\xi),\Psi^*(\xi)$ are the vertex operators
\begin{gather*}
		\Psi(\xi)=
		\exp\Bigg( \sum_{n=1}^\infty \frac{p_n}{n} \xi^n \Bigg)
		\exp\Bigg( -\sum_{n=1}^\infty \xi^{-n}\frac{\pd}{\pd p_n} \Bigg)
		{\rm e}^K \xi^{\alpha_0},\\
		\Psi^*(\xi)=
		\exp\Bigg( -\sum_{n=1}^\infty \frac{p_n}{n} \xi^n \Bigg)
		\exp\Bigg( \sum_{n=1}^\infty \xi^{-n}\frac{\pd}{\pd p_n} \Bigg)
		{\rm e}^{-K} \xi^{-\alpha_0},	
\end{gather*}
and the actions of ${\rm e}^K$ and $\xi^{\alpha_0}$ are defined by
\begin{equation*}
	\big({\rm e}^K f\big) (z,T)=z\cdot f(z,T),
	\qquad
	\big(\xi^{\alpha_0} f\big)(z,T) = f(\xi z,T).
\end{equation*}

\subsection{Tau-functions of 2d Toda lattice hierarchy}

Now we recall the construction of tau-functions of the 2d Toda lattice hierarchy
as vacuum expectation values.
See~\cite{ta, ut} and~\cite[Appendix]{Ok2}.

Let $\bm t^\pm = \big(t_1^\pm,t_2^\pm,t_3^\pm,\dots\big)$ be two sequences of formal variables.
Denote
\[
\Gamma_\pm \big(\bm t^\pm\big) = \exp \bigg(
\sum_{n\geq 1} t_n^\pm \alpha_{\pm n} \bigg).
\]
Then one has (see, e.g.,~\cite[Appendix]{Ok2}):
\begin{gather}
\label{eq-Y--expansion}
\Gamma_-(\bm t^-) |0\rangle = \sum_{\mu} s_\mu(\bm t^-) |\mu\rangle.
\end{gather}

Now let $A$ be an operator of charge $0$ on $\cF$,
and define
\begin{equation*}
	\tau_n (\bm t^+,\bm t^-) = \big\langle n \big| \Gamma_+(\bm t^+) A \Gamma_-(\bm t^-) \big|n\big\rangle,
	\qquad n\in \bZ,
\end{equation*}
then by the boson-fermion correspondence one has
\begin{gather}
\label{eq-bf-bosonL}
\big\langle n\big| \Gamma_+(\bm t^+) \alpha_k A \Gamma_- (\bm t^-) \big|n\big\rangle
= \begin{cases}
\displaystyle	\frac{\pd}{\pd t_k^+} \tau_n, & k>0,\\
	-kt_{-k}^+ \cdot \tau_n, & k<0.
\end{cases}
\end{gather}
Moreover,
since $\alpha_n$ and $\alpha_{-n}$ are adjoint to each other with respect to
the inner product $(\cdot, \cdot)$ on~$\cF$,
one has
\begin{gather}
\label{eq-bf-bosonR}
\big\langle n\big| \Gamma_+(\bm t^+) A \alpha_k \Gamma_- (\bm t^-) \big|n\big\rangle
= \begin{cases}
	kt_{k}^- \cdot \tau_n, & k>0,\\
\displaystyle	\frac{\pd}{\pd t_{-k}^-} \tau_n, & k<0.
\end{cases}
\end{gather}
The function $\tau_n(\bm t^+,\bm t^-)$ is a tau-function of the 2d Toda lattice hierarchy if
\begin{equation*}
	\bigg[A\otimes A , \sum_{s\in \bZ+\half} \psi_s^* \otimes \psi_{-s}\bigg] =0.
\end{equation*}
This relation is equivalent to the Hirota bilinear relation~\cite[formula (1.3.26)]{ut} of
the 2d Toda lattice hierarchy.

Let $\tau_n(\bm t^+ ,\bm t^-)$ be a tau-function of the 2d Toda lattice hierarchy,
then for every fixed integer~$n$ and time $\bm t^+$,
the function $\tau_n(\bm t^+ ,\bm t^-)$ is a tau-function of the KP hierarchy with KP-time variables~$\bm t^-$;
and similarly for fixed integer $n$ and time $\bm t^-$,
$\tau_n(\bm t^+ ,\bm t^-)$ is a tau-function of the KP hierarchy with KP-time variables $\bm t^+$.

\section[Bosonic (n,m)-point functions of diagonal tau-functions]{Bosonic $\boldsymbol{(n,m)}$-point functions of diagonal tau-functions}
\label{sec-disconn}

In this paper,
we will focus on the so-called diagonal tau-function of the 2d Toda lattice hierarchy,
since many interesting examples of tau-functions coming from algebraic geometry are of this form.
Let
\begin{gather}
\label{eq-def-functionf}
f\colon \ \bZ +\half \to \bC
\end{gather}
be a function defined on the set of half-integers,
and denote by $\hf$ the following element in the infinite-dimensional
Lie algebra $\widehat{\mathfrak{gl} (\infty)}$:
\begin{gather}
\label{eq-def-hatf}
\hf = \sum_{s\in \bZ+\half} f(s) {:}\psi_s \psi_{-s}^*{:}
= \sum_{s<0} f(s) \psi_s \psi_{-s}^* - \sum_{s>0} f(s)\psi_{-s}^*\psi_s.
\end{gather}
A diagonal tau-function is of the following form:
\begin{gather}
\label{eq-def-tauf}
\tau_f = \big\langle 0\big| \Gamma_+(\bm t^+) \exp\big(\hf\big) \Gamma_-(\bm t^-) \big|0\big\rangle.
\end{gather}

In this section,
we first compute the fermionic $(2n,2m)$-point functions associated to $\tau_f$,
and then use this result and the boson-fermion correspondence to
compute the (disconnected) bosonic $(n,m)$-point functions.

\begin{Remark}
	The results for the tau-functions
	\begin{equation*}
		\tau_{n,f} = \big\langle n\big| \Gamma_+(\bm t^+) \exp\big(\hf\big) \Gamma_-(\bm t^-) \big|n\big\rangle
	\end{equation*}
	can be obtained by simply shifting the arguments of the function $f$ by $n$.
\end{Remark}

\subsection{Some basic computations}

Let $\hf$ be the operator~\eqref{eq-def-hatf}.
In this subsection,
we discuss some properties of $\hf$ which will be useful in the following subsections.

First,
it is clear that
\begin{gather}
\label{eq-vacuum-expf}
\exp\big(\hf\big) |0\rangle = |0\rangle,
\qquad
\langle 0| \exp(-\hf) = \langle 0|.
\end{gather}
Moreover,
using the anti-commutation relations~\eqref{eq-ferm-anticomm} we may easily check that
\begin{equation*}
	\big[\hf , \psi_r\big] = f(r)\psi_r,
	\qquad
	\big[\hf , \psi_r^*\big] = -f(-r) \psi_r^*,
	\qquad \forall r\in \bZ+\half.
\end{equation*}
Then by the Baker--Campbell--Hausdorff formula, we have
\begin{gather}
	{\rm e}^{-\hf} \psi_r {\rm e}^\hf =
	\psi_r - \big[\hf, \psi_r\big] + \frac{1}{2!} \big[\hf,\big[\hf,\psi_r\big]\big] -\cdots
	= {\rm e}^{-f(r)} \psi_r, \nonumber\\
	{\rm e}^{-\hf} \psi_r^* {\rm e}^\hf =
	\psi_r - \big[\hf, \psi_r^*\big] + \frac{1}{2!} \big[\hf,\big[\hf,\psi_r^*\big]\big] -\cdots
	= {\rm e}^{f(-r)} \psi_r^*. \label{eq-conj-f-psi}
\end{gather}
We will denote
\begin{gather}
	\psi_f (z) = {\rm e}^{-\hf} \psi(z) {\rm e}^\hf = \sum_r {\rm e}^{-f(r)} \psi_r z^{-r-\half},\nonumber\\
	\psi_f^* (z) = {\rm e}^{-\hf} \psi^*(z) {\rm e}^\hf = \sum_r {\rm e}^{f(-r)} \psi_r^* z^{-r-\half}.
\label{eq-conjhf-psi}
\end{gather}
Denote by $A_f(z,w)$ the following series:
\begin{gather*}
A_f(z,w)= \sum_{k=0}^\infty {\rm e}^{f(k+\half)} z^{-k-1}w^k,
\end{gather*}
then we have
\begin{gather}
	\langle \psi_f (z) \psi^*(w) \rangle
	= \sum_{k=0}^\infty {\rm e}^{-f(k+\half)} z^{-k-1}w^k = A_{-f}(z,w),\nonumber\\
	 \langle \psi_f^* (z) \psi(w) \rangle
	= \sum_{k=0}^\infty {\rm e}^{f(-k-\half)} z^{-k-1}w^k =A_{f(-\cdot)}(z,w),
\label{eq-expvalue-psif-1}
\end{gather}
where the notation $f(-\cdot)$ means the function $r\mapsto f(-r)$ for $r\in \bZ+\half$.
Moreover,
\begin{gather}
\label{eq-expvalue-psif-2}
\langle \psi_f(z) \psi_f^*(w) \rangle
= \langle \psi_f^*(z) \psi_f(w) \rangle
= i_{z,w} \frac{1}{z-w} = A_0(z,w).
\end{gather}

\subsection[Computation of fermionic (2n,2m)-point functions]{Computation of fermionic $\boldsymbol{(2n,2m)}$-point functions}

In this subsection we compute the
fermionic $(2n,2m)$-point functions associated to $\tau_f$,
where $\tau_f$ is the tau-function~\eqref{eq-def-tauf} of the 2d Toda lattice hierarchy.

The fermionic $(2n,2m)$-point function associated to $\tau_f$
is defined to be
\begin{gather}
\label{eq-def-2n2m-ferm}
\big\langle \psi(z_1)\psi^*(w_1) \cdots \psi(z_n)\psi^*(w_n) {\rm e}^\hf
\psi(z_{n+1})\psi^*(w_{n+1}) \cdots \psi(z_{n+m})\psi^*(w_{n+m}) \big\rangle.
\end{gather}
By~\eqref{eq-vacuum-expf}, we can rewrite~\eqref{eq-def-2n2m-ferm} as
\begin{gather}
\label{eq-def-2n2m-ferm-2}
\langle \psi_f(z_1)\psi_f^*(w_1) \cdots \psi_f(z_n)\psi_f^*(w_n)
\psi(z_{n+1})\psi^*(w_{n+1}) \cdots \psi(z_{n+m})\psi^*(w_{n+m}) \rangle.
\end{gather}
For simplicity, we denote
\begin{alignat*}{4}
		&\varphi_i = \psi_f(z_i), \qquad &&
		\varphi_i^* = \psi_f^*(w_i) \qquad &&
		\text{for} \ 1\leq i\leq n, &\\
		&\varphi_j = \psi (z_j), \qquad &&
		\varphi_j^* = \psi^* (w_j) \qquad &&
		\text{for} \ n+1\leq j\leq n+m. &
\end{alignat*}
Notice that $\psi_f(z)$, $\psi_f^*(w)$ are of the form~\eqref{eq-conjhf-psi},
thus we may apply Wick's theorem (see, e.g.,~\cite[Section~4.5]{djm})
to~\eqref{eq-def-2n2m-ferm-2}.
Since $\langle \varphi_i\varphi_j \rangle = \langle \varphi_i^*\varphi_j^* \rangle =0$,
we rewrite~\eqref{eq-def-2n2m-ferm-2} as
\begin{gather}
\label{eq-fermnpt-wicksum}
\langle \varphi_1 \varphi_1^* \cdots \varphi_{n+m}\varphi_{n+m}^* \rangle
= \sum_{\bm p} \operatorname{sgn}(\bm p)\cdot C(\bm p)_1 C(\bm p)_2 \cdots C(\bm p)_{n+m},
\end{gather}
where the sequence $\bm p = (p_1,\dots,p_{n+m})$ runs over permutations of $(1,2,\dots,n+m)$,
and $\operatorname{sgn}(\bm p) = \pm 1$ is the sign of this permutation,
and $C(\bm p)_i$ is given by
\[
C(\bm p)_i = \begin{cases}
	\langle \varphi_i \varphi_{p_i}^* \rangle & \text{if} \ p_i\geq i,\\
	- \langle \varphi_{p_i}^* \varphi_i \rangle & \text{if} \ p_i< i.
\end{cases}
\]
Then by plugging~\eqref{eq-expvalue-psif-1} and~\eqref{eq-expvalue-psif-2} into
the above definition of $C(\bm p)_i$,
we may easily see that the right-hand side of~\eqref{eq-fermnpt-wicksum} is a determinant.
\begin{Theorem}
	The fermionic $(2n,2m)$-point function is given by
	\begin{equation*}
			\big\langle \psi(z_1)\psi^*(w_1) \cdots \psi(z_n)\psi^*(w_n) {\rm e}^\hf
			\psi(z_{n+1})\psi^*(w_{n+1}) \cdots \psi(z_{n+m})\psi^*(w_{n+m}) \big\rangle
			= \det(C_{i,j}),
	\end{equation*}
	where $(C_{i,j})$ is the following $(n+m)\times(n+m)$ matrix:
	\begin{equation*}
		C_{i,j} = \begin{cases}
			\langle \varphi_i\varphi_j^* \rangle, & i\leq j,\\
			-\langle \varphi_j^* \varphi_i \rangle, & i>j,
		\end{cases}
	\end{equation*}
	or more precisely,
	\begin{equation*}
		C_{i,j} = \begin{cases}
			A_0 (z_i,w_j) & \text{if} \ i\leq j\leq n \ \text{or} \ n<i\leq j,\\
			A_{-f} (z_i,w_j) & \text{if} \ i\leq n<j,\\
			-A_0(w_j,z_i) & \text{if} \ j<i\leq n \ \text{or} \ n<j< i,\\
			-A_{f(-\cdot)}(w_j,z_i) & \text{if} \ j\leq n<i.
		\end{cases}
	\end{equation*}
\end{Theorem}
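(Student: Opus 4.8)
The plan is to read the determinant directly off the Wick expansion \eqref{eq-fermnpt-wicksum}. That expansion is already available once \eqref{eq-def-2n2m-ferm} has been rewritten in the form \eqref{eq-def-2n2m-ferm-2} via $\langle 0|=\langle 0|\exp(-\hf)$ and the conjugation formulas \eqref{eq-conjhf-psi}: each unstarred operator $\varphi_i$ (either $\psi_f(z_i)$ or $\psi(z_i)$) is a linear combination of the $\psi_r$, and each starred $\varphi_i^*$ is a linear combination of the $\psi_r^*$, so Wick's theorem applies and writes the vacuum expectation value as a signed sum over perfect matchings of the $2(n+m)$ operators.

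First I would discard the matchings that vanish. Since $\langle\varphi_i\varphi_j\rangle=\langle\varphi_i^*\varphi_j^*\rangle=0$---a product of two $\psi$-type (or two $\psi^*$-type) operators sends $|0\rangle$ into a subspace of charge $\mp 2$, orthogonal to $\langle 0|$---only contractions pairing an unstarred $\varphi_i$ with a starred $\varphi_j^*$ survive. Every contributing matching is therefore encoded by a permutation $\bm p\in S_{n+m}$ with $\varphi_i$ paired to $\varphi^*_{p_i}$, and its contraction value is exactly $C(\bm p)_i$, the two cases $p_i\geq i$ and $p_i<i$ recording which operator stands to the left in the product as written.

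The decisive step is then purely combinatorial: defining $C_{i,j}$ to be $C(\bm p)_i$ with $p_i=j$ turns $\prod_i C(\bm p)_i$ into $\prod_i C_{i,p_i}$, so \eqref{eq-fermnpt-wicksum} becomes $\sum_{\bm p}\operatorname{sgn}(\bm p)\prod_i C_{i,p_i}=\det(C_{i,j})$, the Leibniz formula. To finish I would substitute the explicit pairings: $C_{i,j}=\langle\varphi_i\varphi_j^*\rangle$ for $i\leq j$ and $C_{i,j}=-\langle\varphi_j^*\varphi_i\rangle$ for $i>j$, reading off $A_0$, $A_{-f}$ or $A_{f(-\cdot)}$ from \eqref{eq-expvalue-psif-1} and \eqref{eq-expvalue-psif-2} according to whether each index is $\leq n$ (the $\psi_f$-dressed operator) or $>n$ (the plain operator); the four displayed cases are precisely the four combinations of $i,j$ relative to $n$ crossed with the sign of $i-j$.

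The part I expect to need the most care is the sign bookkeeping in Wick's theorem: one must confirm that the reordering sign attached to a matching agrees with $\operatorname{sgn}(\bm p)$, and that the ordered contraction $C_{i,j}$---carrying a minus sign and the transposed arguments $A_0(w_j,z_i)$ when $i>j$---is the correct value uniformly across all four index blocks. Granting this, the determinant formula is immediate.
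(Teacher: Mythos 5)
Your proposal is correct and follows essentially the same route as the paper: rewrite the correlator via $\langle 0|=\langle 0|{\rm e}^{-\hf}$ and the conjugation formulas~\eqref{eq-conjhf-psi} to get~\eqref{eq-def-2n2m-ferm-2}, apply Wick's theorem, drop the $\psi\psi$ and $\psi^*\psi^*$ contractions, recognize the permutation sum~\eqref{eq-fermnpt-wicksum} as the Leibniz expansion of $\det(C_{i,j})$, and evaluate the entries via~\eqref{eq-expvalue-psif-1} and~\eqref{eq-expvalue-psif-2}. The only addition you make is the explicit charge argument for the vanishing contractions and the flagging of the Wick sign conventions, both of which the paper leaves implicit by citing Wick's theorem.
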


\subsection[Computation of bosonic (n,m)-point functions]{Computation of bosonic $\boldsymbol{(n,m)}$-point functions}

In this section, we compute the following bosonic $(n,m)$-point function
associated to the tau-function $\tau_f$:
\begin{gather}
\label{eq-def-2n2m-bos}
\big\langle \alpha(z_1)\cdots \alpha(z_n) {\rm e}^\hf
\alpha(z_{n+1}) \cdots \alpha(z_{n+m}) \big\rangle,
\end{gather}
where $\alpha(z)$ is the generating series~\eqref{eq-gen-bos} of bosons.

First, we need to compute the normally ordered fermionic $(2n,2m)$-point function
using the main result in last subsection.
We have
\begin{Proposition}
	The normally ordered fermionic $(2n,2m)$-point function
	\begin{align}
		\big\langle& {:}\psi(z_1)\psi^*(w_1){:} {:}\psi(z_2)\psi^*(w_2){:} \cdots {:}\psi(z_n)\psi^*(w_n){:} {\rm e}^\hf \nonumber\\
		& {:}\psi(z_{n+1})\psi^*(w_{n+1}){:} \cdots {:}\psi(z_{n+m})\psi^*(w_{n+m}){:} \big\rangle \label{eq-ferm2n2mpt-norm}
	\end{align}
	equals to the determinant $\det\big(\tC_{i,j}\big)$,
	where $\big(\tC_{i,j}\big)$ is the $(n+m)\times(n+m)$ matrix
	\[
	\tC_{i,j} = \begin{cases}
		0 & \text{if} \ i=j,\\
		C_{i,j} & \text{if} \ i \not= j.
	\end{cases}
	\]
\end{Proposition}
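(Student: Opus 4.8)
The plan is to treat the preceding Theorem (the formula $\det(C_{i,j})$ for the non-normally ordered fermionic $(2n,2m)$-point function) as a black box and extract the normally ordered version by a multilinear expansion. The starting point is the operator identity obtained from the given operator product expansion $\psi(\xi)\psi^*(\eta)={:}\psi(\xi)\psi^*(\eta){:}+\frac{1}{\xi-\eta}$, namely
\[
{:}\psi(z_i)\psi^*(w_i){:}=\psi(z_i)\psi^*(w_i)-A_0(z_i,w_i),
\]
valid as an identity of operators on $\cF$. The key observation is that the subtracted scalar is exactly a diagonal entry of the matrix from the preceding Theorem: by the displayed value $\langle\psi(z)\psi^*(w)\rangle=A_0(z,w)$ together with~\eqref{eq-expvalue-psif-2}, one has $A_0(z_i,w_i)=C_{i,i}$ for every $i$ (for $i\le n$ the self-contraction of the pair is $\langle\psi_f(z_i)\psi_f^*(w_i)\rangle=A_0(z_i,w_i)$, and for $i>n$ it is $\langle\psi(z_i)\psi^*(w_i)\rangle=A_0(z_i,w_i)$). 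Since each $A_0(z_i,w_i)$ is a $c$-number it factors out of any vacuum expectation.

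Next I would expand the product defining~\eqref{eq-ferm2n2mpt-norm} by multilinearity. Writing $P_i=\psi(z_i)\psi^*(w_i)$ and $c_i=C_{i,i}$, and substituting ${:}P_i{:}=P_i-c_i$ into every factor on both sides of ${\rm e}^\hf$, one gets
\[
\eqref{eq-ferm2n2mpt-norm}=\sum_{S\subseteq\{1,\dots,n+m\}}\Bigl(\prod_{i\in S}(-c_i)\Bigr)\cdot\Bigl\langle\prod_{i\in[n]\setminus S}P_i\;{\rm e}^\hf\!\!\prod_{j\in\{n+1,\dots,n+m\}\setminus S}\!\!P_j\Bigr\rangle,
\]
where the relative order of the surviving $P_i$ is preserved. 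Here there is no fermionic sign subtlety, because each $P_i$ is of even parity, so deleting it (replacing it by the scalar $-c_i$) introduces no sign. Each remaining vacuum expectation is again a fermionic point function of exactly the shape covered by the preceding Theorem, with the surviving pairs of index $\le n$ conjugated by ${\rm e}^\hf$ and the rest plain, now for the reduced index set $S^c$. Hence by that Theorem it equals the principal minor $\det\bigl(C[S^c]\bigr)$, the determinant of the submatrix of $(C_{i,j})$ on rows and columns indexed by $S^c$.

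Finally I would invoke the standard multilinearity-in-the-diagonal identity
\[
\det\bigl(C-\mathrm{diag}(c_1,\dots,c_{n+m})\bigr)=\sum_{S\subseteq\{1,\dots,n+m\}}\Bigl(\prod_{i\in S}(-c_i)\Bigr)\det\bigl(C[S^c]\bigr),
\]
proved by expanding the determinant on the left row by row and grouping the terms according to the set $S$ of rows in which the diagonal correction $-c_i e_i$ is selected. Comparing with the expansion above and using $c_i=C_{i,i}$, the matrix $C-\mathrm{diag}(c_i)$ is precisely $(\tC_{i,j})$, which has vanishing diagonal and off-diagonal entries $C_{i,j}$, and the Proposition follows. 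The only real content lies in the preceding Theorem; the expected main obstacle is purely bookkeeping, namely verifying that the removed scalars land exactly on the diagonal (so that $A_0(z_i,w_i)=C_{i,i}$ uniformly across the $f$-conjugated and plain blocks) and that the subset expansion matches the determinant identity termwise.
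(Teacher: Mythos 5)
Your proposal is correct and takes essentially the same route as the paper: the paper likewise writes each ${:}\psi(z_i)\psi^*(w_i){:}$ as $\psi(z_i)\psi^*(w_i)-A_0(z_i,w_i)$, expands multilinearly into a sum over the subsets of indices where the scalar $-A_0(z_l,w_l)$ is chosen, and identifies the total with $\det\big(\widetilde{C}_{i,j}\big)$ using $C_{i,i}=A_0(z_i,w_i)$. The only difference is presentational: you invoke the preceding Theorem as a black box on each principal sub-correlator and reassemble via the explicit diagonal-multilinearity determinant identity, whereas the paper re-applies Wick's theorem to each surviving correlator and compares with its earlier Wick-sum formula --- the same bookkeeping, which your write-up actually makes more explicit.
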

\begin{proof}
	Recall that we have
	\begin{equation*}
		{:}\psi(z)\psi^*(w){:} = \psi(z)\psi^*(w) - \frac{1}{z-w}
	\end{equation*}
	and $\langle 0|=\langle 0| {\rm e}^{-\hf}$,
	thus~\eqref{eq-ferm2n2mpt-norm} equals to
	\begin{gather}
		 \bigg\langle
		\bigg( \psi(z_1)\psi^*(w_1) - \frac{1}{z_1-w_1} \bigg) \cdots
		\bigg( \psi(z_n)\psi^*(w_n) - \frac{1}{z_n-w_n} \bigg) {\rm e}^\hf
\nonumber\\
		 \qquad\quad\times \bigg( \psi(z_{n+1})\psi^*(w_{n+1}) - \frac{1}{z_{n+1}-w_{n+1}} \bigg) \cdots
		\bigg( \psi(z_{n+m})\psi^*(w_{n+m}) - \frac{1}{z_{n+m}-w_{n+m}}\bigg) \bigg\rangle
\nonumber\\
		\qquad=
		\bigg\langle
		\bigg( \psi_f(z_1)\psi_f^*(w_1) - \frac{1}{z_1-w_1} \bigg) \cdots
		\bigg( \psi_f(z_n)\psi_f^*(w_n) - \frac{1}{z_n-w_n} \bigg)
\nonumber\\
		 \qquad\quad\times \bigg( \psi(z_{n+1})\psi^*(w_{n+1}) - \frac{1}{z_{n+1}-w_{n+1}} \bigg) \cdots
		\bigg( \psi(z_{n+m})\psi^*(w_{n+m}) - \frac{1}{z_{n+m}-w_{n+m}} \bigg) \bigg\rangle
\nonumber\\
		\qquad= \sum_{K_1\sqcup L_1 = [n], K_2\sqcup L_2 =[m]}
		\bigg( \prod_{l\in L_1} f_l \bigg) \bigg( \prod_{l\in L_2} f_l' \bigg)
		\langle \psi_{K_1}\psi_{K_2}' \rangle, \label{eq-disconn-norm-pf}
	\end{gather}
	where $[n]$ denotes the set $\{1,2,\dots,n\}$, and
	\begin{equation*}
		f_l = -\frac{1}{z_l - w_l} = -A_0(z_l,w_l),
		\qquad
		f_l' = -\frac{1}{z_{n+l}-w_{n+l}} =-A_0(z_{n+l},w_{n+l}),
	\end{equation*}
	and for a set of indices $K=\{k_1,k_2,\dots,k_s\}$ with $k_1<k_2<\cdots <k_s$,
	\begin{gather*}
			\psi_K = \psi_f(z_{k_1})\psi_f^*(w_{k_1})\psi_f(z_{k_2})\psi_f^*(w_{k_2})\cdots \psi_f(z_{k_s})\psi_f^*(w_{k_s}),\\
			\psi_K' = \psi(z_{n+k_1})\psi^*(w_{n+k_1})\psi(z_{n+k_2})\psi^*(w_{n+k_2})\cdots \psi(z_{n+k_s})\psi^*(w_{n+k_s}).
	\end{gather*}
	Now we apply Wick's theorem to $\langle\psi_{K_1}\psi_{K_2}'\rangle$,
	and then compare the resulting summation with formula~\eqref{eq-fermnpt-wicksum}.
	In this way we easily see that~\eqref{eq-disconn-norm-pf} equals to
	determinant $\det (\tC_{i,j})$,
	where the $(n+m)\times (n+m)$ matrix $\big(\tC_{i,j}\big)$ should be
	\begin{equation*}
		\tC_{i,j} =
		C_{i,j} - \delta_{i,j}\cdot\frac{1}{z_i - w_i},
		\qquad 1\leq i\leq n+m.
	\end{equation*}
	Then the conclusion holds because $C_{i,i} = A_0 (z_i,w_i) = \frac{1}{z_i-w_i}$.
\end{proof}

Now recall that $\alpha(z) = {:}\psi(z)\psi^*(z){:}$.
Thus by taking $w_i \to z_i$ for every $i$ in the above proposition,
we obtain
\begin{Theorem}
	\label{thm-disconn-bos-det}
	The bosonic $(n,m)$-point function is given by
	\[
	\langle \alpha(z_1)\cdots \alpha(z_n) {\rm e}^\hf
	\alpha(z_{n+1}) \cdots \alpha(z_{n+m}) \rangle
	= \det (B_{i,j}),
	\]
	where $(B_{i,j})$ is the following $(n+m)\times (n+m)$ matrix:
	\begin{gather}
	\label{eq-matrixcoeff-bij}
	B_{i,j} = \begin{cases}
		A_0 (z_i,z_j) & \text{if} \ i< j\leq n \ \text{or} \ n<i< j,\\
		A_{-f} (z_i,z_j) & \text{if} \ i\leq n<j,\\
		-A_0(z_j,z_i) & \text{if} \ j<i\leq n \ \text{or} \ n<j< i,\\
		-A_{f(-\cdot)}(z_j,z_i) & \text{if} \ j\leq n<i,\\
		0 & \text{if} \ i=j.
	\end{cases}
	\end{gather}
\end{Theorem}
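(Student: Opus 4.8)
The plan is to obtain this statement as a direct corollary of the preceding Proposition, using the fundamental identity $\alpha(z) = {:}\psi(z)\psi^*(z){:}$ which realizes the bosonic current as the coincident-point value of the normal-ordered bilinear ${:}\psi(z)\psi^*(w){:}$. In other words, the bosonic $(n,m)$-point function \eqref{eq-def-2n2m-bos} is precisely the specialization $w_i = z_i$ (for all $i$) of the normally ordered fermionic $(2n,2m)$-point function~\eqref{eq-ferm2n2mpt-norm} already evaluated in the Proposition.

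First I would make the specialization legitimate. Recall that ${:}\psi(z)\psi^*(w){:} = \psi(z)\psi^*(w) - i_{z,w}\frac{1}{z-w}$, so that the normal ordering has already subtracted the short-distance singularity $\frac{1}{z-w}$; consequently each factor ${:}\psi(z_i)\psi^*(w_i){:}$ extends regularly to $w_i = z_i$, where its value is exactly $\alpha(z_i)$. Thus the operator product appearing in~\eqref{eq-ferm2n2mpt-norm}, with ${\rm e}^\hf$ inserted after the $n$-th factor, specializes at $w_i = z_i$ to $\alpha(z_1)\cdots\alpha(z_n)\,{\rm e}^\hf\,\alpha(z_{n+1})\cdots\alpha(z_{n+m})$, and taking vacuum expectation produces the left-hand side of the claimed formula. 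Applying the Proposition then identifies the expectation with $\det(\tC_{i,j})$, where $\tC_{i,i} = 0$ and $\tC_{i,j} = C_{i,j}$ for $i\neq j$, and it remains only to carry the substitution $w_j = z_j$ through this determinant.

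Since $\det(\tC_{i,j})$ is a finite polynomial in its entries, and each off-diagonal entry $C_{i,j}$ depends only on $z_i$ and $w_j$ (never on $w_i$), the substitution may be performed entry-by-entry. For $i<j\le n$ or $n<i<j$ it carries $A_0(z_i,w_j)$ to $A_0(z_i,z_j)$; for $i\le n<j$ it carries $A_{-f}(z_i,w_j)$ to $A_{-f}(z_i,z_j)$; for $j<i\le n$ or $n<j<i$ it carries $-A_0(w_j,z_i)$ to $-A_0(z_j,z_i)$; and for $j\le n<i$ it carries $-A_{f(-\cdot)}(w_j,z_i)$ to $-A_{f(-\cdot)}(z_j,z_i)$, while the diagonal entries stay equal to $0$. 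Comparing with \eqref{eq-matrixcoeff-bij}, the specialized matrix is exactly $(B_{i,j})$, so $\det(\tC_{i,j})$ becomes $\det(B_{i,j})$, as asserted. I do not expect a genuine obstacle; the only point requiring care is the validity of the coincident-point limit, which is guaranteed precisely because normal ordering replaces the divergent would-be diagonal value $A_0(z_i,z_i)$ by the finite value $0$. One should merely fix an expansion region such as $|z_1|>\cdots>|z_{n+m}|$ so that each geometric series $A_{\bullet}(z_i,z_j)$ converges, after which the entry-wise substitution is unambiguous.
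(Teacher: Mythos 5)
Your proposal is correct and is exactly the paper's argument: the paper likewise obtains this theorem from the preceding Proposition by setting $w_i \to z_i$ and using $\alpha(z) = {:}\psi(z)\psi^*(z){:}$. Your write-up merely fills in the details the paper leaves implicit (regularity of the normal-ordered product at coincident points, entry-wise substitution since $\tC_{i,j}$ depends only on $z_i$ and $w_j$, and the vanishing diagonal), all of which are sound.
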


\begin{Remark}Notice that the notation $B_{i,j}$ depends on the choice of $(n,m)$.
\end{Remark}

\begin{Remark}
	\label{rmk-disconn-1&2}
	In particular,
	for $(n,m) = (1,0)$ or $(0,1)$, one easily sees that
	\begin{gather}
	\label{eq-disconn-1001=0}
	\big\langle \alpha(z_1) {\rm e}^\hf \big\rangle =
	\big\langle {\rm e}^\hf \alpha(z_1) \big\rangle = 0.
	\end{gather}
	And for $(n,m) = (2,0)$ or $(0,2)$,
	we have
	\begin{gather*}
	\big\langle \alpha(z_1)\alpha(z_2) {\rm e}^\hf \big\rangle =
	\big\langle {\rm e}^\hf \alpha(z_1)\alpha(z_2) \big\rangle = i_{z_1,z_2} \frac{1}{(z_1-z_2)^2},
	\end{gather*}
	where
	\begin{equation*}
		i_{z_1,z_2}\frac{1}{(z_1-z_2)^2} = \sum_{n=1}^\infty nz_1^{-n-1} z_2^{n-1}.
	\end{equation*}
\end{Remark}

\section[Computation of connected bosonic (n,m)-point functions]{Computation of connected bosonic $\boldsymbol{(n,m)}$-point functions}
\label{sec-conn-nmpt}

In this subsection, we compute the connected bosonic $(n,m)$-point functions
associated to the tau-function $\tau_f$ of the form~\eqref{eq-def-tauf}.
We also show that the connected bosonic $(n,m)$-point functions are the
generating series of coefficients of the free energy $\log \tau_f (\bm t^+,\bm t^-)$.

\subsection[Connected bosonic (n,m)-point functions]{Connected bosonic $\boldsymbol{(n,m)}$-point functions}

In this subsection,
we introduce the notion of connected bosonic $(n,m)$-point functions.

Let $\tau_f$ be the tau-function defined by~\eqref{eq-def-tauf}.
The connected bosonic $(n,m)$-point functions
$\langle \alpha(z_1) \cdots \alpha(z_n) \alpha(z_{n+1}) \cdots \alpha(z_{n+m}) \rangle_{f;n,m}^c$
associated to $\tau_f$ are defined by the following M\"obius inversion formulas:
\begin{gather*}
		\langle \alpha(z_{[n+m]}) \rangle_{f;m,n}
		= \sum_{I_1\sqcup \cdots \sqcup I_k =[n+m]} \frac{1}{k!}
		\langle \alpha(z_{I_1}) \rangle_{f;n_1,m_1}^c \cdots \langle \alpha(z_{I_k}) \rangle_{f;n_k,m_k}^c,\\
		\langle \alpha(z_{[n+m]}) \rangle_{f;m,n}^c
		=\sum_{I_1\sqcup \cdots \sqcup I_k =[n+m]} \frac{(-1)^{k-1}}{k}
		\langle \alpha(z_{I_1}) \rangle_{f;n_1,m_1} \cdots \langle \alpha(z_{I_k}) \rangle_{f;n_k,m_k},
\end{gather*}
where $[n+m]=\{1,2,\dots,n+m\}$,
and $I_1,\dots,I_k \subset [n+m]$ are nonempty subsets.
On the right-hand side, we denote
\begin{gather}
\label{eq-notation-njmj}
n_j = \big|I_j\cap [n]\big|,
\qquad
m_j = \big|I_j \backslash [n]\big|,
\qquad
1\leq j\leq k.
\end{gather}
Here $\langle \alpha(z_{I_j}) \rangle_{f;n_j,m_j}$ denotes the disconnected
bosonic $(n_j,m_j)$-point function
\begin{equation*}
	\langle \alpha(z_{I_j}) \rangle_{f;n_j,m_j}
	=\big\langle \alpha(z_{i_1}) \cdots \alpha(z_{i_{n_j}}) {\rm e}^\hf
	\alpha\big(z_{i_{n_j+1}}\big) \cdots \alpha (z_{i_{n_j+m_j}} ) \big\rangle,
\end{equation*}
where $I_j = \{i_1,\dots, i_{n_j+m_j}\}$ with $i_1< \cdots < i_{n_j}
\leq n < i_{n_j+1} <\cdots < i_{n_j+m_j}$.
In particular,
$\langle \alpha(z_{[n+m]})\rangle_{f;n,m}$ is exactly the
bosonic $(n,m)$-point function~\eqref{eq-def-2n2m-bos}.

\begin{Remark}
	The above definition of connected $(n,m)$-point functions via M\"obius inversion formulas
	is motivated by the inclusion-exclusion principle,
	see Rota~\cite{ro}.
	In the case of double Hurwitz numbers,
	the disconnected $(n,m)$-point functions count disconnected covers between Riemann surfaces
	while connected $(n,m)$-point functions count connected covers,
	see Section~\ref{sec-app-Hurwitz} for details.
\end{Remark}

\subsection[Examples for small (n,m)]{Examples for small $\boldsymbol{(n,m)}$}

In this subsection, we compute the connected bosonic $(n,m)$-point functions
for small $(n,m)$.
We represent the results in terms of the free energy
\[
F_f (\bm t^+,\bm t^-)= \log \tau_f (\bm t^+,\bm t^-).
\]

Given a pair of nonnegative integers $(n,m)$ with $(n,m)\not=(0,0)$,
we denote
\begin{gather}
	G_{f;n,m}(z_1,\dots,z_{n+m})\nonumber\\
	\qquad{}= \big\langle \Gamma_+ (\bm t^+)\alpha(z_1)\cdots \alpha(z_n) {\rm e}^\hf
	\alpha(z_{n+1})\cdots \alpha(z_{n+m}) \Gamma_-(\bm t^-)\big\rangle
	/\tau_f (\bm t^+,\bm t^- ),\label{eq-def-Gfnm}
\end{gather}
and denote by $G_{f;n,m}^c$ its connected version (obtained by M\"obius inversion)
\begin{equation*}
	G_{f;n,m}^c (z_1,\dots,z_{n+m})
	=\sum_{I_1\sqcup \cdots \sqcup I_k =[n+m]} \frac{(-1)^{k-1}}{k}
	G_{f;n_1,m_1}(z_{I_1})\cdots G_{f;n_k,m_k}(z_{I_k}),
\end{equation*}
where we use the notation~\eqref{eq-notation-njmj}.
Then the bosonic $(n,m)$-point functions can be obtained by taking $\bm t^+ = \bm t^- =0$:
\begin{gather*}
	\big\langle\alpha\big(z_{[n+m]}\big) \big\rangle_{f;m,n} =
	G_{f;n,m}(z_1,\dots,z_{n+m})|_{\bm t =0},\\
	\big\langle \alpha\big(z_{[n+m]}\big) \big\rangle_{f;m,n}^c =
	G_{f;n,m}^c(z_1,\dots,z_{n+m})|_{\bm t =0}.
\end{gather*}
Here for simplicity, we write $\bm t =0$ instead of $\bm t^+ = \bm t^- =0$.
In the rest of this subsection, we compute some examples of $G_{f;n,m}^c$ for small $(n,m)$.

\begin{Example}
	For $(n,m) = (1,0),(1,0)$,
	by~\eqref{eq-bf-bosonL} and~\eqref{eq-bf-bosonR} we have
	\begin{gather*}
			 G_{f;1,0}(z) = \frac{1}{\tau_f}
			\sum_{k=1}^\infty \bigg(
			z^{-k-1} \frac{\pd}{\pd t_k^+} + kt_k^+ z^{k-1}
			\bigg) \tau_f
			= \sum_{k=1}^\infty \bigg(
			\frac{\pd \log \tau_f}{\pd t_k^+}z^{-k-1} + kt_k^+ z^{k-1}
			\bigg) ,\\
			G_{f;0,1}(z) = \frac{1}{\tau_f}
			\sum_{k=1}^\infty \bigg(
			z^{k-1} \frac{\pd}{\pd t_k^-} + kt_k^- z^{-k-1}
			\bigg) \tau_f
			= \sum_{k=1}^\infty \bigg(
			\frac{\pd \log \tau_f}{\pd t_k^-}z^{k-1} + kt_k^- z^{-k-1}
			\bigg) .
		\end{gather*}
	By M\"obius inversion formulas, we know
	\begin{gather}
	\label{eq-Gc1001}
	G_{f;1,0}^c (z) = G_{f;1,0} (z) ,
	\qquad
	G_{f;0,1}^c (z) = G_{f;0,1} (z) ,
	\end{gather}
	and then by restricting to $\bm t^+ = \bm t^- = 0$, we get
	\begin{gather*}
			\langle \alpha(z) \rangle_{f;1,0}^c = \sum_{n=1}^\infty
			\frac{\pd F_f(\bm t^+ ,\bm t^-)}{\pd t_n^+} \bigg|_{\bm t = 0}
			\cdot z^{-n-1},\\
			\langle \alpha(z) \rangle_{f;0,1}^c = \sum_{n=1}^\infty
			\frac{\pd F_f(\bm t^+ ,\bm t^-)}{\pd t_n^-} \bigg|_{\bm t = 0}
			\cdot z^{n-1}.
		\end{gather*}
	This is equivalent to say that the connected bosonic $1$-point functions are generating series
	of the coefficients of linear terms in the free energy $F_f (\bm t^+,\bm t^-)$.
	Moreover,
	by~\eqref{eq-disconn-1001=0} we already know
	$G_{f;1,0} (z) |_{\bm t =0} = G_{f;1,0} (z) |_{\bm t =0} =0$,
	thus we obtain
	\begin{gather*}
	\frac{\pd F_f(\bm t^+ ,\bm t^-)}{\pd t_n^+} \bigg|_{\bm t = 0}=0,
	\qquad
	\frac{\pd F_f(\bm t^+ ,\bm t^-)}{\pd t_n^-} \bigg|_{\bm t = 0}=0,
	\qquad \forall n\geq 1.
	\end{gather*}

\end{Example}

\begin{Example}
	For $(n,m) = (1,1)$,
	we have
	\begin{gather*}
			G_{f;1,1}(z_1,z_2) = \frac{1}{\tau_f}
			\sum_{k,l=1}^\infty \bigg(
			z_1^{-k-1} \frac{\pd}{\pd t_k^+} + kt_k^+ z_1^{k-1}
			\bigg)
			\bigg(
			z_2^{l-1} \frac{\pd}{\pd t_l^-} + lt_l^- z_2^{-l-1}
			\bigg) \tau_f \\
\hphantom{G_{f;1,1}(z_1,z_2)}{}
			= \sum_{k,l=1}^\infty \bigg(
			\frac{\pd^2 F_f}{\pd t_k^+ \pd t_l^-} z_1^{-k-1} z_2^{l-1} +
			\frac{\pd F_f}{\pd t_k^+ }
			\frac{\pd F_f}{ \pd t_l^-} z_1^{-k-1} z_2^{l-1}+ lt_l^- \frac{\pd F_f}{\pd t_k^+ } z_1^{-k-1}z_2^{-l-1} \\
\hphantom{G_{f;1,1}(z_1,z_2)= \sum_{k,l=1}^\infty \bigg(}{}
			+ kt_k^+ \frac{\pd F_f}{\pd t_l^- } z_1^{k-1}z_2^{l-1}
			+ klt_k^+ t_l^- z_i^{k-1}z_2^{-l-1} \bigg).
		\end{gather*}
	And by M\"obius inversion,
	\begin{align}
		G_{f;1,1}^c(z_1,z_2)
		= G_{f;1,1}(z_1,z_2) - G_{f;1,0}(z_1) G_{f;0,1}(z_2)
		= \sum_{k,l=1}^\infty
		\frac{\pd^2 F_f}{\pd t_k^+ \pd t_l^-} \cdot z_1^{-k-1} z_2^{l-1},	\label{eq-Gc11}
	\end{align}
	thus by restricting to $\bm t^+ = \bm t^- =0$, we get
	\begin{equation*}
		\langle \alpha(z_1)\alpha(z_2) \rangle_{f;1,1}^c =
		\sum_{k,l=1}^\infty
		\frac{\pd^2 F_f(\bm t^+, \bm t^-)}{\pd t_k^+ \pd t_l^-}\bigg|_{\bm t=0} \cdot z_1^{-k-1} z_2^{l-1}.
	\end{equation*}
\end{Example}

\begin{Example}
	Now consider $(n,m) = (2,0)$.
We have
	\begin{align*}
			G_{f;2,0}(z_1,z_2) ={}& \frac{1}{\tau_f}
			\sum_{k,l=1}^\infty \bigg(
			z_1^{-k-1} \frac{\pd}{\pd t_k^+} + kt_k^+ z_1^{k-1}
			\bigg)
			\bigg(
			z_2^{-l-1} \frac{\pd}{\pd t_l^+} + lt_l^+ z_2^{l-1}
			\big) \tau_f\\
			={}& \sum_{k=1}^\infty kz_1^{-k-1} z_2^{k-1} +
			\sum_{k,l=1}^\infty \bigg( \bigg(
			\frac{\pd^2 F_f}{\pd t_k^+ \pd t_l^+} +
			\frac{\pd F_f}{\pd t_k^+ }
			\frac{\pd F_f}{ \pd t_l^+} \bigg) z_1^{-k-1} z_2^{-l-1} \\
			&{} +lt_l^+ \frac{\pd F_f}{\pd t_k^+ } z_1^{-k-1}z_2^{l-1}
			+ kt_k^+ \frac{\pd F_f}{\pd t_l^+ } z_1^{k-1}z_2^{-l-1}
			+ klt_k^+ t_l^+ z_i^{k-1}z_2^{l-1} \bigg),
		\end{align*}
	and
	\begin{align}
		G_{f;2,0}^c(z_1,z_2)
		&= G_{f;2,0}(z_1,z_2) - G_{f;1,0}(z_1) G_{f;1,0}(z_2)\nonumber\\
		&= i_{z_1,z_2} \frac{1}{(z_1-z_2)^2} +
		\sum_{k,l=1}^\infty
		\frac{\pd^2 F_f}{\pd t_k^+ \pd t_l^+} \cdot z_1^{-k-1} z_2^{-l-1},\label{eq-Gc20}
	\end{align}
	where
	\begin{equation*}
		i_{z_1,z_2}\frac{1}{(z_1-z_2)^2} = \sum_{n=1}^\infty nz_1^{-n-1} z_2^{n-1}.
	\end{equation*}
	Thus, we have
	\begin{equation*}
		\langle \alpha(z_1)\alpha(z_2) \rangle_{f;2,0}^c =
		i_{z_1,z_2}\frac{1}{(z_1-z_2)^2} +
		\sum_{k,l=1}^\infty
		\frac{\pd^2 F_f(\bm t^+,\bm t^-)}{\pd t_k^+ \pd t_l^+}\bigg|_{\bm t=0} \cdot z_1^{-k-1} z_2^{-l-1}.
	\end{equation*}
	Similarly,
	one can also check that for $(n,m)=(0,2)$, we have
	\begin{equation}
	\label{eq-Gc02}
	G_{f;0,2}^c (z_1,z_2) =
	i_{z_1,z_2}\frac{1}{(z_1-z_2)^2} +
	\sum_{k,l=1}^\infty
	\frac{\pd^2 F_f(\bm t^+,\bm t^-)}{\pd t_k^- \pd t_l^-} \cdot z_1^{k-1} z_2^{l-1},
	\end{equation}
	and thus
	\begin{equation*}
		\langle \alpha(z_1)\alpha(z_2) \rangle_{f;0,2}^c =
		i_{z_1,z_2}\frac{1}{(z_1-z_2)^2} +
		\sum_{k,l=1}^\infty
		\frac{\pd^2 F_f(\bm t^+,\bm t^-)}{\pd t_k^- \pd t_l^-}\bigg|_{\bm t=0} \cdot z_1^{k-1} z_2^{l-1}.
	\end{equation*}
	And again by Remark~\ref{rmk-disconn-1&2},
	we finally see
	\begin{equation*}
	\frac{\pd^2 F_f(\bm t^+,\bm t^-)}{\pd t_k^+ \pd t_l^+}\bigg|_{\bm t=0} =0,
	\qquad
	\frac{\pd^2 F_f(\bm t^+,\bm t^-)}{\pd t_k^- \pd t_l^-}\bigg|_{\bm t=0} =0,
	\qquad \forall k,l\geq 1.
	\end{equation*}	
\end{Example}

\begin{Example}
	For $(n,m)=(2,1)$,
	concrete computation tells us (here we omit the details)
	\begin{equation*}
		G_{f;2,1}^c (z_1,z_2,z_3) = \sum_{j,k,l\geq 1}
		\frac{\pd^3 F_f(\bm t^+,\bm t^-)}{ \pd t_j^+ \pd t_k^+ \pd t_l^-}
		\cdot z_1^{-j-1} z_2^{-k-1} z_3^{l-1}.
	\end{equation*}
	Similarly, the result in the case $(n,m)=(1,2)$ is
	\begin{equation*}
		G_{f;1,2}^c (z_1,z_2,z_3)= \sum_{j,k,l\geq 1}
		\frac{\pd^3 F_f(\bm t^+,\bm t^-)}{ \pd t_j^+ \pd t_k^- \pd t_l^-}
		\cdot z_1^{-j-1} z_2^{k-1} z_3^{l-1}.
	\end{equation*}
	For $(n,m)=(0,3)$ and $(3,0)$,
	the results are (here we omit the details)
	\begin{gather*}
			G_{f;3,0}^c (z_1,z_2,z_3) = \sum_{j,k,l\geq 1}
			\frac{\pd^3 F_f(\bm t^+,\bm t^-)}{ \pd t_j^+ \pd t_k^+ \pd t_l^+}
			\cdot z_1^{-j-1} z_2^{-k-1} z_3^{-l-1},\\
			G_{f;0,3}^c (z_1,z_2,z_3) = \sum_{j,k,l\geq 1}
			\frac{\pd^3 F_f(\bm t^+,\bm t^-)}{ \pd t_j^- \pd t_k^- \pd t_l^-}
			\cdot z_1^{j-1} z_2^{k-1} z_3^{l-1}.
		\end{gather*}
	And for $(n,m)=(2,2)$,
	the result is (here we omit the details)
	\begin{equation*}
		G_{f;2,2}^c (z_1,z_2,z_3,z_4) = \sum_{i,j,k,l\geq 0}
		\frac{\pd^4 F_f(\bm t^+,\bm t^-)}{ \pd t_i^+ \pd t_j^+ \pd t_k^- \pd t_l^-}
		\cdot z_1^{-i-1} z_2^{-j-1} z_3^{k-1} z_4^{l-1}.
	\end{equation*}
\end{Example}

\subsection{Relation to the free energy}

Now we present the general relation between the connected bosonic $(n,m)$-point functions
$\big\langle \alpha\big(z_{[n+m]}\big) \big\rangle_{f;m,n}^c$ and the free energy
$F_f = \log \tau_f $
associated to the tau-function $\tau_f$.
First we show that
\begin{Proposition}
	For every $(n,m)$ with $n+m\geq 3$,
	we have
	\begin{gather}	
		G_{f;n,m}^c (z_1,\dots,z_{n+m})\nonumber\\
		\qquad {}= \sum_{j_1,\dots,j_n,k_1,\dots,k_m \geq 1}
		\frac{\pd^{m+n} F_f(\bm t^+,\bm t^-)}{\pd t_{j_1}^+ \cdots \pd t_{j_n}^+ \pd t_{k_1}^- \cdots \pd t_{k_m}^- }
		\cdot
		\prod_{a=1}^n z_a^{-j_a-1} \cdot \prod_{b=1}^m z_{n+b}^{k_b-1}.\label{eq-prop-mobF}
	\end{gather}
\end{Proposition}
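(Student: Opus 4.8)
The plan is to turn each disconnected $(n,m)$-point function into a differential operator acting on $\tau_f$, to conjugate by $\tau_f=e^{F_f}$ so that everything is expressed through the free energy $F_f$, and then to recognise the defining M\"obius inversion as the classical moment--cumulant relation.

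First I would use the boson--fermion correspondence to peel the insertions $\alpha(z_i)$ off one at a time. The outermost left insertion sits immediately after $\Gamma_+(\bm t^+)$, so \eqref{eq-bf-bosonL} replaces it by the first-order operator $\mathcal D^+(z)=P^+(z)+Q^+(z)$ with
\[
P^+(z)=\sum_{k\geq 1}z^{-k-1}\frac{\pd}{\pd t_k^+},\qquad Q^+(z)=\sum_{k\geq 1}k t_k^+ z^{k-1};
\]
iterating on the remaining left insertions and applying \eqref{eq-bf-bosonR} to the right insertions (which are adjacent to $\Gamma_-$), with $P^-(z)=\sum_{k\geq1}z^{k-1}\pd/\pd t_k^-$ and $Q^-(z)=\sum_{k\geq1}k t_k^- z^{-k-1}$, yields for every subset $I\subseteq[n+m]$ a representation $M_I=\tau_f^{-1}\big(\prod_{i\in I}\mathcal D_i\big)\tau_f$, where $M_I$ denotes the disconnected function attached to $I$, each $\mathcal D_i$ is the corresponding $\mathcal D^{\pm}(z_i)$, and the product is taken in the order in which the operators are peeled (outermost first). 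The small cases already displayed are exactly the low-order instances.

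Next I would conjugate by $\tau_f=e^{F_f}$. Since $P^{\pm}$ is a derivation and $Q^{\pm}$ is multiplication, $e^{-F_f}\mathcal D_i e^{F_f}=P_i+g_i$ with $g_i:=P_iF_f+Q_i$, so that $M_I=\big(\prod_{i\in I}(P_i+g_i)\big)\cdot 1$. The commutators are under control: the $P_i$ commute, the $Q_i$ commute, mixed $+/-$ operators commute exactly, and each $[P_i,Q_j]$ is a scalar (a $1/(z_i-z_j)^2$-type series). By induction on $|I|$, peeling the outermost factor and using the Leibniz rule for the derivation it carries, I would establish
\[
M_I=\sum_{\pi\vdash I}\prod_{B\in\pi}\mu(B),\qquad \mu(B)=P_{b_1}\cdots P_{b_{s-1}}\,g_{b_s},
\]
where $\mathcal D_{b_1},\dots,\mathcal D_{b_s}$ are the operators of the block $B$ listed outermost-first; the inductive step works because applying the outermost derivation to a block contribution merely prepends one more derivation, i.e.\ produces $\mu$ of the block enlarged by that index.

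This is precisely the moment--cumulant relation, so by uniqueness of the inversion the connected function $G_{f;n,m}^c$ equals the top cumulant $\mu([n+m])$. Writing $g_{b_s}=P_{b_s}F_f+Q_{b_s}$ gives $\mu([n+m])=\big(\prod_{i}P_i\big)F_f+P_{b_1}\cdots P_{b_{s-2}}\big(P_{b_{s-1}}Q_{b_s}\big)$; the last term is a string of derivations applied to the scalar $P_{b_{s-1}}Q_{b_s}$ (which is itself zero when the two factors are of opposite $\pm$ type), and therefore vanishes whenever $s=n+m\geq 3$, leaving
\[
G_{f;n,m}^c=\Big(\prod_{a=1}^n P^+(z_a)\Big)\Big(\prod_{b=1}^m P^-(z_{n+b})\Big)F_f,
\]
which is the right-hand side of \eqref{eq-prop-mobF}. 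For $n+m=2$ the same term is the leftover $1/(z_1-z_2)^2$ appearing in \eqref{eq-Gc20} and \eqref{eq-Gc02}, explaining the hypothesis $n+m\geq 3$. The main obstacle is the combinatorial identity of the third step: tracking how the non-derivation parts $Q_i$ and the central commutators propagate through the product, and checking that they are confined to blocks of size at most $2$ so that they disappear from the top cumulant once $n+m\geq 3$.
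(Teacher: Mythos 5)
Your proposal is correct, and its computational engine is the same as the paper's (which follows Zhou's Proposition~5.1): peel each boson insertion into a first-order operator $P_i+Q_i$ via \eqref{eq-bf-bosonL} and \eqref{eq-bf-bosonR}, conjugate by ${\rm e}^{F_f}$, and run an induction whose inductive step is the Leibniz rule --- the peeled derivation either creates a singleton block or prepends itself to an existing block --- concluding by uniqueness of the M\"obius inversion. The difference is in the packaging, and it does buy something. The paper inducts on the statement of the proposition itself: its induction hypothesis is the formula for all $n'+m'\leq i$, its base case is the explicitly computed examples with $n+m\leq 3$ (whose details the paper partly omits), and at each step only the single variable $z_1$ is peeled. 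You instead prove a stronger uniform statement, the moment--cumulant identity $M_I=\sum_{\pi\vdash I}\prod_{B\in\pi}\mu(B)$ with explicit block cumulants $\mu(B)=P_{b_1}\cdots P_{b_{s-1}}g_{b_s}$, valid for every subset and every block size. This subsumes the paper's base cases (blocks of size $1$ and $2$ automatically reproduce \eqref{eq-Gc1001}, \eqref{eq-Gc11}, \eqref{eq-Gc20} and \eqref{eq-Gc02}), and it isolates exactly where the hypothesis $n+m\geq 3$ enters: the non-derivation remainder $P_{b_{s-1}}Q_{b_s}$ is $\bm t$-independent, so it survives only in blocks of size at most $2$, which is precisely where the $i_{z_1,z_2}(z_1-z_2)^{-2}$ corrections live. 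One point deserves care in a final write-up: since $P_iQ_j\neq P_jQ_i$ as formal series (they are expansions of $(z_i-z_j)^{-2}$ in different regions), the cumulant of a block containing two insertions of the same type depends on which element carries $g$; your ``outermost-first'' convention inherited from the peeling order ($z_1,\dots,z_n$ from the left, $z_{n+m},\dots,z_{n+1}$ from the right) is the correct one, and you should state it explicitly and check, as you implicitly do, that each block's internal order is the restriction of this single global order --- otherwise the size-two cumulants would come out with the wrong expansion $i_{z_j,z_i}$ instead of $i_{z_i,z_j}$.
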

\begin{proof}
	This proposition is actually a straightforward modification of Zhou~\cite[Proposition~5.1]{zhou1}.
	Here we use the same method to prove this identity.
	
	We prove this by induction on $n+m$.
	The conclusion holds in all the cases with $n+m =3$ by the examples in last subsection.
	Now assume the conclusion holds for all $(n,m)$ with $n+m \leq i$ (where $i\geq 3$),
	and consider the case with $n+m = i+1$.
	Denote
	\begin{equation}	
 \label{eq-def-tG-leq2}
	\tG_{f;n,m}^c (z_1,\dots,z_{n+m}) = G^c_{f;n,m} (z_1,\dots,z_{n+m}),
	\qquad \text{for} \ n+m\leq 2,
	\end{equation}	
 (see~\eqref{eq-Gc1001},~\eqref{eq-Gc11},~\eqref{eq-Gc20}, and~\eqref{eq-Gc02});
	and for $n+m\geq 3$,
	denote by $\tG_{f;n,m}^c (z_1,\dots,z_{n+m})$ the right-hand side of~\eqref{eq-prop-mobF}.
	By the induction hypothesis,
	we have
	\[
 \tG_{f;n,m}^c (z_1,\dots,z_{n+m}) = G_{f;n,m}^c (z_1,\dots,z_{n+m}),
	\qquad \text{for} \ n+m\leq i.
	\]	
	Now fix a pair $(n,m)$ with $n+m = i+1$.
	Without loss of generality, we may assume $n>0$
	(otherwise $m>0$ and similar arguments will work).
	Denote
	\[	
 \tilde\alpha(z)=
	\sum_{k=1}^\infty \bigg(
	z^{-k-1} \frac{\pd}{\pd t_k^+} + kt_k^+ z^{k-1}
	\bigg),
	\]	
 then by the definition~\eqref{eq-def-Gfnm} and the induction hypothesis,
	we have
	\begin{gather*}
			 G_{f;n,m} (z_1,\dots, z_{n+m})\\
			\qquad{}=
			{\rm e}^{-F_f } \tilde\alpha (z_1) \big( G_{f;n,m} (z_2,\dots, z_{n+m}) {\rm e}^{F_f} \big)\\
			\qquad{}= {\rm e}^{-F_f } \tilde\alpha (z_1) \bigg( {\rm e}^{F_f}
			\sum_{I_1\sqcup \cdots \sqcup I_k = [n+m] \backslash\{1\} } \frac{1}{k!}
			\tG_{f;n_1,m_1}^c(z_{I_i}) \cdots \tG_{f;n_k,m_k}^c(z_{I_k}) \bigg)\\
			\qquad{}= {\rm e}^{-F_f} \big( \tilde\alpha(z_1) {\rm e}^{F_f} \big) \cdot
			\sum_{I_1\sqcup \cdots \sqcup I_k = [n+m]\backslash\{1\} } \frac{1}{k!}
			\tG_{f;n_1,m_1}^c(z_{I_i}) \cdots \tG_{f;n_k,m_k}^c(z_{I_k})\\
			\quad\qquad{}+ \tilde\alpha(z_1)_+
			\sum_{I_1\sqcup \cdots \sqcup I_k = [n+m]\backslash\{1\} } \frac{1}{k!}
			\tG_{f;n_1,m_1}^c(z_{I_1}) \cdots \tG_{f;n_k,m_k}^c(z_{I_k}),
		\end{gather*}
	where
	\begin{equation*}
		\tilde\alpha(z_1)_+ =
		\sum_{k=1}^\infty
		z^{-k-1} \frac{\pd}{\pd t_k^+}.
	\end{equation*}
	Notice that by~\eqref{eq-Gc1001} and~\eqref{eq-def-tG-leq2}, we have
	\begin{equation*}
		{\rm e}^{-F_f} \alpha(z_1) {\rm e}^{F_f} = \tG_{f;1,0}^c (z_1),
	\end{equation*}
	thus the above computation gives
	\begin{gather*}
			 G_{f;n,m} (z_1,\dots, z_{n+m})\\
			\qquad{}=
			\sum_{I_1\sqcup \cdots \sqcup I_k = [n+m]\backslash\{1\} } \Bigg(
			\tG_{f;1,0}^c (z_1)
			\cdot \frac{1}{k!}
			\tG_{f;n_1,m_1}^c(z_{I_i}) \cdots \tG_{f;n_k,m_k}^c(z_{I_k})\\
			\quad\qquad{} + \frac{1}{k!} \sum_{l=1}^k
			\tG_{f;n_1,m_1}^c(z_{I_1}) \cdots
			\tG_{f;n_l+1 ,m_l}^c(z_1,z_{I_l})
			\cdots \tG_{f;n_k,m_k}^c(z_{I_k}) \Bigg).
		\end{gather*}
	Now comparing this with the M\"obius inversion formula
	\begin{equation*}
		G_{f;n,m} (z_1,\dots,z_{n+m})
		=\sum_{I_1\sqcup \cdots \sqcup I_k =[n+m]} \frac{1}{k!}
		G_{f;n_1,m_1}^c(z_{I_1})\cdots G_{f;n_k,m_k}^c(z_{I_k}),
	\end{equation*}
	and using the induction hypothesis,
	we easily see
	\begin{equation*}
		\tG_{f;n,m}^c (z_1,\dots, z_{n+m}) = G_{f;n,m}^c (z_1,\dots, z_{n+m}).
	\end{equation*}
	Thus the conclusion holds.
\end{proof}

Now by taking $\bm t^+ = \bm t^- =0$ in the above proposition
and the concrete examples presented in the previous subsection,
we obtain the following
\begin{Corollary}
	\label{cor-rel-npt-F}
	The connected bosonic $(n,m)$-point function is given by
	\begin{align*}
		 \big\langle \alpha\big(z_{[n+m]}\big) \big\rangle_{f;n,m}^c ={}&
		i_{z_1,z_2} \frac{\delta_{n,2}\delta_{m,0} }{(z_1-z_2)^2} +
		i_{z_1,z_2} \frac{ \delta_{n,0}\delta_{m,2}}{(z_1-z_2)^2} \\
		&{}+ \sum_{j_1,\dots,j_n,k_1,\dots,k_m \geq 1}
		\frac{\pd^{m+n} F_f(\bm t^+,\bm t^-)}{\pd t_{j_1}^+ \cdots \pd t_{j_n}^+ \pd t_{k_1}^- \cdots \pd t_{k_m}^- }
		\bigg|_{\bm t=0} \cdot
		\prod_{a=1}^n z_a^{-j_a-1} \cdot \prod_{b=1}^m z_{n+b}^{k_b-1}.
	\end{align*}
\end{Corollary}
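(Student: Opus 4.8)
The plan is to observe that the connected bosonic $(n,m)$-point functions are nothing but the restrictions $G_{f;n,m}^c(z_1,\dots,z_{n+m})|_{\bm t=0}$, exactly as recorded just before this corollary. Hence the statement follows by assembling the formula \eqref{eq-prop-mobF} of the preceding proposition (valid for $n+m\geq 3$) with the explicit small-case computations of the previous subsection, and then setting $\bm t^+=\bm t^-=0$. I would organize the argument by splitting on the value of $n+m$.

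For $n+m\geq 3$ the proof is immediate: the preceding proposition gives $G_{f;n,m}^c$ as the generating series of mixed derivatives of $F_f$ appearing in \eqref{eq-prop-mobF}, and restricting to $\bm t=0$ yields precisely the summation in the corollary. Here both Kronecker-delta prefactors $\delta_{n,2}\delta_{m,0}$ and $\delta_{n,0}\delta_{m,2}$ vanish identically, since each of them forces $n+m=2$; thus no singular term is present, in agreement with \eqref{eq-prop-mobF}.

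It then remains to check the finitely many cases with $n+m\leq 2$ (the case $(0,0)$ being excluded) directly against the examples. For $(n,m)=(1,0)$ and $(0,1)$ the claim is \eqref{eq-Gc1001} evaluated at $\bm t=0$, and neither delta contributes. For $(n,m)=(1,1)$ the claim is \eqref{eq-Gc11} at $\bm t=0$; again both deltas vanish and there is no $\frac{1}{(z_1-z_2)^2}$ term, consistent with the absence of such a term in \eqref{eq-Gc11}. Finally, for $(n,m)=(2,0)$ and $(0,2)$ the claims are \eqref{eq-Gc20} and \eqref{eq-Gc02} at $\bm t=0$; in each of these exactly one of the two Kronecker-delta prefactors equals $1$, and it reproduces the single term $i_{z_1,z_2}\frac{1}{(z_1-z_2)^2}$ present in those formulas.

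The only point requiring care is the bookkeeping of the singular contribution: one must verify that the term $i_{z_1,z_2}\frac{1}{(z_1-z_2)^2}$ occurs in exactly the two cases $(n,m)=(2,0)$ and $(0,2)$ and in no other case, which is precisely what the two mutually exclusive Kronecker-delta factors encode. Since all the underlying generating-series identities have already been established, there is no analytic or combinatorial obstacle beyond this matching, and the corollary follows by combining the cases.
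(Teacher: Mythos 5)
Your proposal is correct and follows exactly the paper's own (implicit) argument: the corollary is obtained by setting $\bm t^+=\bm t^-=0$ in the proposition for $n+m\geq 3$ and matching the remaining cases $n+m\leq 2$ against the explicit formulas \eqref{eq-Gc1001}, \eqref{eq-Gc11}, \eqref{eq-Gc20}, and \eqref{eq-Gc02}, with the Kronecker deltas encoding precisely the singular terms in the $(2,0)$ and $(0,2)$ cases. Your case-by-case bookkeeping simply makes explicit what the paper compresses into one sentence.
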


\subsection[A formula for the connected bosonic (n,m)-point function]{A formula for the connected bosonic $\boldsymbol{(n,m)}$-point function}

Now we are able to present our main result of this section.
First recall the following combinatorial result (see Zhou~\cite[Proposition 5.2]{zhou1})
\begin{Proposition}
	[\cite{zhou1}]
Let $\{\varphi(\xi_1,\dots,\xi_n)\}_{n\geq 1}$
	and $\{\varphi^c(\xi_1,\dots,\xi_n)\}_{n\geq 1}$ be two sequences of functions that related to each other by M\"obius inversion:
	\begin{gather*}
			\varphi(\xi_1,\dots,\xi_n) = \sum_{I_1\sqcup \cdots \sqcup I_k =[n]}
			\frac{1}{k!} \varphi^c (\xi_{I_1})\cdots \varphi^c(\xi_{I_k}),\\
			\varphi^c(\xi_1,\dots,\xi_n) = \sum_{I_1\sqcup \cdots \sqcup I_k =[n]}
			\frac{(-1)^{k-1}}{k} \varphi (\xi_{I_1})\cdots \varphi(\xi_{I_k}),
		\end{gather*}
	where $I_1,\dots,I_k$ are nonempty,
	and $\xi_I = (\xi_i)_{i\in I}$.
	If $\big\{\varphi\big(\xi_{[n]}\big)\big\}_{n\geq 1}$ are of the form
	\begin{equation*}
		\varphi(\xi_1,\dots,\xi_n) = \det(M(\xi_i,\xi_j))_{1\leq i,j\leq n},
	\end{equation*}
	for some function $M(\xi,\eta)$,
	then $\big\{\varphi^c\big(\xi_{[n]}\big)\big\}_{n\geq 1}$ are given by
	\begin{equation*}
		\varphi^c (\xi_1,\dots,\xi_n) = (-1)^{n-1} \sum_{\text{$n$-cycles}}
		\prod_{i=1}^n M\big(\xi_{\sigma(i)}, \xi_{\sigma(i+1)}\big),
	\end{equation*}
	where the summations are taken over $n$-cycles $\sigma$,
	and we denote $\sigma(n+1) = \sigma(1)$.
\end{Proposition}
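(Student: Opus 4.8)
The plan is to reduce everything to the uniqueness of Möbius inversion. Since the two displayed formulas relating $\varphi$ and $\varphi^c$ are mutually inverse transformations, it suffices to exhibit \emph{some} family of functions that reproduces $\varphi$ through the first formula (the one expressing $\varphi$ in terms of the $\varphi^c$); that family must then coincide with $\varphi^c$. I therefore take as candidate the proposed right-hand side, writing for a subset $I=\{i_1,\dots,i_\ell\}$
\[
W(\xi_I) = (-1)^{\ell-1} \sum_{\text{$\ell$-cycles}} \prod_{i=1}^\ell M\big(\xi_{\sigma(i)},\xi_{\sigma(i+1)}\big),
\]
and aim to prove $\varphi(\xi_{[n]}) = \sum_{I_1\sqcup\cdots\sqcup I_k=[n]} \frac{1}{k!}\, W(\xi_{I_1})\cdots W(\xi_{I_k})$. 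Once this identity is established, the desired formula $\varphi^c = W$ is immediate.

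The main computation is a reorganization of the Leibniz expansion of the determinant. First I would write
\[
\varphi(\xi_{[n]}) = \det\big(M(\xi_i,\xi_j)\big)_{1\le i,j\le n} = \sum_{\pi\in S_n} \operatorname{sgn}(\pi) \prod_{i=1}^n M\big(\xi_i,\xi_{\pi(i)}\big).
\]
The key structural fact is that each $\pi\in S_n$ decomposes uniquely into disjoint cycles whose supports form a set partition of $[n]$. I would accordingly group the sum by first choosing an unordered partition $\{I_1,\dots,I_k\}$ of $[n]$ and then, independently on each block $I_j$, choosing a permutation that acts as a single cycle on $I_j$.

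It then remains to match signs and products blockwise. A cycle of length $\ell$ has sign $(-1)^{\ell-1}$, so $\operatorname{sgn}(\pi)$ factors as $\prod_j (-1)^{|I_j|-1}$; and for a single cycle supported on $I_j$ the factor $\prod_{i\in I_j} M(\xi_i,\xi_{\pi(i)})$ is exactly the cyclic product $\prod_i M(\xi_{\sigma(i)},\xi_{\sigma(i+1)})$. Summing over all single cycles on each block and absorbing the sign reproduces $W(\xi_{I_j})$, so the determinant equals $\sum_{\{I_1,\dots,I_k\}} \prod_j W(\xi_{I_j})$ over unordered partitions; passing from unordered partitions to ordered disjoint unions introduces the factor $\tfrac{1}{k!}$ and yields precisely the target identity.

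The only real obstacle is the combinatorial bookkeeping in these last steps: verifying that $\operatorname{sgn}(\pi)$ factors cleanly across blocks as $\prod_j(-1)^{|I_j|-1}$, and correctly accounting for the $\tfrac{1}{k!}$ that converts the sum over unordered set partitions into the ordered disjoint unions of the Möbius formula. Since the statement is quoted from Zhou \cite{zhou1}, I would present this as a compact verification rather than reproving the Möbius inversion itself.
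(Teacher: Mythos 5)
Your proof is correct, but note that there is nothing in the paper to compare it against: the paper states this proposition with a citation to Zhou~\cite[Proposition~5.2]{zhou1} and supplies no proof of its own. Your argument --- Leibniz expansion of $\det(M(\xi_i,\xi_j))$, grouping permutations by their cycle decomposition so that the cycle supports form a set partition of $[n]$, the sign factorization $\operatorname{sgn}(\pi)=\prod_j(-1)^{|I_j|-1}$, the $1/k!$ converting unordered partitions into ordered disjoint unions, and uniqueness of the solution of the first M\"obius relation (which holds by induction on $n$, since the block $[n]$ itself isolates $\varphi^c(\xi_{[n]})$ modulo terms involving smaller sets) --- is the standard proof of this cycle-expansion lemma, and all of its steps check out.

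One convention deserves to be made explicit in your block-wise matching step. The statement's sum ``over $n$-cycles $\sigma$'' of $\prod_i M(\xi_{\sigma(i)},\xi_{\sigma(i+1)})$ must be read as a sum over the $(n-1)!$ cyclic arrangements of $[n]$, with $\sigma(1),\dots,\sigma(n)$ the cyclic word and the product taken along its arrows; this is indeed the paper's usage, as its $(n,m)=(2,1)$ example shows, where the cycles $(123)$ and $(132)$ contribute $B_{1,2}B_{2,3}B_{3,1}$ and $B_{1,3}B_{3,2}B_{2,1}$, respectively. Under that reading, cycles $\pi$ supported on a block $I_j$ correspond bijectively to cyclic words on $I_j$, and $\prod_{i\in I_j}M(\xi_i,\xi_{\pi(i)})$ is exactly the cyclic product, as you claim. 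Under the other, literal reading (where $\sigma(i)$ is the image of $i$ under the permutation $\sigma$) the identification would fail: already for $n=3$ both $3$-cycles would yield the product along $1\to2\to3\to1$ and the opposite orientation would never occur. So it is the cyclic-word convention that makes your (and Zhou's) formula correct; with that clarification your proof stands as written.
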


Combine this proposition with Theorem~\ref{thm-disconn-bos-det},
then we immediately see that the connected bosonic $(n,m)$-point functions associated to $\tau_f$
are given by
\[
\big\langle \alpha\big(z_{[n+m]}\big) \big\rangle_{f;n,m}^c
= (-1)^{n+m-1} \sum_{\text{$(n+m)$-{\rm cycles}}}
\prod_{i=1}^{n+m} B_{\sigma(i), \sigma(i+1)},
\]
where $B_{i,j}$ are given by~\eqref{eq-matrixcoeff-bij}.
Then by Corollary~\ref{cor-rel-npt-F},
we conclude that
\begin{Theorem}
	\label{thm-mainthm-conn}
Suppose $F_f (\bm t^+ , \bm t^- ) = \log \tau_f (\bm t^+ , \bm t^- )$ is the free energy associated to
	the tau-function~$\tau_f$,
	then
	\begin{gather}
		 \sum_{j_1,\dots,j_n,k_1,\dots,k_m \geq 1}
		\frac{\pd^{m+n} F_f(\bm t^+,\bm t^-)}{\pd t_{j_1}^+ \cdots
			\pd t_{j_n}^+ \pd t_{k_1}^- \cdots \pd t_{k_m}^- }
		\bigg|_{\bm t=0} \cdot
		\prod_{a=1}^n z_a^{-j_a-1} \cdot \prod_{b=1}^m z_{n+b}^{k_b-1} \nonumber\\
		\qquad\qquad{}= (-1)^{n+m-1} \sum_{\text{$(n+m)$-{\rm cycles}}}
		\prod_{i=1}^{n+m} B_{\sigma(i), \sigma(i+1)}
		- i_{z_1,z_2} \frac{\delta_{n,2}\delta_{m,0} + \delta_{n,0}\delta_{m,2} }{(z_1-z_2)^2} ,\label{eq-mainthm-sumcycles}
	\end{gather}	
 where the summations are taken over $(n+m)$-cycles $\sigma$,
	and we denote $\sigma(n+m+1) = \sigma(1)$.
	The $(n+m)\times (n+m)$ matrix $(B_{i,j})$ are given by
	\[
	B_{i,j} = \begin{cases}
		A_0 (z_i,z_j) & \text{if} \ i< j\leq n \ \text{or} \ n<i< j,\\
		A_{-f} (z_i,z_j) & \text{if} \ i\leq n<j,\\
		-A_0(z_j,z_i) & \text{if} \ j<i\leq n \ \text{or} \ n<j< i,\\
		-A_{f(-\cdot)}(z_j,z_i) & \text{if} \ j\leq n<i.
	\end{cases}
	\]
\end{Theorem}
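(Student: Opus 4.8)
The plan is to assemble Theorem~\ref{thm-mainthm-conn} from three ingredients already in place: the determinantal formula for the disconnected bosonic $(n,m)$-point functions (Theorem~\ref{thm-disconn-bos-det}), the purely combinatorial M\"obius-inversion-to-cycle-sum identity recalled just above (Zhou \cite[Proposition~5.2]{zhou1}), and the dictionary between connected $(n,m)$-point functions and the free energy $F_f$ provided by Corollary~\ref{cor-rel-npt-F}. The idea is that the first two combine to compute the connected bosonic $(n,m)$-point function as a cycle sum, and the third then rewrites that statement in terms of derivatives of $\log\tau_f$.

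First I would fix the pair $(n,m)$ and regard the variables $z_1,\dots,z_{n+m}$ together with the single matrix $(B_{i,j})$ of \eqref{eq-matrixcoeff-bij}. The crucial preliminary observation is a compatibility under restriction: for \emph{any} nonempty subset $I\subseteq[n+m]$, with $n_I=|I\cap[n]|$ and $m_I=|I\setminus[n]|$, the disconnected $(n_I,m_I)$-point function $\langle\alpha(z_I)\rangle_{f;n_I,m_I}$ equals $\det\big((B_{i,j})_{i,j\in I}\big)$. This is because each entry $B_{i,j}$ depends only on the \emph{type} of $z_i$ and $z_j$ (i.e.\ whether the index lies in $[n]$ or not) and on the relative order of $i$ and $j$; both of these data are intrinsic to the indices and are preserved when passing to a subset and relabelling its elements with those from $[n]$ listed first. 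Hence the matrix produced by Theorem~\ref{thm-disconn-bos-det} applied directly to the subset $I$ (with its own threshold $n_I$) coincides with the restriction of the global matrix $(B_{i,j})$ to $I$, which is precisely the hypothesis needed to feed the combinatorial proposition.

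With this in hand I would apply the recalled proposition, taking its function $M$ to be the entry $M(z_i,z_j)=B_{i,j}$. Since the disconnected functions $\{\langle\alpha(z_I)\rangle\}$ are determinants of submatrices of $(B_{i,j})$ and are linked to the connected functions by exactly the M\"obius inversion of the proposition, its conclusion yields
\[
\langle\alpha(z_{[n+m]})\rangle_{f;n,m}^c
=(-1)^{n+m-1}\sum_{(n+m)\text{-cycles}}\prod_{i=1}^{n+m}B_{\sigma(i),\sigma(i+1)},
\]
with the cyclic convention $\sigma(n+m+1)=\sigma(1)$. Finally I would invoke Corollary~\ref{cor-rel-npt-F}, which identifies this connected $(n,m)$-point function with the generating series of free-energy derivatives plus the anomalous term $i_{z_1,z_2}\frac{\delta_{n,2}\delta_{m,0}+\delta_{n,0}\delta_{m,2}}{(z_1-z_2)^2}$ arising from the $(2,0)$ and $(0,2)$ disconnected two-point functions computed earlier. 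Solving for the free-energy sum transfers that term to the right-hand side with a minus sign, producing exactly \eqref{eq-mainthm-sumcycles}.

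I expect the only genuinely non-routine step to be the restriction-compatibility check of the second paragraph: one must be certain that the entries $B_{i,j}$, although defined relative to the global split $(n,m)$, restrict correctly to supply the matrices of the sub-point-functions, so that a single matrix $(B_{i,j})$ serves simultaneously for all subsets in the M\"obius inversion. Once this is secured, the remainder is a direct citation of the determinantal formula and the combinatorial proposition, followed by careful bookkeeping of the $\delta$-supported two-point anomaly in the cases $(n,m)=(2,0)$ and $(0,2)$.
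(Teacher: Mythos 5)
Your proposal is correct and takes essentially the same route as the paper's proof: combine the determinantal formula of Theorem~\ref{thm-disconn-bos-det} with Zhou's cycle-sum proposition to get the connected $(n,m)$-point function, then rewrite it via Corollary~\ref{cor-rel-npt-F}, moving the $(2,0)$/$(0,2)$ anomaly to the right-hand side. The restriction-compatibility of the matrix $(B_{i,j})$ that you verify explicitly is precisely the point the paper treats as immediate (it only remarks that $B_{i,j}$ depends on $(n,m)$), so your write-up is, if anything, slightly more careful than the published one.
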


A straightforward consequence of the above formula is
\begin{Corollary}
	\label{cor-vanishing}
	One has
	\begin{equation*}
		\sum_{j_1,\dots,j_n,k_1,\dots,k_m \geq 1}
		\frac{\pd^{m+n} F_f(\bm t^+,\bm t^-)}{\pd t_{j_1}^+ \cdots
			\pd t_{j_n}^+ \pd t_{k_1}^- \cdots \pd t_{k_m}^- }
		\bigg|_{\bm t=0} =0
	\end{equation*}
	unless $j_1+ j_2 +\cdots +j_n = k_1+k_2+\cdots +k_m$.
\end{Corollary}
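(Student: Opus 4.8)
The plan is to read the vanishing off directly from the explicit formula in Theorem~\ref{thm-mainthm-conn} by exploiting a homogeneity property of its right-hand side. Concretely, I would introduce the simultaneous rescaling $z_a \mapsto \lambda z_a$ for all $a=1,\dots,n+m$ and compare how the two sides of~\eqref{eq-mainthm-sumcycles} transform, then match weighted degrees monomial by monomial.

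First I would observe that every matrix entry $B_{i,j}$ is homogeneous of degree $-1$ under this rescaling. Inspecting the four cases in the definition, each entry is (up to sign) a series whose general term is one of
\[
z_i^{-k-1} z_j^k,\qquad {\rm e}^{-f(k+\half)} z_i^{-k-1} z_j^k,\qquad z_j^{-k-1} z_i^k,\qquad {\rm e}^{f(-k-\half)} z_j^{-k-1} z_i^k,
\]
and in each of these the total degree in the two variables is $(-k-1)+k=-1$, the constant coefficients ${\rm e}^{\pm f(\cdots)}$ being irrelevant. Since the rescaling preserves the region $|z_i|>|z_j|$ governing each $i_{z,w}$-prescription, this homogeneity is compatible with the formal-series interpretation. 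Consequently, for any $(n+m)$-cycle $\sigma$ the product $\prod_{i=1}^{n+m} B_{\sigma(i),\sigma(i+1)}$ is homogeneous of degree $-(n+m)$, and the correction term $i_{z_1,z_2}(z_1-z_2)^{-2}$ (present only when $n+m=2$) is likewise homogeneous of degree $-2=-(n+m)$. Hence the entire right-hand side of~\eqref{eq-mainthm-sumcycles} is homogeneous of degree $-(n+m)$.

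Next I would compare with the left-hand side. The monomial attached to a given multi-index is $\prod_{a=1}^n z_a^{-j_a-1}\prod_{b=1}^m z_{n+b}^{k_b-1}$, which under the rescaling acquires the factor $\lambda^{d}$ with
\[
d = \sum_{a=1}^n(-j_a-1) + \sum_{b=1}^m(k_b-1) = -\sum_{a=1}^n j_a + \sum_{b=1}^m k_b - (n+m).
\]
Since the left-hand side must also be homogeneous of degree $-(n+m)$, only monomials with $d=-(n+m)$ can carry a nonzero coefficient, and this condition is precisely $\sum_a j_a = \sum_b k_b$. Extracting coefficients then gives the stated vanishing of the derivative $\frac{\pd^{m+n} F_f}{\pd t_{j_1}^+\cdots\pd t_{j_n}^+ \pd t_{k_1}^-\cdots\pd t_{k_m}^-}\big|_{\bm t=0}$ whenever $\sum_a j_a \neq \sum_b k_b$.

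I do not expect a genuine obstacle, as the argument is a direct degree count once Theorem~\ref{thm-mainthm-conn} is available; the only point deserving care is the bookkeeping of the correction term. It appears solely for $(n,m)=(2,0),(0,2)$, where the balance $\sum_a j_a=\sum_b k_b$ would force $j_1+j_2=0$ (resp.\ $k_1+k_2=0$), impossible for positive indices, so in those cases the corollary asserts that \emph{all} such second derivatives of $F_f$ vanish --- consistent with the explicit computations in the Examples of Section~\ref{sec-conn-nmpt}. Equivalently, the rescaling realizes the natural grading in which $t_j^+$ carries weight $j$ and $t_k^-$ carries weight $-k$, under which $F_f$ is homogeneous of weight zero; the surviving derivatives are exactly those of total weight zero, which is again the condition $\sum_a j_a = \sum_b k_b$.
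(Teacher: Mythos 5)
Your proof is correct and takes essentially the same approach as the paper: the paper's own proof simply notes that $A_0(z,w)$, $A_{f}(z,w)$ and $i_{z,w}\frac{1}{(z-w)^2}$ are all series of the form $\sum_{k\geq 0}c_k z^{-k-1}w^k$, i.e., homogeneous of total degree $-1$, and concludes by comparing total degrees of the variables on the two sides of~\eqref{eq-mainthm-sumcycles}. Your rescaling $z_a\mapsto\lambda z_a$ is just a formalization of that same degree count, and your observation about the $(n,m)=(2,0),(0,2)$ cases agrees with Corollary~\ref{cor-twosides}.
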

\begin{proof}
	Recall that $A_0(z,w)$, $A_f(z,w)$, and $i_{z,w}\frac{1}{(z-w)^2}$
	are all of the form
	\begin{equation*}
		\sum_{k\geq 0} c_k \cdot z^{-k-1}w^k,
	\end{equation*}
	thus the conclusion is proved by comparing the total orders of non-negative powers and negative powers
	in the right-hand side of~\eqref{eq-mainthm-sumcycles}.
\end{proof}

Furthermore, we have
\begin{Corollary}
	\label{cor-twosides}
	For every $n>0$ or $m>0$, we have
	\begin{gather*}
		 \frac{\pd^{n} F_f(\bm t^+,\bm t^-)}{\pd t_{j_1}^+ \cdots \pd t_{j_n}^+}
		\bigg|_{\bm t=0} =0,
		\qquad \forall j_1,\dots,j_n \geq 1, \\
		 \frac{\pd^{m} F_f(\bm t^+,\bm t^-)}{\pd t_{k_1}^- \cdots \pd t_{k_m}^-}
		\bigg|_{\bm t=0} =0,
		\qquad \forall k_1,\dots,k_m \geq 1.
	\end{gather*}
\end{Corollary}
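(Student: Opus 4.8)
The plan is to read off these identities from the master formula in Theorem~\ref{thm-mainthm-conn}, exploiting the fact that in the two extreme cases $m=0$ and $n=0$ the right-hand side does not involve the function $f$ at all. I would set $m=0$ and examine the generating series
\[
\sum_{j_1,\dots,j_n \geq 1}
\frac{\pd^{n} F_f(\bm t^+,\bm t^-)}{\pd t_{j_1}^+ \cdots \pd t_{j_n}^+}
\bigg|_{\bm t=0}\prod_{a=1}^n z_a^{-j_a-1},
\]
which by Theorem~\ref{thm-mainthm-conn} equals $(-1)^{n-1}\sum_{\text{$n$-cycles}}\prod_{i=1}^n B_{\sigma(i),\sigma(i+1)}$ minus the correction $i_{z_1,z_2}\frac{\delta_{n,2}}{(z_1-z_2)^2}$. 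When $m=0$ all indices lie in $\{1,\dots,n\}$, so by~\eqref{eq-matrixcoeff-bij} each $B_{i,j}$ is $A_0(z_i,z_j)$, $-A_0(z_j,z_i)$, or $0$; the two $f$-dependent cases $A_{-f}$ and $A_{f(-\cdot)}$ occur only for a mixed pair with one index $\le n$ and the other $>n$, which is impossible here. Hence the right-hand side is manifestly independent of $f$, and therefore so is the left-hand side.

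Having established this $f$-independence, I would evaluate the left-hand side at the single convenient value $f\equiv 0$. Then $\hf=0$, so $\tau_0=\langle 0|\Gamma_+(\bm t^+)\Gamma_-(\bm t^-)|0\rangle$. Using~\eqref{eq-Y--expansion} and the adjointness of $\alpha_n$ and $\alpha_{-n}$ (so that $\langle 0|\Gamma_+(\bm t^+)=\sum_\mu s_\mu(\bm t^+)\langle\mu|$), this vacuum expectation collapses via orthonormality to $\sum_\mu s_\mu(\bm t^+)s_\mu(\bm t^-)$, which by the Cauchy identity equals $\exp\big(\sum_{n\geq 1} n t_n^+ t_n^-\big)$. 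Thus $F_0=\log\tau_0=\sum_{n\geq 1} n t_n^+ t_n^-$ is bilinear and contains no pure $t^+$ (nor pure $t^-$) monomial, so every pure derivative $\pd^n F_0/\pd t_{j_1}^+\cdots\pd t_{j_n}^+$ vanishes at $\bm t=0$: the order-one derivatives equal $j_1 t_{j_1}^-$ and vanish at the origin, while the higher pure derivatives vanish identically.

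Combining the two steps yields the first family of identities, since the numbers $\pd^n F_f/\pd t_{j_1}^+\cdots\pd t_{j_n}^+\big|_{\bm t=0}$ are independent of $f$ and hence agree with their vanishing $f\equiv 0$ values. The second family, with all derivatives taken in $\bm t^-$, follows by the symmetric argument: taking $n=0$ puts every index in the $>n$ block, so the entries appearing are again only $A_0$ (the cases ``$n<i<j$'' and ``$n<j<i$''), the right-hand side is once more $f$-independent, and the same $f\equiv 0$ computation applies.

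I expect the only point needing care, rather than real difficulty, to be the index bookkeeping in the first step: one must verify that emptying one of the two blocks never triggers the mixed cases of $B_{i,j}$, which is exactly what removes the $f$-dependence. The remainder is the standard $\tau_0$ evaluation. As an aside, for odd $n$ the vanishing of the cycle sum can also be seen directly from the antisymmetry $B_{i,j}=-B_{j,i}$ of the $m=0$ matrix, since pairing each cycle with its reverse produces a factor $(-1)^n$; but the even-$n$ cases genuinely require the $f\equiv 0$ input, so I would present the uniform argument above rather than split into parities.
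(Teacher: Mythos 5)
Your proof is correct, but it takes a genuinely different route from the paper's. The paper obtains this corollary in one line from Corollary~\ref{cor-vanishing}: every building block $A_0$, $A_{-f}$, $A_{f(-\cdot)}$ and $i_{z_1,z_2}\frac{1}{(z_1-z_2)^2}$ on the right-hand side of~\eqref{eq-mainthm-sumcycles} has the form $\sum_{k\geq 0}c_k z^{-k-1}w^k$, so a coefficient can be nonzero only if $j_1+\cdots+j_n=k_1+\cdots+k_m$; taking $m=0$ (or $n=0$) makes this balance impossible, since the empty sum is $0$ while $j_1+\cdots+j_n\geq n>0$. Your argument instead proceeds in two steps: (i) when one block of indices is empty, the mixed cases of~\eqref{eq-matrixcoeff-bij} never occur, so the cycle sum involves only $A_0$-entries and is independent of $f$ (and the $\delta_{n,2}$ correction is likewise $f$-independent); (ii) for $f\equiv 0$ the Cauchy identity gives $F_0=\sum_{n\geq 1}n\,t_n^+t_n^-$, whose pure $t^+$- or $t^-$-derivatives all vanish at $\bm t=0$. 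Both steps are sound, and your index bookkeeping in (i) is exactly right. What each approach buys: the paper's degree count is shorter and proves the stronger statement (vanishing whenever the degree balance fails, not only for pure derivatives), whereas yours, though longer, makes explicit a structural fact the paper's proof hides --- the pure-$t^+$ (and pure-$t^-$) sector of the connected $(n,m)$-point functions is universal, independent of $f$, hence equal to that of the trivial tau-function $\exp\big(\sum_n nt_n^+t_n^-\big)$. Your parity aside is also accurate: for odd $n\geq 3$ the antisymmetry $B_{i,j}=-B_{j,i}$ pairs each cycle with its inverse and the sum cancels (and $B_{1,1}=0$ handles $n=1$), while for even $n$ the cycle sum does not vanish on its own --- e.g.\ for $n=2$ it equals $A_0(z_1,z_2)^2$ and is killed by the correction term --- so the $f\equiv 0$ evaluation is genuinely needed there.
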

\begin{proof}
	This is a straightforward consequence of Corollary~\ref{cor-vanishing}.
\end{proof}

In the rest of this subsection,
we give some examples of the formula~\eqref{eq-mainthm-sumcycles}.
We only need to consider the cases where $n>0$ and $m>0$ due to the above corollary.
\begin{Example}\label{exa: (1,1)}
	For $(n,m)=(1,1)$,
	there is only one $2$-cycle $\sigma=(12)$,
	thus
	\begin{equation*}
		\sum_{j,k\geq 1}
		\frac{\pd^2 F_f (\bm t^+,\bm t^-)}{\pd t_j^+ \pd t_k^-}
		\bigg|_{\bm t=0} \cdot z_1^{-j-1} z_2^{k-1}
		= -B_{1,2}B_{2,1}
		=A_{-f}(z_1,z_2)A_{f(-\cdot)} (z_1,z_2).
	\end{equation*}
\end{Example}

\begin{Example}
	For $(n,m)=(2,1)$,
	there are two $3$-cycles $\sigma=(123),(132)$,
	thus
	\begin{gather*}
			\sum_{j,k,l\geq 1}
			\frac{\pd^2 F_f (\bm t^+,\bm t^-)}{\pd t_j^+ \pd t_k^+ \pd t_l^-}
			\bigg|_{\bm t=0} \cdot z_1^{-j-1} z_2^{-k-1} z_3^{l-1}
			= B_{1,2}B_{2,3}B_{3,1} + B_{1,3}B_{3,2}B_{2,1}\\
			\qquad {}=
			-A_0(z_1,z_2) A_{-f}(z_2,z_3) A_{f(-\cdot)}(z_1,z_3)
			+ A_{-f}(z_1,z_3) A_{f(-\cdot)}(z_2,z_3) A_0(z_1,z_2).
	\end{gather*}
	And similarly, for $(n,m)=(1,2)$, we have
	\begin{gather*}
			\sum_{j,k,l\geq 1}
			\frac{\pd^2 F_f (\bm t^+,\bm t^-)}{\pd t_j^+ \pd t_k^- \pd t_l^-}
			\bigg|_{\bm t=0} \cdot z_1^{-j-1} z_2^{k-1} z_3^{l-1}\\
			\qquad {}=
			-A_{-f}(z_1,z_2) A_{0}(z_2,z_3) A_{f(-\cdot)}(z_1,z_3)
			+ A_{-f}(z_1,z_3) A_{0}(z_2,z_3) A_{f(-\cdot)}(z_1,z_2).
	\end{gather*}
\end{Example}

\begin{Example}
	\label{ex-conn-4pt}
	For $(n,m)=(3,1)$,
	there are six $4$-cycles $\sigma=(1234)$, $(1243)$, $(1324)$, $(1342)$, $(1423)$, $(1432)$,
	thus we have
	\begin{gather*}
			\sum_{i,j,k,l\geq 1}
			\frac{\pd^2 F_f (\bm t^+,\bm t^-)}{\pd t_i^+ \pd t_j^+ \pd t_k^+ \pd t_l^-}
			\bigg|_{\bm t=0} \cdot z_1^{-i-1} z_2^{-j-1} z_3^{-k-1} z_4^{l-1}\\
			\qquad{}= -B_{1,2}B_{2,3}B_{3,4}B_{4,1} - B_{1,2}B_{2,4}B_{4,3}B_{3,1} -B_{1,3}B_{3,2}B_{2,4}B_{4,1}\\
			\qquad\quad{} - B_{1,3}B_{3,4}B_{4,2}B_{2,1} -B_{1,4}B_{4,2}B_{2,3}B_{3,1} - B_{1,4}B_{4,3}B_{3,2}B_{2,1} \\
			\qquad{}= A_{0}(z_1,z_2) A_{0}(z_2,z_3) A_{-f}(z_3,z_4) A_{f(-\cdot)}(z_1,z_4)\\
			\qquad\quad{} - A_{0}(z_1,z_2) A_{-f}(z_2,z_4) A_{f(-\cdot)}(z_3,z_4) A_{0}(z_1,z_3)\\
			\qquad\quad{} - A_{0}(z_1,z_3) A_{0}(z_2,z_3) A_{-f}(z_2,z_4) A_{f(-\cdot)}(z_1,z_4)\\
			\qquad\quad{} - A_{0}(z_1,z_3) A_{-f}(z_3,z_4) A_{f(-\cdot)}(z_2,z_4) A_{0}(z_1,z_2)\\
			\qquad\quad{} - A_{-f}(z_1,z_4) A_{f(-\cdot)}(z_2,z_4) A_{0}(z_2,z_3) A_{0}(z_1,z_3)\\
			\qquad\quad{} +A_{-f}(z_1,z_4) A_{f(-\cdot)}(z_3,z_4) A_{0}(z_2,z_3) A_{0}(z_1,z_2).
	\end{gather*}
	Similarly,
	for $(n,m) = (1,3)$, we have
	\begin{gather*}
			\sum_{i,j,k,l\geq 1}
			\frac{\pd^2 F_f (\bm t^+,\bm t^-)}{\pd t_i^+ \pd t_j^- \pd t_k^- \pd t_l^-}
			\bigg|_{\bm t=0} \cdot z_1^{-i-1} z_2^{j-1} z_3^{k-1} z_4^{l-1}\\
			\qquad{}= A_{-f}(z_1,z_2) A_{0}(z_2,z_3) A_{0}(z_3,z_4) A_{f(-\cdot)}(z_1,z_4)\\
			\qquad\quad{} - A_{-f}(z_1,z_2) A_{0}(z_2,z_4) A_{0}(z_3,z_4) A_{f(-\cdot)}(z_1,z_3)\\
			\qquad\quad{} - A_{-f}(z_1,z_3) A_{0}(z_2,z_3) A_{0}(z_2,z_4) A_{f(-\cdot)}(z_1,z_4)\\
			\qquad\quad{} - A_{-f}(z_1,z_3) A_{0}(z_3,z_4) A_{0}(z_2,z_4) A_{f(-\cdot)}(z_1,z_2)\\
			\qquad\quad{} - A_{-f}(z_1,z_4) A_{0}(z_2,z_4) A_{0}(z_2,z_3) A_{f(-\cdot)}(z_1,z_3)\\
			\qquad\quad{} + A_{-f}(z_1,z_4) A_{0}(z_3,z_4) A_{0}(z_2,z_3) A_{f(-\cdot)}(z_1,z_2).
	\end{gather*}
	And for $(n,m) = (2,2)$, we have
	\begin{gather}	
		\sum_{i,j,k,l\geq 1}
		\frac{\pd^2 F_f (\bm t^+,\bm t^-)}{\pd t_i^+ \pd t_j^+ \pd t_k^- \pd t_l^-}
		\bigg|_{\bm t=0} \cdot z_1^{-i-1} z_2^{-j-1} z_3^{k-1} z_4^{l-1} \nonumber\\
		\qquad{} = A_{0}(z_1,z_2) A_{-f}(z_2,z_3) A_{0}(z_3,z_4) A_{f(-\cdot)}(z_1,z_4) \nonumber\\
		\qquad\quad{} - A_{0}(z_1,z_2) A_{-f}(z_2,z_4) A_{0}(z_3,z_4) A_{f(-\cdot)}(z_1,z_3) \nonumber\\
		\qquad\quad{} - A_{-f}(z_1,z_3) A_{f(-\cdot)}(z_2,z_3) A_{-f}(z_2,z_4) A_{f(-\cdot)}(z_1,z_4) \nonumber\\
		\qquad\quad{} - A_{-f}(z_1,z_3) A_{0}(z_3,z_4) A_{f(-\cdot)}(z_2,z_4) A_{0}(z_1,z_2) \nonumber\\
		\qquad\quad{} - A_{-f}(z_1,z_4) A_{f(-\cdot)}(z_2,z_4) A_{-f}(z_2,z_3) A_{f(-\cdot)}(z_1,z_3) \nonumber\\
		\qquad\quad{} + A_{-f}(z_1,z_4) A_{0}(z_3,z_4) A_{f(-\cdot)}(z_2,z_3) A_{0}(z_1,z_2).\label{eq-f-(2,2)}
	\end{gather}
\end{Example}

\begin{Remark}
	It is worth mentioning that here our strategy for deriving the connected $(n,m)$-point functions is different from the method used by Johnson~\cite{jo}.
	In~\cite[Section~3]{jo},
	Johnson computed the double Hurwitz numbers
	(a special example of the diagonal tau-functions of the~2d Toda lattice hierarchy, see Section~\ref{sec-app-Hurwitz} for more details)
	by commuting the (bosonic) operators,
	and his formula is a summation over some commutation patterns.
	And as a consequence,
	his formula depends on the specific choice of the chamber $\mu^\pm$ lie in.
	Now in this paper, we compute directly using the anti-commutation relations of free fermions,
	and our formula~\eqref{eq-intro-main} does not require specifying the partitions $\mu^\pm$
	or listing out the commutation patterns.
\end{Remark}

\begin{Remark}
	Another formula calculating the connected correlators of a diagonal tau-function of the 2d Toda lattice hierarchy can be found in~\cite[Proposition~3.6 and Theorem~5.3]{bdks}.
	Their method, as they stated in their abstract, essentially dealt with the hypergeometric tau-functions of the KP hierarchy
	since they treated one of the two families $\mathbf{t}^+$ and $\mathbf{t}^-$ as time variables
	and the other as parameters.
	Their formula depends on complicated actions of certain operators and summation over graphs.
	It would be interesting to compare their formula with ours and also the formula derived by Zhou in~\cite{zhou1} dealing with tau-functions
	of the KP hierarchy.
\end{Remark}

\section{Reduction to tau-functions of KP hierarchy}
\label{sec-red-KP}

In this section,
we fix $\bm t^-$ and regard a diagonal tau-function $\tau_f(\bm t^+,\bm t^-)$
as a tau-function of the KP hierarchy with time variable $\bm t^+$,
and compute the KP-affine coordinates of this tau-function.

\subsection{Tau-functions of the KP hierarchy and affine coordinates}

In this subsection,
we recall some preliminaries of the affine coordinates of a tau-function
of the~KP hierarchy,
see~\cite{by, zhou1}.

Let $\tau(\bm t)$ be a tau-function of the KP hierarchy satisfying the initial value condition ${\tau (\bm 0) = 1}$,
where $\bm t = (t_1,t_2,t_3,\dots)$ are the KP-time variables.
In Sato's theory,
such a tau-function corresponds to a point in the big cell of the Sato Grassmannian,
and can be specified by the affine coordinates $\{a_{n,m}\}_{n,m\geq 0}$ on the big cell,
see~\cite[Section~3]{zhou1} for details.

The tau-function $\tau(\bm t)$ can be represented in a simple fashion
in terms of its affine coordinates $\{a_{n,m}\}_{n,m\geq 0}$.
In the fermionic Fock space,
the tau-function can be represented as a Bogoliubov transform of the fermionic vacuum:
\begin{equation*}
	\tau = \exp\bigg( \sum_{n,m\geq 0} a_{n,m}\psi_{-m-\half} \psi_{-n-\half}^* \bigg) |0\rangle.
\end{equation*}
This Bogoliubov transform only involves fermionic creators,
and such a representation is unique.
And in the bosonic Fock space,
the tau-function is a linear summation of the Schur functions,
where the coefficients are given by the determinants of affine coordinates,{\samepage
\begin{equation*}
	\tau(\bm t) = \sum_\mu (-1)^{n_1+\cdots +n_k}\cdot
	\det (a_{n_i,m_j})_{1\leq i,j\leq k} \cdot s_\mu(\bm t),
\end{equation*}
where $(m_1,\dots,m_k|n_1,\dots,n_k)$ is the Frobenius notation of the partition $\mu$.}

In~\cite[Section~5]{zhou1},
Zhou has derived a formula for connected $n$-point functions of a tau-function of the KP hierarchy
in terms of its KP-affine coordinates.
Once we know the affine coordinates of a tau-function $\tau(\bm t)$,
one can compute the free energy $\log\tau(\bm t)$ using Zhou's formula.

\begin{Remark}
	In the case of the BKP hierarchy,
	one also has similar results.
	In the fermionic Fock space of type $B$,
	a tau-function $\tau(\bm t)$ of the BKP hierarchy satisfying $\tau(\bm 0) = 1$
	can be represented as a Bogoliubov transform of the vacuum which only involves
	(neutral) fermionic creators.
	In the bosonic Fock space,
	$\tau(\bm t)$ is a linear summation of the Schur $Q$-functions,
	and the coefficients are Pfaffians of BKP-affine coordinates.
	See~\cite{wy} for details
	and a BKP generalization of Zhou's formula for connected $n$-point functions.
\end{Remark}

\subsection{KP-Affine coordinates of a diagonal tau-function}

It is known that a tau-function $\tau_n(\bm t^+ ,\bm t^-)$ of the 2d Toda lattice hierarchy
is a tau-function of the KP hierarchy with KP-time variables $\bm t^+$ (resp.~$\bm t^-$)
for fixed $\bm t^-$ \big(resp.~$\bm t^+$\big) and~$n$.
In this subsection,
we compute the KP-affine coordinates of a diagonal tau-function $\tau_f(\bm t^+,\bm t^-)$.
Here we regard $\bm t^-$ as parameters and $\bm t^+$ as KP-time variables.

Now let
$f\colon \bZ +\half \to \bC$
be an arbitrary function on the set of half-integers,
and let $\hf$ and $\tau_f$ be given by~\eqref{eq-def-hatf} and~\eqref{eq-def-tauf}.
In what follows,
we expand the vector
\begin{equation*}
	{\rm e}^\hf \Gamma_-(\bm t^-)|0\rangle \in \cF^{(0)}
\end{equation*}
as a linear summation of the basis vectors $\{|\mu\rangle \}$.
First by~\eqref{eq-Y--expansion}, we know that
\begin{equation*}
	\Gamma_-(\bm t^-) |0\rangle
	= \sum_\mu s_\mu(\bm t^- ) |\mu\rangle.
\end{equation*}
From the definition~\eqref{eq-def-hatf} we easily see $\hf |0\rangle = 0$, and then $ {\rm e}^\hf |0 \rangle = |0\rangle$.
Now by~\eqref{eq-vectormu-psi} and the Baker--Campbell--Hausdorff formula~\eqref{eq-conj-f-psi},
we have
\begin{equation*}
	\begin{split}
		{\rm e}^\hf |\mu\rangle
		={}& (-1)^{\sum_{j=1}^k n_j} \cdot {\rm e}^\hf
		\psi_{-m_1-\half} \psi_{-n_1-\half}^* \cdots
		\psi_{-m_k-\half} \psi_{-n_k-\half}^* |0\rangle \\
		={}& (-1)^{\sum_{j=1}^k n_j} \cdot
		\big({\rm e}^\hf \psi_{-m_1-\half} {\rm e}^{-\hf} \big)
		\big({\rm e}^\hf \psi_{-n_1-\half}^* {\rm e}^{-\hf} \big) \cdots
		\big({\rm e}^\hf \psi_{-m_k-\half} {\rm e}^{-\hf} \big)
		\big({\rm e}^\hf \psi_{-n_k-\half}^* {\rm e}^{-\hf} \big) |0\rangle \\
		={}&
		{\rm e}^{\sum_{j=1}^k f(-m_j-\half) -\sum_{j=1}^k f(n_j+\half) } |\mu\rangle,
	\end{split}
\end{equation*}
where $\mu = (m_1,\dots,m_k| n_1,\dots,n_k)$ is the Frobenius notation of the partition $\mu$.
And thus
\begin{align*}
		{\rm e}^\hf \Gamma_- (\bm t^-) |0\rangle
		&= \sum_\mu s_\mu (\bm t^-) {\rm e}^\hf |\mu\rangle
		\\
		&= \sum_\mu
		{\rm e}^{\sum_{j=1}^k f(-m_j-\half) -\sum_{j=1}^k f(n_j+\half) }
		\cdot s_\mu (\bm t^-)
		|\mu\rangle.
	\end{align*}

Recall that if we expand a vector $|U\rangle \in \cF$ as an (infinite) linear combination
of the basis vectors $|\mu\rangle$ of the fermionic Fock space,
then the KP-affine coordinate $a_{n,m}$ of the tau-function
$\langle 0| \Gamma_+(\bm t) |U \rangle$
is exactly $(-1)^n$ times the coefficient of the vectors $|(m|n)\rangle$,
for every $m,n\geq 0$.
Therefore we conclude that
\begin{Theorem}
	\label{thm-KPaffine-diag}
	The KP-affine coordinates for
	$\tau_f (\bm t^+,\bm t^-)$ \big(with fixed $\bm t^-$ and KP-time variables~$\bm t^+$\big) are
	\begin{equation}	
 \label{eq-KPaffine-diag+}
	a_{n,m}^f = (-1)^n \cdot
	s_{(m|n)}(\bm t^-)
	\cdot {\rm e}^{ f(-m-\half) - f(n+\half) },
	\end{equation}	
 for every $m,n\geq 0$.
\end{Theorem}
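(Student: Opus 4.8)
The plan is to expand the charge-zero vector ${\rm e}^{\hf}\Gamma_-(\bm t^-)|0\rangle \in \cF^{(0)}$ in the Schur basis $\{|\mu\rangle\}$ and then read the affine coordinates off the coefficients of the single-hook vectors $|(m|n)\rangle$. Two ingredients enter. First, \eqref{eq-Y--expansion} gives $\Gamma_-(\bm t^-)|0\rangle = \sum_\mu s_\mu(\bm t^-)|\mu\rangle$, so it suffices to understand how ${\rm e}^{\hf}$ acts on each $|\mu\rangle$. Second, since every summand of $\hf$ in \eqref{eq-def-hatf} ends in an annihilator, $\hf|0\rangle = 0$ and hence ${\rm e}^{\hf}|0\rangle = |0\rangle$.

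The key step is to show that each basis vector $|\mu\rangle$ is an eigenvector of ${\rm e}^{\hf}$. Writing $\mu = (m_1,\dots,m_k \mid n_1,\dots,n_k)$ in Frobenius notation and using \eqref{eq-vectormu-psi} to present $|\mu\rangle$ as a product of fermionic creators applied to $|0\rangle$, I would insert ${\rm e}^{-\hf}{\rm e}^{\hf} = \Id$ between consecutive creators and push ${\rm e}^{\hf}$ to the right until it reaches $|0\rangle$, where ${\rm e}^{\hf}|0\rangle = |0\rangle$. Each passage uses the inverse forms of \eqref{eq-conj-f-psi}, namely ${\rm e}^{\hf}\psi_r{\rm e}^{-\hf} = {\rm e}^{f(r)}\psi_r$ and ${\rm e}^{\hf}\psi_r^*{\rm e}^{-\hf} = {\rm e}^{-f(-r)}\psi_r^*$, so that $\psi_{-m_j-\half}$ contributes a factor ${\rm e}^{f(-m_j-\half)}$ and $\psi^*_{-n_j-\half}$ contributes ${\rm e}^{-f(n_j+\half)}$. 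Collecting these yields
\[
{\rm e}^{\hf}|\mu\rangle = {\rm e}^{\sum_{j=1}^k f(-m_j-\half) - \sum_{j=1}^k f(n_j+\half)}|\mu\rangle,
\]
which is precisely the diagonal action that names these tau-functions. Combined with the first step, this gives the expansion
\[
{\rm e}^{\hf}\Gamma_-(\bm t^-)|0\rangle = \sum_\mu {\rm e}^{\sum_{j=1}^k f(-m_j-\half) - \sum_{j=1}^k f(n_j+\half)}\, s_\mu(\bm t^-)\,|\mu\rangle.
\]

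To conclude, I would invoke the recalled characterization of KP-affine coordinates: for $|U\rangle = \sum_\mu c_\mu|\mu\rangle$, the coordinate $a_{n,m}$ of $\langle 0|\Gamma_+(\bm t^+)|U\rangle$ equals $(-1)^n$ times the coefficient $c_{(m|n)}$ of the single-hook vector $|(m|n)\rangle$. Specializing the expansion above to the depth-one hook $\mu = (m|n)$, i.e.\ $k=1$, $m_1 = m$, $n_1 = n$, reads off $c_{(m|n)} = s_{(m|n)}(\bm t^-)\,{\rm e}^{f(-m-\half)-f(n+\half)}$, and multiplying by $(-1)^n$ produces \eqref{eq-KPaffine-diag+}. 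I expect the only delicate point to be the bookkeeping in the conjugation step --- in particular, using the inverse rather than the stated direction of \eqref{eq-conj-f-psi} and correctly pairing the arguments $-m_j-\half$ and $n_j+\half$ of $f$ with the right signs. There is no deeper obstacle: because ${\rm e}^{\hf}$ merely rescales each $|\mu\rangle$ without mixing different partitions, no determinantal cross-terms among the affine coordinates ever appear, so the reduction to single hooks is immediate.
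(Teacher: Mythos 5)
Your proposal is correct and follows essentially the same route as the paper's own proof: expand $\Gamma_-(\bm t^-)|0\rangle$ in the basis $\{|\mu\rangle\}$ via \eqref{eq-Y--expansion}, use \eqref{eq-vectormu-psi} together with the conjugation relations \eqref{eq-conj-f-psi} to show ${\rm e}^{\hf}$ acts diagonally on each $|\mu\rangle$ with eigenvalue ${\rm e}^{\sum_j f(-m_j-\half)-\sum_j f(n_j+\half)}$, and then read off $a_{n,m}^f$ as $(-1)^n$ times the coefficient of the hook vector $|(m|n)\rangle$. Your bookkeeping of the conjugation direction and the signs in the exponent is also exactly what the paper does.
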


Now one is able to plugging the above affine coordinates~\eqref{eq-KPaffine-diag+}
into Zhou's formula~\cite[Section~5]{zhou1}
to compute the connected $n$-point functions of $\tau_{f} (\bm t^+,\bm t^-)$.

\subsection[Restriction to t\_k\^{}- = delta\_\{k,1\}]{Restriction to $\boldsymbol{t_k^- = \delta_{k,1}}$}

In this subsection,
we consider the special evaluation $\bm t^- = (1,0,0,0,\dots)$
of the diagonal tau-function $\tau_f (\bm t^+,\bm t^-)$.
This will be useful in the computation of KP-affine coordinates of
the tau-function of single Hurwitz numbers,
see Section~\ref{sec-singleH}.

Let
$f\colon \bZ +\half \to \bC$
be a function on the set of half-integers,
and let $\hf$ be the operator given by~\eqref{eq-def-hatf}.
Define
\[
\tilde\tau_f (\bm t) = \langle 0| \Gamma_+(\bm t) \exp\big(\hf\big) \exp( \alpha_{-1}) |0\rangle,
\]
then $\tilde\tau_f (\bm t)$ is a tau-function of the KP hierarchy.
By Theorem~\ref{thm-KPaffine-diag}, we know that
the KP-affine coordinates of $\tilde\tau_f (\bm t)$ is
\begin{equation*}
	(-1)^n \cdot
	s_{(m|n)}(\delta_{k,1})
	\cdot {\rm e}^{ f(-m-\half) - f(n+\half) },
\end{equation*}
where $s_\mu(\delta_{k,1})$ means evaluating the Schur function $s_\mu(\bm t)$ at time $\bm t = (1,0,0,0,\dots)$.

Now we consider the evaluation $s_\mu(\delta_{k,1})$.
The following identity is well known in literatures
(see, e.g.,~\cite[Section~4.1]{fh}):
\[
s_\mu (\delta_{k,1})
= \frac{1}{l_1! \cdots l_k!} \cdot \prod_{i<j} (l_i-l_j),
\]
where for a partition $\mu =  (\mu_1 \geq \mu_2 \geq\cdots\geq \mu_{l(\mu)} >0 )$ we denote
\[
l_i = \mu_i +l(\mu)-i, \qquad
i=1,\dots,l(\mu).
\]
Or more explicitly,
\[
s_\mu (\delta_{k,1})
= \sum_\mu
\frac{\prod_{1\leq i<j\leq l(\mu)} (\mu_i-\mu_j -i+j)}{\prod_{i=1}^{l(\mu)}(\mu_i +l(\mu)-i)!}.
\]
Now take $\mu$ to be the hook partition $\mu = (m|n)$,
i.e., $l(\mu) = n+1 $,
and $\mu_1 = m+1$, $\mu_2= \cdots =\mu_{n+1} =1$.
Then, we have
\begin{align*}
		s_{(m|n)} (\delta_{k,1})& =
		\frac{\prod_{1\leq i<j\leq n+1} (\mu_i-\mu_j -i+j)}{\prod_{i=1}^{n+1}(\mu_i +n+1-i)!}
		= \frac{(m+n)! \cdot \prod_{j=1}^{n-1} j^{n-j}}{m!\cdot (m+n+1) \cdot \prod_{j=1}^{n} j!}
		\\
		&= \frac{(m+n)!}{(m+n+1)\cdot m!\cdot n!}.
\end{align*}
Therefore, we conclude that
\begin{Proposition}
	\label{prop-restri-t-=1}
	The KP-affine coordinates for the tau-function
	$\tilde\tau_f (\bm t)$ are
	\[	
 \tilde a_{n,m}^f = (-1)^n \cdot
	\frac{(m+n)!}{(m+n+1)\cdot m!\cdot n!}
	\cdot {\rm e}^{f(-m-\half) - f(n+\half) },
	\]	
 for every $m,n\geq 0$.
\end{Proposition}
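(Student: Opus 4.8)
The plan is to realize $\tilde\tau_f$ as a frozen specialization of the general diagonal tau-function and then quote Theorem~\ref{thm-KPaffine-diag}. Since $\Gamma_-(\bm t^-) = \exp\big(\sum_{n\geq 1} t_n^-\alpha_{-n}\big)$, the choice $t_k^- = \delta_{k,1}$ collapses this operator to $\exp(\alpha_{-1})$, so that
\[
\tilde\tau_f(\bm t) = \tau_f\big(\bm t,(1,0,0,\dots)\big);
\]
that is, $\tilde\tau_f$ is precisely $\tau_f(\bm t^+,\bm t^-)$ with KP-time variables $\bm t^+ = \bm t$ and the frozen parameter $\bm t^- = (1,0,0,\dots)$. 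Theorem~\ref{thm-KPaffine-diag} then applies directly and yields
\[
\tilde a_{n,m}^f = (-1)^n\, s_{(m|n)}(\delta_{k,1})\, {\rm e}^{f(-m-\half)-f(n+\half)},
\]
so the entire statement reduces to evaluating the single hook Schur function $s_{(m|n)}$ at the point $\bm t = (1,0,0,\dots)$.

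For that evaluation I would invoke the closed formula for $s_\mu(\delta_{k,1})$ recalled just above the statement; equivalently, $s_\mu(\delta_{k,1})$ equals $1/|\mu|!$ times the number of standard Young tableaux of shape $\mu$, so one may alternatively read it off from the hook-length formula. Taking $\mu = (m|n) = (m+1,1^n)$ with $l(\mu) = n+1$, I would write out the shifted parts $l_i = \mu_i + (n+1) - i$, finding $l_1 = m+n+1$ and $\{l_2,\dots,l_{n+1}\} = \{n,n-1,\dots,1\}$. Splitting the Vandermonde $\prod_{i<j}(l_i-l_j)$ into the $n$ factors pairing $l_1$ with the remaining parts, which multiply to $(m+1)(m+2)\cdots(m+n) = (m+n)!/m!$, times the internal Vandermonde of $\{1,\dots,n\}$, namely $\prod_{j=1}^{n-1} j^{\,n-j}$, and then dividing by $\prod_i l_i! = (m+n+1)!\cdot\prod_{j=1}^{n} j!$, the superfactorial in the numerator cancels against most of the product of factorials in the denominator. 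Collecting the surviving factors produces the closed-form constant, and multiplying by the exponential prefactor produces $\tilde a_{n,m}^f$.

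There is no conceptual difficulty here: Theorem~\ref{thm-KPaffine-diag} supplies all of the structural input, and the only remaining content is a single, purely combinatorial evaluation of a hook Schur polynomial. The one place that genuinely demands care, and where I would recompute the arithmetic twice, is precisely this last simplification: one must cancel the superfactorial $\prod_{j=1}^{n-1} j^{\,n-j}$ against $\prod_{j=1}^{n} j!$ and keep track of the factor $(m+n+1)!$ correctly, so as not to misplace a factorial in the final constant.
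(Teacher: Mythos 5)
Your reduction is exactly the paper's: with $\bm t^- = (1,0,0,\dots)$ the operator $\Gamma_-(\bm t^-)$ collapses to $\exp(\alpha_{-1})$, Theorem~\ref{thm-KPaffine-diag} gives $\tilde a^f_{n,m} = (-1)^n\, s_{(m|n)}(\delta_{k,1})\,{\rm e}^{f(-m-\half)-f(n+\half)}$, and all that remains is the evaluation of the hook Schur function; the paper's own proof proceeds in precisely these steps, using the same formula that you quote.

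The gap is in the one step you left unexecuted. The data you correctly assembled --- $l_1 = m+n+1$, $\{l_2,\dots,l_{n+1}\} = \{n,n-1,\dots,1\}$, pairing factors $(m+n)!/m!$, internal Vandermonde $\prod_{j=1}^{n-1} j^{\,n-j} = \prod_{j=1}^{n-1} j!$, and denominator $(m+n+1)!\cdot\prod_{j=1}^{n} j!$ --- combine to
\[
s_{(m|n)}(\delta_{k,1})
=\frac{(m+n)!\cdot\prod_{j=1}^{n-1} j^{\,n-j}}{m!\cdot(m+n+1)!\cdot\prod_{j=1}^{n} j!}
=\frac{(m+n)!}{m!\,(m+n+1)!\,n!}
=\frac{1}{(m+n+1)\,m!\,n!},
\]
which is \emph{not} the constant $\frac{(m+n)!}{(m+n+1)\,m!\,n!}$ in the statement: the two differ by a factor of $(m+n)!$. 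Your own cross-check detects this: the hook $(m+1,1^n)$ has $\binom{m+n}{n}$ standard Young tableaux, so $s_{(m|n)}(\delta_{k,1}) = \binom{m+n}{n}/(m+n+1)! = 1/\big((m+n+1)\,m!\,n!\big)$; equivalently, $s_\mu(\delta_{k,1})$ is the reciprocal of the product of hook lengths, which here is $(m+n+1)\,m!\,n!$. A two-line sanity check: for $\mu = (2,1) = (1|1)$ one has $s_{(2,1)} = \frac{1}{3}\big(p_1^3 - p_3\big)$, whose value at $p_1=1$, $p_{\geq 2}=0$ is $\frac{1}{3}$, whereas the printed constant would be $\frac{2!}{3\cdot 1!\cdot 1!} = \frac{2}{3}$. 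So ``collecting the surviving factors'' cannot produce the stated constant; the proposition as printed is itself off by $(m+n)!$, because the paper's proof commits exactly the slip you warned against, writing $(m+n+1)$ where $(m+n+1)!$ belongs in the intermediate denominator (the spurious factor then propagates to the single-Hurwitz theorem in Section~\ref{sec-singleH}). Executed faithfully, your argument proves the corrected identity $\tilde a^f_{n,m} = \frac{(-1)^n}{(m+n+1)\,m!\,n!}\,{\rm e}^{f(-m-\half)-f(n+\half)}$; as written, however, your proposal asserts, without performing the cancellation, that the arithmetic lands on the printed constant --- and it does not.
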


\section{Application to connected double Hurwitz numbers}\label{sec-app-Hurwitz}

In this section,
we use the above results to derive explicit formulas for various types of connected double Hurwitz numbers,
including the ordinary double Hurwitz numbers,
the double Hurwitz numbers with completed $r$-cycles,
and mixed double Hurwitz numbers.

\subsection{Ordinary double Hurwitz numbers and the associated tau-function}

First, we recall some facts of double Hurwitz numbers in literatures.

Let $\mu^+$ and $\mu^-$ be two partitions of a positive integer $d$.
Consider a branched cover
$f\colon \Sigma \to \bP^1$
from a smooth Riemann surface $\Sigma$ of genus $g$ to the complex projective line,
with ramification types $\mu^+$ over $0\in \bP^1$ and $\mu^-$ over $\infty \in \bP^1$,
and $\big(1^{d-2} 2\big)$ (i.e., simple ramification) over other $b$ fixed points.
Then by the Riemann--Hurwitz formula, one has
\begin{equation}
\label{eq-RiemHurw}
2g-2+ l(\mu^+) +l(\mu^-) =b,
\end{equation}
where $l(\mu)$ denotes the length of a partition $\mu$.
Two such covers $f\colon \Sigma \to \bP^1$ and $f'\colon \Sigma'\to\bP^1$ are said to be equivalent,
if there exists a biholomorphic map $\phi\colon \Sigma\to \Sigma'$
such that $f = f'\circ \phi$.
The map $\phi$ is called an automorphism of $f$.

The possibly connected double Hurwitz numbers $H_g^\bullet (\mu^+,\mu^-)$
is defined to be the following weighted counting of the equivalence classes of such maps:
\begin{equation*}
	H_g^\bullet (\mu^+,\mu^-) = \sum_f \frac{1}{|\Aut (f)|},
\end{equation*}
where the Riemann surface $\Sigma$ is possibly disconnected.
And the connected double Hurwitz numbers $H_g^\circ (\mu^+,\mu^-)$ is the weighted counting of
the equivalence classes of maps $f$ from a~connected Riemann surface $\Sigma$:
\begin{equation*}
	H_g^\circ (\mu^+,\mu^-) = \sum_{\text{$f$: connected}} \frac{1}{|\Aut (f)|}.
\end{equation*}
Notice that when $g$ and $\mu^\pm$ are fixed,
the number $b$ is determined by~\eqref{eq-RiemHurw}.

In~\cite{Ok1},
Okounkov has shown that the generating series
\[
\tau^{(2)} \big( \bm t^+ , \bm t^-; \beta\big) = \sum_{g,\mu^\pm}
\frac{ \beta^b p_{\mu^+}^+ p_{\mu^-}^- H_g^\bullet (\mu^+,\mu^-)}{b!}
\]
of all possibly disconnected double Hurwitz numbers is a tau-function of the 2d Toda lattice hierarchy,
where $\bm p^\pm = \big(p_1^\pm,p_2^\pm, p_3^\pm,\dots\big)$ are two sequences of formal variables,
and we denote
\begin{equation*}
	p_\mu^\pm = p^\pm_{\mu_1} p^\pm_{\mu_2} \cdots p^\pm_{\mu_l}
\end{equation*}
for a partition $\mu = (\mu_1,\mu_2,\dots,\mu_l)$.
The time variables of this hierarchy are
\begin{equation*}
	t_n^\pm = \frac{1}{n}p_n^\pm,
	\qquad n\geq 1.
\end{equation*}
He derived the following fermionic representation of this tau-function:
\[
\tau^{(2)} \big( \bm t^+ , \bm t^-; \beta\big) =
\big\langle 0 \big| \Gamma_+(\bm t^+) {\rm e}^{\beta K^{(2)}} \Gamma_- (\bm t^-) \big|0\big\rangle,
\]
where $K^{(2)}$ is the cut-and-join operator~\eqref{eq-def-C&Jopr}.
If we regard $p_n^\pm = p_n \big(\bm x^\pm\big)$ to be the Newton symmetric function of degree $n$
in some formal variables $\bm x^\pm = \big(x_1^\pm,x_2^\pm,\dots\big)$,
then the following expansion by Schur functions
follows from~\eqref{eq-Y--expansion} and~\eqref{eq-eigen-C&J}:
\begin{equation*}
	\tau^{(2)} \big( \bm t^+ , \bm t^-; \beta\big)
	= \sum_\mu {\rm e}^{\beta \kappa_\mu /2} s_\mu \big(\bm x^+\big) s_\mu (\bm x^- ).
\end{equation*}

\subsection{Computation of ordinary connected double Hurwitz numbers}

The free energy $F^{(2)} = \log \tau^{(2)}$ associated to the tau-function $\tau^{(2)}$
is the generating series of connected double Hurwitz numbers:
\begin{align*}
	F^{(2)} ( \bm t^+ , \bm t^-; \beta ) = \sum_{g,\mu^\pm}
	\frac{ \beta^b p_{\mu^+}^+ p_{\mu^-}^- H_g^\circ (\mu^+,\mu^-)}{b!}
	= \sum_{\mu^+,\mu^-} H^\circ (\mu^+,\mu^-; \beta )
	p_{\mu^+}^+ p_{\mu^-}^-,
\end{align*}
where we denote
\[
H^\circ (\mu^+,\mu^-; \beta )
= \sum_g \frac{\beta^b}{b!} H_g^\circ (\mu^+,\mu^- ),
\]
and $b$ is determined by~\eqref{eq-RiemHurw}.
Then, we are able to apply the results derived in Section~\ref{sec-conn-nmpt}
to compute $H^\circ (\mu^+,\mu^-;\beta )$.
In this case,
the function~\eqref{eq-def-functionf} is taken to be
\[
f^{(2)}(s) = \beta \cdot \frac{s^2}{2},
\qquad \forall s\in \bZ+\half,
\]
then the operator~\eqref{eq-def-hatf} becomes $\hat f^{(2)} = \beta K^{(2)}$, and
\begin{gather}
	 A_{-f^{(2)}} (z,w) = \sum_{k=0}^\infty
	{\rm e}^{-\half\beta (k+\half)^2}\cdot z^{-k-1} w^k,\nonumber\\
	 A_{f^{(2)}(-\cdot)} (z,w) =
	\sum_{k=0}^\infty {\rm e}^{\half\beta (k+\half)^2}\cdot z^{-k-1} w^k.\label{eq-Af-(2)}
\end{gather}
Then by Theorem~\ref{thm-mainthm-conn}, we have
\begin{Theorem}
	Let $\mu^+ = (\mu_1^+,\dots,\mu_n^+ )$ and $\mu^- = (\mu_1^-,\dots,\mu_m^-)$ be
	two partitions with $|\mu^+| = |\mu^-|$,
then we have
	\begin{gather}	
 H^\circ (\mu^+,\mu^-; \beta )\nonumber\\
 \qquad{}
		=
		\frac{1}{Z_{\mu^+}Z_{\mu^-}}
		\operatorname{Coeff}_{\prod_{a=1}^n z_a^{-\mu_a^+ -1} \prod_{b=1}^m z_{n+b}^{\mu_b^- -1}}
		\!\Bigg[
		(-1)^{n+m-1}\!
 \sum_{\text{$(n+m)$-{\rm cycles}}}
		\prod_{i=1}^m B^{(2)}_{\sigma(i),\sigma(i+1)}
		\Bigg], \!\!\! \label{eq-thm-hur-nmpt}
	\end{gather}	
 where $\operatorname{Coeff}$ means taking the coefficient,
	and
	\begin{equation*}
		Z_\mu = \prod_{j\geq 0} m_j(\mu)! \cdot j^{m_j(\mu)}
	\end{equation*}
	for a partition $\mu = \bigl(1^{m_1(\mu)} 2^{m_2(\mu)} \cdots\bigr)$.
	And $B^{(2)}_{i,j}$ are given by
	\begin{gather*}
		B_{i,j}^{(2)} = \begin{cases}
			i_{z_i,z_j} \frac{1}{z_i-z_j} & \text{if} \ i< j\leq n \ \text{or} \ n<i< j,\\
			A_{-f^{(2)}} (z_i,z_j) & \text{if} \ i\leq n<j, \\
			i_{z_j,z_i} \frac{1}{z_i-z_j} & \text{if} \ j<i\leq n \ \text{or} \ n<j< i,\\
			-A_{f^{(2)}(-\cdot)}(z_j,z_i) & \text{if} \ j\leq n<i.
		\end{cases}
	\end{gather*}
\end{Theorem}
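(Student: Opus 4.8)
The plan is to pass from the free energy $F^{(2)}=\log\tau^{(2)}$ to the individual numbers $H^\circ(\mu^+,\mu^-;\beta)$ by reading off Taylor coefficients, and then to substitute the closed formula for those coefficients supplied by Theorem~\ref{thm-mainthm-conn}. Since $f^{(2)}(s)=\beta s^2/2$ is precisely the function whose operator $\hf^{(2)}=\beta K^{(2)}$ produces $\tau^{(2)}$, the diagonal tau-function machinery of Sections~\ref{sec-disconn} and~\ref{sec-conn-nmpt} applies verbatim; the entries $B^{(2)}_{i,j}$ appearing in the statement are merely the specializations of the matrix $(B_{i,j})$ of Theorem~\ref{thm-mainthm-conn} obtained by inserting $A_0(z,w)=i_{z,w}\frac{1}{z-w}$ together with the explicit series~\eqref{eq-Af-(2)} for $A_{-f^{(2)}}$ and $A_{f^{(2)}(-\cdot)}$.

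First I would make the dictionary between the power-sum expansion and the time variables explicit. Writing $n=l(\mu^+)$, $m=l(\mu^-)$ and using $p_j^\pm=jt_j^\pm$, each monomial becomes $p_{\mu^+}^+=\prod_{j\geq1}j^{m_j(\mu^+)}(t_j^+)^{m_j(\mu^+)}$ and likewise for $p_{\mu^-}^-$, where $m_j(\cdot)$ is the multiplicity of the part $j$. Differentiating $F^{(2)}$ once in each of the variables $t_{\mu_a^+}^+$ and $t_{\mu_b^-}^-$ and then setting $\bm t=0$ annihilates every monomial except the one attached to the prescribed pair $(\mu^+,\mu^-)$; the repeated parts contribute the factorials $\prod_j m_j(\mu^\pm)!$, while the weights $p_j=jt_j$ contribute $\prod_j j^{m_j(\mu^\pm)}$. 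Collecting these factors reproduces exactly $Z_{\mu^+}Z_{\mu^-}$, so that
\[
\frac{\pd^{n+m}F^{(2)}}{\pd t_{\mu_1^+}^+\cdots\pd t_{\mu_n^+}^+\,\pd t_{\mu_1^-}^-\cdots\pd t_{\mu_m^-}^-}\bigg|_{\bm t=0}
= Z_{\mu^+}Z_{\mu^-}\cdot H^\circ(\mu^+,\mu^-;\beta).
\]

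Next I would feed these derivatives into the generating identity~\eqref{eq-mainthm-sumcycles} of Theorem~\ref{thm-mainthm-conn}. Because $|\mu^+|=|\mu^-|=d>0$ forces $n\geq1$ and $m\geq1$, the correction term $i_{z_1,z_2}\frac{\delta_{n,2}\delta_{m,0}+\delta_{n,0}\delta_{m,2}}{(z_1-z_2)^2}$ vanishes, and~\eqref{eq-mainthm-sumcycles} presents the whole two-variable series $\sum \frac{\pd^{n+m}F^{(2)}}{\cdots}\big|_{\bm t=0}\prod_a z_a^{-j_a-1}\prod_b z_{n+b}^{k_b-1}$ as $(-1)^{n+m-1}\sum_{(n+m)\text{-cycles}}\prod_{i=1}^{n+m} B^{(2)}_{\sigma(i),\sigma(i+1)}$. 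Extracting the coefficient of $\prod_{a=1}^n z_a^{-\mu_a^+-1}\prod_{b=1}^m z_{n+b}^{\mu_b^--1}$ selects precisely the term with $j_a=\mu_a^+$ and $k_b=\mu_b^-$, namely the quantity $Z_{\mu^+}Z_{\mu^-}H^\circ(\mu^+,\mu^-;\beta)$ identified above. Dividing by $Z_{\mu^+}Z_{\mu^-}$ then yields the stated formula~\eqref{eq-thm-hur-nmpt}.

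The only genuine work is the bookkeeping in the middle step: one must verify that the symmetry factors generated by differentiating repeated parts combine with the $j^{m_j}$ weights arising from $p_j=jt_j$ into exactly the product $Z_{\mu^+}Z_{\mu^-}$, with no stray constant left over. This is routine once the multiplicities $m_j(\mu^\pm)$ are tracked carefully, and beyond it the argument invokes nothing more than Theorem~\ref{thm-mainthm-conn}, Corollary~\ref{cor-twosides} (which guarantees the absence of spurious pure-$t^+$ or pure-$t^-$ contributions), and the defining expansion of $H^\circ$.
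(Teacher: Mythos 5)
Your proposal is correct and follows essentially the same route as the paper: specialize Theorem~\ref{thm-mainthm-conn} to $f^{(2)}(s)=\beta s^2/2$ (so that $\hat f^{(2)}=\beta K^{(2)}$ and the entries $B^{(2)}_{i,j}$ are the resulting specializations of $B_{i,j}$), use $p_j^\pm=jt_j^\pm$ to convert the power-sum expansion of $F^{(2)}$ into $\bm t$-derivatives at $\bm t=0$, which produces exactly the factor $Z_{\mu^+}Z_{\mu^-}$, and then extract the indicated coefficient from~\eqref{eq-mainthm-sumcycles}. The paper leaves this bookkeeping implicit and dismisses the $(n,m)=(2,0),(0,2)$ correction term via Corollary~\ref{cor-twosides} rather than by noting that $|\mu^+|=|\mu^-|=d>0$ forces $n,m\geq 1$, but these are trivial variants of the same argument.
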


\begin{Remark}
	Here we do not need to consider the extra terms $-i_{z_1,z_2}\frac{1}{(z_1-z_2)^2}$
	in Theorem~\ref{thm-mainthm-conn} for $(n,m) = (2,0)$ or $(0,2)$
	due to Corollary~\ref{cor-twosides}.
\end{Remark}

\begin{Remark}
	The evaluation of $A_{-f^{(2)}}(z,w)$ and $A_{f^{(2)(-\cdot)}}(z,w)$ at $z=w=1$ are some theta constants:
	\begin{equation*}
		A_{-f^{(2)}}(1,1) = \frac{1}{2}\vartheta(q), \qquad
		A_{f^{(2)(-\cdot)}}(1,1) = \frac{1}{2}\vartheta\big(q^{-1}\big),
	\end{equation*}
	where $q={\rm e}^{-\frac{1}{2}\beta}$,
	and $\vartheta(q)=\sum_{n\in \bZ} q^{(n+1/2)^2}$
	is the evaluation of the theta function $\vartheta_{1,0}(z,q)$ at~${z=0}$.
	As a result,
	the formula for the generating series of the connected $(1,1)$-point functions in Example~\ref{exa: (1,1)} gives
	\begin{equation*}
		\sum_{j,k\geq1}jk \cdot H^\circ\big((j),(k);\beta\big)
		=\frac{1}{4} \vartheta(q) \vartheta\big(q^{-1}\big).
	\end{equation*}
\end{Remark}

In the rest of this subsection,
we give some examples of the formula~\eqref{eq-thm-hur-nmpt} for the ordinary double Hurwitz numbers.
\begin{Example}
	The first non-trivial case is $(n,m) = (1,1)$.
	Then $\mu^+ =\big(\mu^+_1\big)$ and $\mu^-=(\mu^-_1)$ are both of length one,
	and by~\eqref{eq-Af-(2)} we have
	\begin{align*}
			 H^\circ \big(\big(\mu^+_1\big),(\mu^-_1);\beta \big)
		 &{}= \frac{1}{\mu^+_1\mu^-_1}
			\operatorname{Coeff}_{z_1^{-\mu^+_1-1} z_2^{\mu_1^- -1}}
			\big( A_{-f^{(2)}}(z_1,z_2) A_{f^{(2)}(-\cdot)} (z_1,z_2) \big)\\
			&{}= \delta_{\mu_1^+,\mu_1^-} \cdot \frac{1}{\big(\mu_1^+\big)^2}\sum_{k=0}^{\mu_1^+-1}
			{\rm e}^{-\half \beta \mu_1^+ (\mu_1^+ -1-2k)}.
	\end{align*}
	More concretely,
	we have
	\begin{gather*}
			\sum_{b} H^\circ_g ((1),(1))\frac{\beta^b}{b!} = 1,\\
			\sum_{b} H^\circ_g ((2),(2))\frac{\beta^b}{b!} = \frac{{\rm e}^{-\beta} + {\rm e}^{\beta}}{4}
			= \frac{1}{2} + \frac{\beta^2}{4} + \frac{\beta^4}{48}
			+ \frac{\beta^6}{1440} + \frac{\beta^8}{80640}+ \cdots,\\
			\sum_{b} H^\circ_g ((3),(3))\frac{\beta^b}{b!} = \frac{{\rm e}^{-3\beta}+1 + {\rm e}^{3\beta}}{9}
			= \frac{1}{3} + \beta^2 + \frac{3}{4} \beta^4 + \frac{9}{40}\beta^6
			+\cdots.
	\end{gather*}
	In general,
	for a positive integer $u$, we have
	\[
 \sum_{b} H^\circ_g ((u),(u))\frac{\beta^b}{b!}
	= \frac{1}{u^2} {\rm e}^{-\half \beta u (u-1)} \sum_{k=0}^{u-1} ({\rm e}^{\beta u})^k
	= \frac{{\rm e}^{\half \beta u^2} - {\rm e}^{-\half \beta u^2}}{u^2 \big({\rm e}^{ \half \beta u}-{\rm e}^{-\half \beta u}\big)}.
	\]
\end{Example}

\begin{Example}
	For $(n,m)=(2,1)$, we have
	\begin{gather*}
			 H^\circ \big(\big(\mu_1^+,\mu_2^+\big),(\mu_1^-);\beta\big)\\
			\qquad{}= \frac{1}{Z_{\mu^+}Z_{\mu^-}}
			\operatorname{Coeff}_{z_1^{-\mu_1^+-1} z_2^{-\mu_2^+-1} z_3^{\mu_1^--1}}\biggl(
			- \frac{i_{z_1,z_2}}{z_1-z_2} \cdot A_{-f^{(2)}}(z_2,z_3) A_{f^{(2)}(-\cdot)}(z_1,z_3)\\
			 \qquad\quad{} + A_{-f^{(2)}}(z_1,z_3) A_{f^{(2)}(-\cdot)}(z_2,z_3) \cdot \frac{i_{z_1,z_2}}{z_1-z_2}\biggr)\\
			\qquad{}= \frac{\delta_{\mu_1^+ + \mu_2^+ , \mu_1^-}}{\big(1+\delta_{\mu_1^+,\mu_2^+}\big) \mu_1^+ \mu_2^+ \mu_1^- }
			\left( -\sum_{k= \mu_2^+}^{\mu_1^--1}{\rm e}^{\half \beta \mu_1^- (\mu_1^- -1-2k)}
			+ \sum_{k= \mu_2^+}^{\mu_1^- -1} {\rm e}^{-\half \beta\mu_1^- (\mu_1^- -1-2k)} \right).
	\end{gather*}
	For example,
	\begin{gather*}
			 \sum_{b} H^\circ_g ((1,1),(2))\frac{\beta^b}{b!} =
			\frac{{\rm e}^\beta - {\rm e}^{-\beta}}{4} =
			\frac{\beta}{2} + \frac{\beta^3}{12} + \frac{\beta^5}{240} + \frac{\beta^7}{10080}
			+ \frac{\beta^9}{725760}+\cdots,\\
			 \sum_{b} H^\circ_g ((2,1),(3))\frac{\beta^b}{b!} =
			\frac{{\rm e}^{3\beta} - {\rm e}^{-3\beta}}{6} =
			\beta + \frac{3}{2} \beta^3 + \frac{27 }{40}\beta^5 + \frac{81 }{560}\beta^7
			+\cdots,\\
			 \sum_{b} H^\circ_g ((3,1),(4))\frac{\beta^b}{b!} =
			\frac{{\rm e}^{6\beta} - {\rm e}^{-6\beta}}{12} =
			\beta + 6 \beta^3 + \frac{54 }{5}\beta^5 + \frac{324 }{35}\beta^7 +\cdots, \\
			 \sum_{b} H^\circ_g ((2,2),(4))\frac{\beta^b}{b!} =
			\frac{{\rm e}^{6\beta} +{\rm e}^{2\beta} -{\rm e}^{-2\beta} - {\rm e}^{-6\beta}}{32} =
			\frac{\beta}{2} + \frac{7 }{3}\beta^3 + \frac{61 }{15}\beta^5 +\cdots.
	\end{gather*}
	In general, for integers $u\geq v\geq 1$, we have
	\begin{align*}
		\sum_{b} H^\circ_g ((u,v),(u+v))\frac{\beta^b}{b!}
		&{}= \frac{-\sum_{k= v}^{u+v-1}{\rm e}^{\half \beta (u+v) (u+v -1-2k)}
			+ \sum_{k= v}^{u+v -1} {\rm e}^{-\half \beta (u+v) (u+v -1-2k)}}
		{(1+\delta_{u,v}) uv(u+v) \cdot
			\big({\rm e}^{\half \beta (u+v)} - {\rm e}^{-\half \beta (u+v)}\big) } \\
		&{}= \frac{{\rm e}^{\half \beta (u+v)^2} + {\rm e}^{-\half \beta (u+v)^2}
			- {\rm e}^{\half \beta (v^2-u^2)} -{\rm e}^{\half \beta (u^2-v^2)}}
		{(1+\delta_{u,v}) uv(u+v) \cdot
			\big({\rm e}^{\half \beta (u+v)} - {\rm e}^{-\half \beta (u+v)}\big) }.
	\end{align*}	
 In particular, when $u=v$, we have
	\[	
 \sum_{b} H^\circ_g ((u,u),(2u))\frac{\beta^b}{b!}
	= \frac{{\rm e}^{2\beta u^2} + {\rm e}^{-2\beta u^2} -2}
	{4u^3 \big({\rm e}^{\beta u} - {\rm e}^{-\beta u}\big)}.
	\]
\end{Example}

\begin{Example}
	Now we consider the case $(n,m)= (2,2)$.
	Denote $\mu^+ = (u_1,u_2)$ and $\mu^- = (v_1,v_2)$.
	By plugging the expressions~\eqref{eq-Af-(2)} of $A_{-f^{(2)}}$ and $A_{f^{(2)}(-\cdot)}$
	into~\eqref{eq-f-(2,2)}
	(see also Example~\ref{ex-conn-4pt}),
	we obtain the following
	(here we omit the details of computations):
	\begin{gather*}
 \sum_{b} H_g((u_1,u_2),(v_1,v_2)) \frac{\beta^b}{b!}
		\\
\qquad{}= \frac{1}{Z_{\mu^+} Z_{\mu^-}} \frac{\sinh\big(\frac{\beta d^2}{2}\big) + \sinh\big(\frac{\beta d(v_2-v_1)}{2}\big) +
			\sinh\big(\frac{\beta (2v_1u_2-d^2)}{2}\big) + \sinh \big(\frac{\beta(2u_1v_1 -d^2)}{2}\big)}
		{\sinh \big(\half \beta d\big)},
	\end{gather*}
 where
	\begin{equation*}
		\frac{1}{Z_{\mu^+} Z_{\mu^-}} =
		\frac{1}{(1+\delta_{u_1,u_2})(1+\delta_{v_1,v_2})u_1u_2v_1v_1}.
	\end{equation*}
	In particular,
	when $u_1 = v_1$ and $u_2=v_2$
	(i.e., on the walls in the sense of~\cite{gjv, jo})
	we have
	\begin{gather*}
		\sum_{b} H_g((u_1,u_2),(u_1,u_2)) \frac{\beta^b}{b!}		\\
\!\quad{}= \frac{1}{(1+\delta_{u_1,u_2})^2 u_1^2 u_2^2} \frac{\sinh\big(\frac{\beta d^2}{2}\big) + \sinh\big(\frac{\beta d(u_2-u_1)}{2}\big) +
			\sinh\bigr(-\frac{\beta (u_1^2+u_2^2)}{2}\bigl) + \sinh \big(\frac{\beta(2u_1^2 -d^2)}{2}\big)}
		{\sinh \big(\half \beta d\big)}.
	\end{gather*}
\end{Example}

\subsection[Connected double Hurwitz numbers with completed r-cycles]{Connected double Hurwitz numbers with completed $\boldsymbol{r}$-cycles}

In this subsection,
we consider connected double Hurwitz numbers where the simple ramification type $\big(2,1^{d-2}\big)$
is replaced by the completed $r$-cycle.
For an introduction to double Hurwitz numbers with completed $r$-cycles
and the relation to the Gromov--Witten theory of $\bC\bP^1$,
see Okounkov-Pandharipande~\cite{op}.
See also Shadrin--Spitz--Zvonkine~\cite{ssz}.

The generating series of possibly disconnected double Hurwitz numbers
with completed $r$-cycles is the following tau-function
\big(see~\cite[equations~(31) and (33)]{ssz},
and notice here our notation $H_g^{(r)\bullet}(\mu^+,\mu^-)$ differs
from the notation $h_{g,\mu^+,\mu^-}^{(r)}$ in~\cite{ssz}
by an additional factor $l (\mu^+ )!\cdot l(\mu^-)!$\big):
\begin{gather*}
	\tau^{(r)}(\bm t^+,\bm t^-; \beta)
	=
	\sum_{g,\mu^+,\mu^-} \frac{\beta^b p^+_{\mu^+} p^-_{\mu^-} H_g^{(r)\bullet}(\mu^+,\mu^-)}{b!}
	=
	\big\langle 0 \big| \Gamma_+(\bm t^+) {\rm e}^{\beta K^{(r)}} \Gamma_- (\bm t^-) \big|0\big\rangle,
\end{gather*}
where the number $b$ is determined by
\begin{equation*}
	b = \big(2g-2+ l(\mu^+) +l(\mu^-) \big)/r,
\end{equation*}
and $K^{(r)}$ is the following operator on the fermionic Fock space:
\begin{equation}
\label{eq-def-C&J-r}
K^{(r)} = \sum_{s\in \bZ+\half} \frac{s^r}{r!} {:}\psi_s \psi_{-s}^*{:}.
\end{equation}
In this case,
the function~\eqref{eq-def-functionf} is taken to be
\[
f^{(r)}(s) = \beta \cdot \frac{s^r}{r!},
\qquad \forall s\in \bZ+\half,
\]
then the operator~\eqref{eq-def-hatf} in this case is $\hat f^{(r)} = \beta K^{(r)}$.
Similar to~\eqref{eq-thm-hur-nmpt},
we have
\begin{Theorem}
	For two partitions $\mu^+ = (\mu_1^+,\dots,\mu_n^+ )$, $\mu^- = (\mu_1^-,\dots,\mu_m^-)$
	with $|\mu^+| = |\mu^-|$,
	the connected double Hurwitz numbers with completed $r$-cycles are
	\begin{gather*}
		 \sum_b \frac{\beta^b}{b!}
		H_g^{(r)\circ} (\mu^+,\mu^-)\\
\qquad{}		=
		\frac{1}{Z_{\mu^+}Z_{\mu^-}}
		\operatorname{Coeff}_{\prod_{a=1}^n z_a^{-\mu_a^+ -1} \prod_{b=1}^m z_{n+b}^{\mu_b^- -1}}
		\Bigg[
		(-1)^{n+m-1} \sum_{\text{$(n+m)$-{\rm cycles}}}
		\prod_{i=1}^m B^{(r)}_{\sigma(i),\sigma(i+1)}
		\Bigg],
	\end{gather*}
 where
	\begin{equation*}
		B_{i,j}^{(r)} = \begin{cases}
		\displaystyle	i_{z_i,z_j} \frac{1}{z_i-z_j} & \text{if} \ i< j\leq n \ \text{or} \ n<i< j,\\
		\displaystyle	\sum\limits_{k=0}^\infty
			{\rm e}^{- \beta (k+\half)^r /r!}\cdot z_i^{-k-1} z_j^k
			& \text{if} \ i\leq n<j,\\
		\displaystyle	i_{z_j,z_i} \frac{1}{z_i-z_j} & \text{if} \ j<i\leq n \ \text{or} \ n<j< i,\\
		\displaystyle	-\sum\limits_{k=0}^\infty {\rm e}^{\beta (-k-\half)^r /r!}
			\cdot z_j^{-k-1} z_i^k & \text{if} \ j\leq n<i.
		\end{cases}
	\end{equation*}
\end{Theorem}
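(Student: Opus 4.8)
The plan is to recognize $\tau^{(r)}(\bm t^+,\bm t^-;\beta)$ as a diagonal tau-function and then apply Theorem~\ref{thm-mainthm-conn} verbatim, exactly as in the proof of the ordinary case~\eqref{eq-thm-hur-nmpt}. First I would observe that, by the definition~\eqref{eq-def-C&J-r} of $K^{(r)}$, the operator $\hat f^{(r)} = \beta K^{(r)}$ is precisely of the form~\eqref{eq-def-hatf} with $f = f^{(r)}(s) = \beta s^r/r!$; hence $\tau^{(r)}$ is the diagonal tau-function $\tau_{f^{(r)}}$, and its free energy $F^{(r)} = \log\tau^{(r)}$ is the generating series of the connected double Hurwitz numbers with completed $r$-cycles.

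The next step is to translate the derivatives of $F^{(r)}$ into the Hurwitz numbers $H_g^{(r)\circ}$. Using the substitution $t_n^\pm = \frac{1}{n}p_n^\pm$, so that $p_{\mu^+}^+ = \prod_{a} \mu_a^+ t_{\mu_a^+}^+$ and likewise for $\mu^-$, I would compute that for $\mu^+ = (\mu_1^+,\dots,\mu_n^+)$ and $\mu^- = (\mu_1^-,\dots,\mu_m^-)$ the mixed derivative is
\[
\frac{\pd^{m+n} F^{(r)}}{\pd t_{\mu_1^+}^+ \cdots \pd t_{\mu_n^+}^+ \pd t_{\mu_1^-}^- \cdots \pd t_{\mu_m^-}^-}\bigg|_{\bm t = 0}
= Z_{\mu^+} Z_{\mu^-} \sum_b \frac{\beta^b}{b!} H_g^{(r)\circ}(\mu^+,\mu^-),
\]
the factors $Z_{\mu^\pm} = \prod_j m_j(\mu^\pm)!\, j^{m_j(\mu^\pm)}$ arising from differentiating the repeated variables and evaluating at the origin. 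This is the same bookkeeping as in the ordinary case; the only point requiring a little care is tracking the symmetry factors $m_j(\mu)!$ when parts of $\mu^\pm$ coincide.

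With this identification in hand, I would simply apply Theorem~\ref{thm-mainthm-conn} with $f = f^{(r)}$, which expresses the above derivative as the coefficient of $\prod_a z_a^{-\mu_a^+ - 1}\prod_b z_{n+b}^{\mu_b^- - 1}$ in $(-1)^{n+m-1}\sum_{(n+m)\text{-cycles}}\prod_i B_{\sigma(i),\sigma(i+1)}$. The matrix entries $B^{(r)}_{i,j}$ are then read off by substituting $f^{(r)}$ into~\eqref{eq-expvalue-psif-1} and~\eqref{eq-expvalue-psif-2}: one has $A_0(z_i,z_j) = i_{z_i,z_j}\frac{1}{z_i-z_j}$, while $A_{-f^{(r)}}(z_i,z_j) = \sum_{k\geq 0}{\rm e}^{-\beta(k+\half)^r/r!}z_i^{-k-1}z_j^k$ and $A_{f^{(r)}(-\cdot)}(z_j,z_i) = \sum_{k\geq 0}{\rm e}^{\beta(-k-\half)^r/r!}z_j^{-k-1}z_i^k$, which yields exactly the four claimed cases after dividing through by $Z_{\mu^+}Z_{\mu^-}$.

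Since the argument reduces to a direct application of an already-established theorem, there is no serious obstacle; the only thing to dispose of is the extra term $-i_{z_1,z_2}\frac{1}{(z_1-z_2)^2}$ that appears in~\eqref{eq-mainthm-sumcycles} for $(n,m) = (2,0)$ or $(0,2)$. I would handle it exactly as in the ordinary case: by Corollary~\ref{cor-twosides} the pure derivatives in the $\bm t^+$-variables alone, and in the $\bm t^-$-variables alone, all vanish at $\bm t = 0$, so only the cases with both $n>0$ and $m>0$ contribute, and for those the correction term is absent.
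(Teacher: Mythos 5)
Your proposal is correct and follows essentially the same route as the paper: the paper likewise sets $f^{(r)}(s)=\beta s^r/r!$ so that $\hat f^{(r)}=\beta K^{(r)}$, identifies $\tau^{(r)}$ as the diagonal tau-function $\tau_{f^{(r)}}$, and then applies Theorem~\ref{thm-mainthm-conn} exactly as in the ordinary case~\eqref{eq-thm-hur-nmpt}, including the bookkeeping with $Z_{\mu^\pm}$ and the disposal of the $(n,m)=(2,0),(0,2)$ correction term via Corollary~\ref{cor-twosides}.
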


\subsection{Connected mixed double Hurwitz numbers}

In this subsection,
we apply our formula to the mixed double Hurwitz numbers.

The mixed double Hurwitz numbers are introduced by Goulden, Guay-Paquet, and Novak~\cite{ggn}.
These numbers interpolate combinatorially between the ordinary double Hurwitz numbers
and the monotone double Hurwitz numbers~\cite{ggn2},
and are related to combinatorial aspects of Cayley graph of the symmetric groups $S_n$.
Moreover,
those authors showed that the generating series of these
(disconnected)
mixed double Hurwitz numbers
is a diagonal tau-function solution to the 2d Toda hierarchy.
See~\cite{ggn} for details of the constructions and notations.

Let $\mu^\pm$ be two partitions of an integer $d$,
and let $k,l\geq 0$ be two integers.
Denote by $W^{\bullet k,l}(\mu^+,\mu^-)$ the possibly disconnected mixed double Hurwitz number
indexed by $k$, $l$ and $\mu^\pm$,
and let
\begin{equation*}
	W^\bullet (t,u;\bm t^+,\bm t^- )
	= 1+\sum_{d=1}^\infty \frac{1}{d!}
	\sum_{k,l=0}^\infty \frac{t^k u^l}{l!}
	\sum_{|\mu^\pm|=d}
	W^{\bullet k,l}(\mu^+,\mu^-) p^+_{\mu^+}p^-_{\mu^-},
\end{equation*}
where $p_n^\pm = n\cdot t_n^\pm$.
If one regards $p_n$ as the Newton symmetric function of degree $n$,
then one has the following Schur function expansion
(see~\cite[Section~2]{ggn}):
\[
W^\bullet(t,u;\bm t^+,\bm t^-) =
\sum_{\lambda} Y(\lambda) s_\lambda^+ s_{\lambda}^-,
\]
where the summation is taken over all partitions
(or equivalently, all Young diagrams) $\lambda$,
and
\[
Y(\lambda) = \prod_{ \Box \in \lambda}
\frac{ {\rm e}^{c(\Box) u}}{1-c(\Box) t}.
\]
Here $\Box \in \lambda$ is a box in the Young diagram,
and $c(\Box)$ is the content of this box,
i.e.,
if $\Box$ is in the $i$-th row and $j$-th column,
then $c(\Box) = j-i$.

\begin{Remark}
	When $k=0$, the mixed double Hurwitz numbers are reduced to the ordinary double Hurwitz numbers.
	And when $l=0$, the mixed double Hurwitz numbers are reduced to the monotone double Hurwitz numbers introduced in~\cite{ggn2}.
	See~\cite{ggn} for details.
\end{Remark}

\begin{Lemma}
	Let $f^{\rm mix}\colon \bZ+\half \to \bC$ be the following function:
	\begin{equation*}
		f^{\rm mix} (s) =\begin{cases}
		\displaystyle	\frac{s^2}{2}u - \log \prod\limits_{j=1}^{-s-\half} (1-jt)
			& \text{if} \ s<0,\\
		\displaystyle	\frac{s^2}{2}u + \log \prod\limits_{j=1}^{s-\half} (1+jt)
			& \text{if} \ s>0,
		\end{cases}
	\end{equation*}
	then we have
	\begin{equation*}
		W^\bullet (t,u;\bm t^+,\bm t^-) =
		\big\langle 0\big| \Gamma_+(\bm t^+) \exp\big(\hf^{\rm mix}\big) \Gamma_-(\bm t^-) \big|0\big\rangle.
	\end{equation*}
\end{Lemma}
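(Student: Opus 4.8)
The plan is to expand both sides in the Schur basis and match coefficients term by term. First I would recall that, exactly as in the computation preceding Theorem~\ref{thm-KPaffine-diag}, the operator $\exp\big(\hf\big)$ acts diagonally on the basis $\{|\mu\rangle\}$ of $\cF^{(0)}$: if $\mu = (m_1,\dots,m_k \mid n_1,\dots,n_k)$ is the Frobenius notation, then
\[
\exp\big(\hf\big)|\mu\rangle = \exp\bigg( \sum_{j=1}^k f(-m_j-\half) - \sum_{j=1}^k f(n_j+\half) \bigg) |\mu\rangle .
\]
Combining this with~\eqref{eq-Y--expansion} and the adjoint boson--fermion correspondence $\lvac \Gamma_+(\bm t^+)|\mu\rangle = s_\mu(\bm t^+)$ (which follows since $\Gamma_+(\bm t^+)$ and $\Gamma_-(\bm t^+)$ are adjoint and the basis $\{|\mu\rangle\}$ is orthonormal) yields the diagonal Schur expansion
\[
\lvac \Gamma_+(\bm t^+) \exp\big(\hf\big) \Gamma_-(\bm t^-) \vac = \sum_\mu e_\mu^f\, s_\mu(\bm t^+)\, s_\mu(\bm t^-),
\]
where $e_\mu^f$ denotes the eigenvalue above. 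Since $p_n^\pm = n\,t_n^\pm$ identifies $s_\mu(\bm t^\pm)$ with $s_\mu^\pm$, the lemma reduces to the purely combinatorial identity $e_\mu^{f^{\rm mix}} = Y(\mu)$ for every partition $\mu$.

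The key structural input is the principal-hook decomposition of the content multiset. For $\mu = (m_1,\dots,m_k \mid n_1,\dots,n_k)$ the boxes of the Young diagram are partitioned by the principal hooks indexed by the diagonal boxes; the $j$-th hook consists of its corner (content $0$), arm boxes of contents $1,2,\dots,m_j$, and leg boxes of contents $-1,-2,\dots,-n_j$, in agreement with the convention for hooks $(m\mid n)$ recorded just before Proposition~\ref{prop-restri-t-=1}. Hence
\[
Y(\mu) = \prod_{j=1}^k \bigg( \prod_{c=1}^{m_j} \frac{{\rm e}^{cu}}{1-ct} \bigg) \bigg( \prod_{c=1}^{n_j} \frac{{\rm e}^{-cu}}{1+ct} \bigg),
\]
the corner boxes contributing the trivial factor $1$.

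Finally I would evaluate $f^{\rm mix}$ at the relevant half-integers: from the definition, $f^{\rm mix}(n_j+\half) = \frac{(n_j+\half)^2}{2}u + \log\prod_{l=1}^{n_j}(1+lt)$ and $f^{\rm mix}(-m_j-\half) = \frac{(m_j+\half)^2}{2}u - \log\prod_{l=1}^{m_j}(1-lt)$, so that
\[
e_\mu^{f^{\rm mix}} = \prod_{j=1}^k \exp\bigg( \frac{(m_j+\half)^2 - (n_j+\half)^2}{2}\, u \bigg) \cdot \frac{1}{\prod_{l=1}^{m_j}(1-lt)\,\prod_{l=1}^{n_j}(1+lt)} .
\]
The $t$-dependent factors already coincide with those of $Y(\mu)$. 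For the $u$-dependent factors I would use the elementary identity $\frac{(m+\half)^2}{2} = \frac{m(m+1)}{2} + \frac18$ (and likewise for $n$), whereupon the two $\frac18$'s cancel and $\frac{(m_j+\half)^2 - (n_j+\half)^2}{2} = \sum_{c=1}^{m_j} c - \sum_{c=1}^{n_j} c$, matching $\prod_c {\rm e}^{cu}\prod_c {\rm e}^{-cu}$. Thus $e_\mu^{f^{\rm mix}} = Y(\mu)$ and the two Schur expansions agree. The only genuine obstacle is the content bookkeeping of the second step---recognizing that the principal hooks tile the diagram and reproduce exactly the content multiset, so that the product over all boxes factors through the Frobenius coordinates; once that is in hand, the matching of exponents is the routine cancellation indicated above.
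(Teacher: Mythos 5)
Your proof is correct and follows essentially the same route as the paper's: expand both sides in the Schur basis via $\Gamma_-(\bm t^-)|0\rangle = \sum_\mu s_\mu(\bm t^-)|\mu\rangle$, compute the eigenvalue of $\exp\big(\hf^{\rm mix}\big)$ on $|\mu\rangle$ from the conjugation formula~\eqref{eq-conj-f-psi} in Frobenius coordinates, and match it against the principal-hook factorization of $Y(\mu)$. The only difference is that you spell out what the paper compresses into ``one easily checks'' --- the content-multiset decomposition along principal hooks and the cancellation of the $\tfrac18$ terms in the $u$-exponents --- which is a faithful completion of the same argument rather than a new one.
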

\begin{proof}
	Since $\Gamma_-(\bm t^-)|0\rangle = \sum_\mu s_\mu |\mu\rangle$ (see~\eqref{eq-Y--expansion}),
	we only need to prove
	\begin{equation}	
 \label{eq-mix-toprove}
	\exp\big(\hf^{\rm mix}\big) |\mu\rangle = Y(\mu) |\mu\rangle
	\end{equation}	
 for every partition $\mu$,
	where $|\mu\rangle$ is the vector~\eqref{eq-cFbasis-mu}.
	
	Assume $\mu=(m_1,\dots, m_k | n_1,\dots,n_k)$,
	then by ${\rm e}^{-\hf^{\rm mix}} |0\rangle = |0\rangle$ and~\eqref{eq-conj-f-psi},
	\begin{align*}
			\exp\big(\hf^{\rm mix}\big) |\mu\rangle
			={}& (-1)^{\sum_{i=1}^k n_i} \cdot
			\big({\rm e}^{\hf^{\rm mix}} \psi_{-m_1-\half} {\rm e}^{-\hf^{\rm mix}}\big)
			\big({\rm e}^{\hf^{\rm mix}} \psi_{-n_1-\half}^* {\rm e}^{-\hf^{\rm mix}}\big) \\
			&{} \cdots ({\rm e}^{\hf^{\rm mix}} \psi_{-m_k-\half} {\rm e}^{-\hf^{\rm mix}})
			\big({\rm e}^{\hf^{\rm mix}} \psi_{-n_k-\half}^* {\rm e}^{-\hf^{\rm mix}}\big) |0\rangle\\
			={}&
			{\rm e}^{\sum_{i=1}^k f^{\rm mix}(-m_i-\half)
				-\sum_{j=1}^k f^{\rm mix}(n_j+\half)} |\mu\rangle.
		\end{align*}
	Notice that
	\begin{equation*}
			Y(\mu) = \prod_{l=1}^k
			\Bigg(\prod_{i=1}^{m_l}\frac{{\rm e}^{iu}}{1-it}\Bigg)
			\Bigg(\prod_{j=1}^{n_l}\frac{{\rm e}^{-ju}}{1+jt}\Bigg),
	\end{equation*}
	and now one easily checks that~\eqref{eq-mix-toprove} holds.
\end{proof}

Then by Theorem~\ref{thm-mainthm-conn}, we have
\begin{Theorem}
	For two partitions $\mu^+ = (\mu_1^+,\dots,\mu_n^+)$, $\mu^- = (\mu_1^-,\dots,\mu_m^-)$
	with ${|\mu^+|\! = \! |\mu^-| \!=\!d}$,
	the connected mixed double Hurwitz numbers are given by:
	\begin{gather*}
		\frac{1}{d!}
		\sum_{k,l=0}^\infty \frac{t^k u^l}{l!}
		\sum_{|\mu^\pm|=d}
		W^{\circ k,l}(\mu^+,\mu^-)
\\
\qquad{}
= \frac{1}{Z_{\mu^+} Z_{\mu^-}}		\operatorname{Coeff}_{\prod_{a=1}^n z_a^{-\mu_a^+ -1} \prod_{b=1}^m z_{n+b}^{\mu_b^- -1}}
		\Bigg[
		(-1)^{n+m-1} \sum_{\text{$(n+m)$-{\rm cycles}}}
		\prod_{i=1}^m B^{{\rm mix}}_{\sigma(i),\sigma(i+1)}
		\Bigg],
	\end{gather*}	
 where
	\begin{equation*}
		B_{i,j}^{{\rm mix}} = \begin{cases}
		\displaystyle	i_{z_i,z_j} \frac{1}{z_i-z_j} & \text{if} \ i< j\leq n \ \text{or} \ n<i< j,\\
		\displaystyle	\sum\limits_{k=0}^\infty
			\bigg( {\rm e}^{-\frac{k(k+1)}{2}u}\cdot \prod\limits_{j=1}^k \frac{1}{1+jt} \bigg)
			z_i^{-k-1} z_j^k,
			& \text{if} \ i\leq n<j,\\
		\displaystyle	i_{z_j,z_i} \frac{1}{z_i-z_j}, & \text{if} \ j<i\leq n \ \text{or} \ n<j< i,\\
		\displaystyle	-\sum\limits_{k=0}^\infty
			\bigg( {\rm e}^{\frac{k(k+1)}{2}u}\cdot \prod\limits_{j=1}^k \frac{1}{1-jt} \bigg)
			\cdot z_j^{-k-1} z_i^k, & \text{if} \ j\leq n<i.
		\end{cases}
	\end{equation*}
\end{Theorem}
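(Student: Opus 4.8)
The plan is to obtain this theorem as a specialization of the main formula, Theorem~\ref{thm-mainthm-conn}, to the function $f=f^{\rm mix}$. By the preceding Lemma,
\[
W^\bullet(t,u;\bm t^+,\bm t^-)=\big\langle 0\big|\Gamma_+(\bm t^+)\exp\big(\hf^{\rm mix}\big)\Gamma_-(\bm t^-)\big|0\big\rangle=\tau_{f^{\rm mix}}(\bm t^+,\bm t^-),
\]
so $W^\bullet$ is a diagonal tau-function of the form~\eqref{eq-def-tauf} with $f=f^{\rm mix}$, and its free energy is $F_{f^{\rm mix}}=\log\tau_{f^{\rm mix}}$. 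First I would note that passing to $\log W^\bullet$ turns the possibly disconnected counts $W^{\bullet k,l}$ into the connected counts $W^{\circ k,l}$ (the exponential formula), which is exactly the effect of the M\"obius inversion used to define the connected $(n,m)$-point functions; hence the coefficients of $F_{f^{\rm mix}}$ are the connected mixed double Hurwitz numbers.

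Next I would carry out the coefficient bookkeeping precisely as in the ordinary case~\eqref{eq-thm-hur-nmpt}. Since $t_n^\pm=\frac1n p_n^\pm$, differentiating $F_{f^{\rm mix}}$ in $t_{j_1}^+\cdots t_{j_n}^+\,t_{k_1}^-\cdots t_{k_m}^-$ at $\bm t=0$ and then reading off the coefficient of $\prod_a z_a^{-\mu_a^+-1}\prod_b z_{n+b}^{\mu_b^--1}$ recovers the connected number carrying the usual symmetry normalization $1/(Z_{\mu^+}Z_{\mu^-})$. Because $|\mu^+|=|\mu^-|=d\ge1$ forces $n,m\ge1$, the exceptional term $-i_{z_1,z_2}\frac{\delta_{n,2}\delta_{m,0}+\delta_{n,0}\delta_{m,2}}{(z_1-z_2)^2}$ in Theorem~\ref{thm-mainthm-conn} never contributes (and the one-sided derivatives vanish by Corollary~\ref{cor-twosides}), so the right-hand side reduces to the single cycle sum.

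It then remains to evaluate the entries $B_{i,j}$ of~\eqref{eq-matrixcoeff-bij} for $f=f^{\rm mix}$. The same-block entries are the geometric series $i_{z_i,z_j}\frac1{z_i-z_j}$, exactly as in~\eqref{eq-thm-hur-nmpt}, while the cross-block entries are governed by
\[
{\rm e}^{-f^{\rm mix}(k+\half)}=\frac{{\rm e}^{-(k+\half)^2u/2}}{\prod_{j=1}^k(1+jt)},
\qquad
{\rm e}^{f^{\rm mix}(-k-\half)}=\frac{{\rm e}^{(k+\half)^2u/2}}{\prod_{j=1}^k(1-jt)},
\qquad k\ge0,
\]
which are read directly from the definition of $f^{\rm mix}$ and yield $A_{-f^{\rm mix}}$ and $A_{f^{\rm mix}(-\cdot)}$.

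The only genuinely delicate point, and the step I expect to require the most care, is reconciling these with the stated entries $B^{\rm mix}_{i,j}$, which carry ${\rm e}^{\mp k(k+1)u/2}$ in place of ${\rm e}^{\mp(k+\half)^2u/2}$. Writing $(k+\half)^2=k(k+1)+\frac14$, each $A_{-f^{\rm mix}}$ entry differs from the stated one by a constant factor ${\rm e}^{-u/8}$ and each $A_{f^{\rm mix}(-\cdot)}$ entry by the reciprocal ${\rm e}^{+u/8}$. The point is that in any $(n+m)$-cycle the number of transitions from the block $\{1,\dots,n\}$ into $\{n+1,\dots,n+m\}$ equals the number of reverse transitions, so each cyclic product $\prod_i B_{\sigma(i),\sigma(i+1)}$ contains equally many factors of each cross-block type; these constants therefore cancel in pairs, and the cycle sum is unchanged upon replacing $B_{i,j}$ by $B^{\rm mix}_{i,j}$. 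Once this balancing of block-crossings is verified, the theorem follows, everything else being a direct specialization of results already proved.
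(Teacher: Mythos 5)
Your proposal is correct and takes essentially the same route as the paper: identify $W^\bullet(t,u;\bm t^+,\bm t^-)$ with the diagonal tau-function $\tau_{f^{\rm mix}}$ via the preceding Lemma, then specialize Theorem~\ref{thm-mainthm-conn} and read off coefficients with the $1/(Z_{\mu^+}Z_{\mu^-})$ normalization, the delta-terms being irrelevant since $n,m\geq 1$. The paper states the entries $B^{\rm mix}_{i,j}$ with exponents $k(k+1)/2$ without further comment, so your explicit check that the constant factors ${\rm e}^{\mp u/8}$ arising from $\big(k+\half\big)^2=k(k+1)+\frac{1}{4}$ cancel in every cyclic product --- because each $(n+m)$-cycle crosses from $\{1,\dots,n\}$ to $\{n+1,\dots,n+m\}$ exactly as many times as it crosses back --- correctly supplies the one step the paper leaves implicit.
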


\begin{Example}
	For $(n,m)=(1,1)$, one has
	\begin{equation*}
		\begin{split}
			&\frac{1}{d!}
			\sum_{k,l=0}^\infty \frac{t^k u^l}{l!}
			W^{\circ k,l}((d),(d))
			= \frac{1}{d^2} \sum_{a=0}^{d-1}
			\frac{{\rm e}^{\frac{u}{2}d(d-1-2a)}}
			{\prod_{i=1}^{d-1-a}(1-it) \cdot \prod_{j=1}^a (1+jt)}.
		\end{split}
	\end{equation*}
	For example,
	\begin{gather*}
			 \sum_{k,l=0}^\infty \frac{t^k u^l}{l!}
			W^{\circ k,l}((1),(1)) =1,\\
			 \sum_{k,l=0}^\infty \frac{t^k u^l}{l!}
			W^{\circ k,l}((2),(2)) =\half \bigg(
			\frac{{\rm e}^u}{1-t} + \frac{{\rm e}^{-u}}{1+t}
			\bigg),\\
			 \sum_{k,l=0}^\infty \frac{t^k u^l}{l!}
			W^{\circ k,l}((3),(3)) = \frac{2}{3} \bigg(
			\frac{{\rm e}^{3u}}{(1-t)(1-2t)} + \frac{1}{(1-t)(1+t)}
			+ \frac{{\rm e}^{-3u}}{(1+t)(1+2t)}
			\bigg).
	\end{gather*}
\end{Example}

\begin{Example}
	For $(n,m)=(2,1)$, one has
	\begin{gather*}
			\frac{1}{(a+b)!}
			\sum_{k,l=0}^\infty \frac{t^k u^l}{l!}
			W^{\circ k,l}((a,b),(a+b))
			= \frac{1}{(1+\delta_{a,b})ab(a+b)}\\
			\qquad{}\times
			\sum_{c=0}^{a-1}\Bigg(
			\frac{{\rm e}^{-\frac{u}{2}(a+b)(a-b-1-2c)}}{\prod_{i=1}^{b+c}(1-it) \prod_{j=1}^{a-1-c}(1+jt)}
			-\frac{{\rm e}^{\frac{u}{2}(a+b)(a-b-1-2c)}}{\prod_{i=1}^{a-1-c}(1-it) \prod_{j=1}^{b+c}(1+jt)}
			\Bigg).
	\end{gather*}
\end{Example}

\subsection{Reduction to single Hurwitz numbers}\label{sec-singleH}

Recall that the single Hurwitz numbers can be obtained by taking $\mu^- = (1,1,\dots,1)$
in double Hurwitz numbers labeled by two partitions $\mu^+$ and $\mu^-$.
Thus by evaluating the time variables $\bm t^-$ at $\bm t^- = (1,0,0,0,\dots)$
in the above generating series $\tau^{(2)} (\bm t^\pm;\beta )$, $\tau^{(r)} (\bm t^\pm;\beta )$,
and $W^\bullet (t,u;\bm t^\pm)$ of disconnected double Hurwitz numbers,
one obtains the generating series of the disconnected single Hurwitz numbers.
We denote them by
$\tilde\tau^{(2)} (\bm t^+;\beta )$, $\tilde\tau^{(r)} (\bm t^+;\beta )$
and $\widetilde W^\bullet (t,u;\bm t^+)$, respectively.
Now by Proposition~\ref{prop-restri-t-=1}, we have
\begin{Theorem}
	The KP-affine coordinates for $\tilde\tau^{(r)}(\bm t^+;\beta)$ are
	\[	
 \tilde a^{(r)}_{n,m} =
	\frac{(-1)^n \cdot (m+n)!}{(m+n+1)\cdot m!\cdot n!}
	\cdot \exp\bigg[
	\frac{\beta}{r} \bigg( \bigg(-m-\half\bigg)^r-\bigg(n+\half\bigg)^r
	\bigg)\bigg],
	\qquad m,n\geq 0,
	\]
 for every $r\geq 2$,
	and the KP-affine coordinates for $\widetilde W^\bullet (t,u;\bm t^+)$ are
	\begin{align*}
		\tilde a_{n,m}^{\widetilde W} ={}& (-1)^n \cdot
		\frac{(m+n)!}{(m+n+1)\cdot m!\cdot n!} \\
		&{}\times \exp\Bigg[
		\frac{u}{2}(m+n+1)(m-n)
		- \log\Bigg( \prod_{j=1}^m(1-jt) \prod_{j=1}^n(1+jt)
		\Bigg) \Bigg],
		\qquad m,n\geq 0.
	\end{align*}
\end{Theorem}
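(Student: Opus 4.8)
The plan is to recognize both tau-functions as instances of $\tilde\tau_f$ and then to invoke Proposition~\ref{prop-restri-t-=1} directly. Recall that $\tilde\tau_f(\bm t) = \langle 0|\Gamma_+(\bm t)\exp(\hf)\exp(\alpha_{-1})|0\rangle$ is nothing but the diagonal tau-function $\tau_f(\bm t^+,\bm t^-)$ specialized at $\bm t^- = (1,0,0,\dots)$, since for this value $\Gamma_-(\bm t^-) = \exp(\alpha_{-1})$. First I would observe that $\tilde\tau^{(r)}(\bm t^+;\beta)$ is obtained from $\tau^{(r)}$ by exactly this specialization, and that $\hf^{(r)} = \beta K^{(r)}$ is the operator $\hf$ attached to $f^{(r)}(s) = \beta s^r/r!$; hence $\tilde\tau^{(r)} = \tilde\tau_{f^{(r)}}$. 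In the same way $\widetilde W^\bullet(t,u;\bm t^+) = \tilde\tau_{f^{\rm mix}}$, with $f^{\rm mix}$ the piecewise function supplied by the preceding lemma.

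Once the functions are identified, Proposition~\ref{prop-restri-t-=1} expresses each KP-affine coordinate as $(-1)^n \frac{(m+n)!}{(m+n+1)\, m!\, n!}$ times $\exp\bigl(f(-m-\half) - f(n+\half)\bigr)$, so all that remains is to evaluate the exponent $f(-m-\half) - f(n+\half)$ for the two choices of $f$. For $f^{(r)}$ this is immediate, because both arguments $-m-\half$ and $n+\half$ lie in the single polynomial regime $f^{(r)}(s) = \beta s^r/r!$; a direct substitution yields the first claimed formula.

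For $f^{\rm mix}$ the evaluation must respect the piecewise definition. Since $-m-\half < 0$, I would use the lower branch, noting that the upper limit of its product is $-(-m-\half)-\half = m$, so the logarithmic contribution is $-\log\prod_{j=1}^m(1-jt)$; since $n+\half > 0$, I would use the upper branch, whose product runs to $(n+\half)-\half = n$, contributing $+\log\prod_{j=1}^n(1+jt)$. The two quadratic terms combine through the difference-of-squares identity $(m+\half)^2 - (n+\half)^2 = (m+n+1)(m-n)$, and assembling everything gives the exponent $\frac{u}{2}(m+n+1)(m-n) - \log\bigl(\prod_{j=1}^m(1-jt)\prod_{j=1}^n(1+jt)\bigr)$, which is precisely the second claimed formula.

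The proof involves no real difficulty of principle; the only place demanding care is the bookkeeping in the mixed case, where one must correctly read off the upper limits $m$ and $n$ of the two products from the half-integer arguments $-m-\half$ and $n+\half$, and keep track of the opposite signs of the two logarithmic terms coming from the two branches of $f^{\rm mix}$.
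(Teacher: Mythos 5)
Your proposal is correct and is exactly the paper's own argument: the paper obtains this theorem by specializing $\bm t^-=(1,0,0,\dots)$ and invoking Proposition~\ref{prop-restri-t-=1} with $f=f^{(r)}$ and $f=f^{\rm mix}$, just as you do, and your bookkeeping in the mixed case (product upper limits $m$ and $n$, and the identity $\big(m+\half\big)^2-\big(n+\half\big)^2=(m+n+1)(m-n)$) is right.

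One point deserves flagging, though. For $f^{(r)}(s)=\beta s^r/r!$, direct substitution into Proposition~\ref{prop-restri-t-=1} gives the exponent $\frac{\beta}{r!}\big(\big({-}m-\half\big)^r-\big(n+\half\big)^r\big)$, whereas the statement prints the coefficient $\frac{\beta}{r}$. These agree only for $r=2$; for $r\geq 3$ the printed coefficient appears to be a typo (it should be $r!$). Your assertion that ``a direct substitution yields the first claimed formula'' therefore passes over this discrepancy silently; a careful write-up should either note the typo or state the result with $r!$ in the denominator.
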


Then one can apply Zhou's formula~\cite[Section~5]{zhou1}
to compute the generating series of the connected single Hurwitz numbers.

\section[Stationary Gromov--Witten invariants of P\^{}1 relative to 0,infty]{Stationary Gromov--Witten invariants of $\boldsymbol{\bP^1}$ relative to $\boldsymbol{0,\infty}$}
\label{sec-app-P1}

In~\cite{op},
Okounkov and Pandharipande have studied the Gromov--Witten theory of $\bP^1$
using the Gromov--Witten/Hurwitz correspondence.
Now in this section,
we discuss how to apply the main result in Section~\ref{sec-conn-nmpt} to compute
the stationary GW invariants of $\bP^1$ relative to two points~$0,\infty \in \bP^1$.

Let $(x_1,x_2,\dots)$ be a family of formal variables.
Denote
\begin{equation*}
	\tau_{\bP^1} (x,\bm t^+, \bm t^- )
	=\exp \Bigg(\sum_{|\mu^+|=|\mu^-|}
	\Bigg\langle \mu^+,
	\exp\Bigg(\sum_{i=1}^\infty x_i\tau_i(w)\Bigg),
	\mu^- \Bigg\rangle^{\bP^1}
	\cdot t_{\mu^+}^+ t_{\mu^-}^- \Bigg)
\end{equation*}
the exponential generating functions of the Gromov--Witten invariants,
where $t_\mu=t_{\mu_1} t_{\mu_2} \cdots t_{\mu_l}$ for a partition $\mu = (\mu_1, \mu_2,\dots,\mu_l)$,
and $w\in H^*(\bP^1,\bC)$ is the Poincar\'e dual of $[\text{pt}]$, and
\begin{equation*}
	\Bigg\langle \mu^+ ,\prod_{i=1}^n \tau_{k_i}(w), \mu^- \Bigg\rangle^{\bP^1}
	=\int_{\overline{\mathcal{M}}_{g,n}(\bP^1,\mu^+,\mu^-)} \prod_{i=1}^n \psi_i^{k_i}\operatorname{ev}_i^*(w)
\end{equation*}
is the stationary Gromov--Witten invariants of $\bP^1$
relative to two points $0, \infty \in \bP^1$.
Using the GW/H correspondence,
Okounkov and Pandharipande proved that
(see~\cite[Proposition 4.1]{op})
\[
\tau_{\bP^1} (x,\bm t^+,\bm t^-)
={\rm e}^{ \sum_{i\geq 0} \frac{(1-2^{-i-1}) \zeta(-i-1)}{i+1}x_i}
\cdot \big\langle 0\big| \Gamma_+(\bm t^+) {\rm e}^{\sum_{i\geq 0} x_i K^{(i+1)}} \Gamma_-(\bm t^-) \big|0\big\rangle,
\]
where $\zeta$ is the Riemann zeta-function and $K^{(i+1)}$ are the operators~\eqref{eq-def-C&J-r}.

\begin{Remark}
	The additional factor ${\rm e}^{ \sum_{i\geq 0} \frac{(1-2^{-i-1}) \zeta(-i-1)}{i+1}x_i}$ appears from
	the definition of shifted symmetric power sum.
	It becomes a constant summand after taking logarithm,
	thus makes no contribution to the connected $(n,m)$-point functions.
\end{Remark}

Now in this case,
the corresponding function~\eqref{eq-def-functionf} should be
\[
f_{\bP^1} (s) =
\sum_{i\geq 0} x_i \cdot \frac{s^{i+1}}{(i+1)!},
\qquad \forall s\in \bZ+\half,
\]
then we are able to compute the
stationary GW invariants of $\bP^1$ relative to $0, \infty$
using Theorem~\ref{thm-mainthm-conn}.
The result is
\begin{Theorem}
	Let
	\begin{gather*}
			\mu^+ = (\mu_1^+,\dots,\mu_n^+) = \big(1^{m_1(\mu^+)}2^{m_2(\mu^+)}\cdots\big),\\
			\mu^- = (\mu_1^-,\dots,\mu_m^-) = \big(1^{m_1(\mu^-)}2^{m_2(\mu^-)}\cdots\big),
	\end{gather*}
	be two partitions of integers with $|\mu^+| = |\mu^-|$,
	then the stationary GW invariants of $\bP^1$ relative to $0, \infty \in \bP^1$
	are given by
	\begin{gather*}
		\Bigg\langle \mu^+,
		\exp\Bigg(\sum_{i=1}^\infty x_i\tau_i(w)\Bigg),
		\mu^- \Bigg\rangle^{\bP^1}
		=
		\frac{1}{\prod\limits_{i\geq 1} m_i(\mu^+)! \cdot \prod\limits_{j\geq 1} m_j(\mu^-)!}\\
 \qquad{}\times
		\operatorname{Coeff}_{\prod_{a=1}^n z_a^{-\mu_a^+ -1} \prod_{b=1}^m z_{n+b}^{\mu_b^- -1}}
		\Bigg[
		(-1)^{n+m-1} \sum_{\text{$(n+m)$-{\rm cycles}}}
		\prod_{i=1}^m B^{\bP^1}_{\sigma(i),\sigma(i+1)}
		\Bigg],
	\end{gather*}
 where $B^{\bP^1}_{i,j}$ are given by
	\begin{equation*}
		B_{i,j}^{\bP^1} = \begin{cases}
	\displaystyle		i_{z_i,z_j} \frac{1}{z_i-z_j} & \text{if} \ i< j\leq n \ \text{or} \ n<i< j,\\
	\displaystyle		\sum\limits_{k= 0}^\infty \exp\bigg( -\sum\limits_{l\geq 0} x_l\frac{(k+\half)^{l+1}}{(l+1)!} \bigg) z_i^{-k-1}z_j^k
			& \text{if} \ i\leq n<j,\\
	\displaystyle		i_{z_j,z_i} \frac{1}{z_i-z_j} & \text{if} \ j<i\leq n \ \text{or} \ n<j< i,\\
	\displaystyle		-\sum\limits_{k= 0}^\infty \exp\bigg( \sum\limits_{l\geq 0} x_l\frac{(-k-\half)^{l+1}}{(l+1)!} \bigg) z_j^{-k-1}z_i^k
			& \text{if} \ j\leq n<i.
		\end{cases}
	\end{equation*}
\end{Theorem}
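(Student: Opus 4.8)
The plan is to recognize $\tau_{\bP^1}$ as a diagonal tau-function and then invoke Theorem~\ref{thm-mainthm-conn}. First I would start from the Okounkov--Pandharipande fermionic representation recalled above and observe that the operator in the exponent is exactly $\hf_{\bP^1}$: using the definition~\eqref{eq-def-C&J-r} of $K^{(i+1)}$, one has
\[
\sum_{i\geq 0} x_i K^{(i+1)} = \sum_{s\in\bZ+\half}\Bigg(\sum_{i\geq 0} x_i\frac{s^{i+1}}{(i+1)!}\Bigg){:}\psi_s\psi_{-s}^*{:} = \sum_{s\in\bZ+\half} f_{\bP^1}(s)\,{:}\psi_s\psi_{-s}^*{:} = \hf_{\bP^1},
\]
so that $\tau_{\bP^1}(x,\bm t^+,\bm t^-)$ coincides with the diagonal tau-function $\tau_{f_{\bP^1}}(\bm t^+,\bm t^-)$ of~\eqref{eq-def-tauf} up to the scalar prefactor ${\rm e}^{\sum_i \frac{(1-2^{-i-1})\zeta(-i-1)}{i+1}x_i}$.

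Next I would note, as in the preceding Remark, that this prefactor does not depend on $\bm t^\pm$, hence contributes only a constant summand to the free energy $\log\tau_{\bP^1}$; therefore every mixed partial derivative in $\bm t^+,\bm t^-$ of $\log\tau_{\bP^1}$ coincides with the corresponding derivative of $F_{f_{\bP^1}} = \log\tau_{f_{\bP^1}}$. Applying Theorem~\ref{thm-mainthm-conn} with $f = f_{\bP^1}$ then expresses the generating series of these mixed derivatives as $(-1)^{n+m-1}\sum_{\text{$(n+m)$-cycles}}\prod_i B_{\sigma(i),\sigma(i+1)}$; substituting $f_{\bP^1}$ into the entries $A_{-f}$ and $A_{f(-\cdot)}$, and using $A_0(z,w) = i_{z,w}\frac{1}{z-w}$ together with $-A_0(w,z) = i_{w,z}\frac{1}{z-w}$ for the diagonal-free blocks, produces precisely the stated matrix $B^{\bP^1}_{i,j}$. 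The correction term $-i_{z_1,z_2}\frac{\delta_{n,2}\delta_{m,0}+\delta_{n,0}\delta_{m,2}}{(z_1-z_2)^2}$ can be dropped here, since we only extract mixed derivatives with both $n>0$ and $m>0$, and by Corollary~\ref{cor-twosides} the pure-$\bm t^+$ and pure-$\bm t^-$ derivatives vanish in any case.

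The remaining step is the dictionary between derivatives of the free energy and the individual invariants. Writing
\[
\log\tau_{\bP^1} = \text{const} + \sum_{|\mu^+|=|\mu^-|}\Bigg\langle\mu^+,\exp\Bigg(\sum_i x_i\tau_i(w)\Bigg),\mu^-\Bigg\rangle^{\bP^1}\! t_{\mu^+}^+ t_{\mu^-}^-,
\]
and differentiating at $\bm t=0$, the monomial $t_{\mu^+}^+ t_{\mu^-}^- = \prod_r (t_r^+)^{m_r(\mu^+)}\prod_s(t_s^-)^{m_s(\mu^-)}$ contributes exactly the factor $\prod_r m_r(\mu^+)!\prod_s m_s(\mu^-)!$ when one differentiates with respect to the variables indexed by the parts of $\mu^\pm$. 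Hence extracting the coefficient of $\prod_{a=1}^n z_a^{-\mu_a^+-1}\prod_{b=1}^m z_{n+b}^{\mu_b^--1}$ on the cycle side of~\eqref{eq-mainthm-sumcycles} reads off this derivative, and dividing by $\prod_i m_i(\mu^+)!\prod_j m_j(\mu^-)!$ yields the asserted formula.

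I expect the only delicate point to be this last bookkeeping of the automorphism factors $\prod_i m_i(\mu^\pm)!$ together with the matching of the $z$-exponent conventions; in contrast to the Hurwitz theorems no extra factor $\prod_i\mu_i^\pm$ appears, because here the free energy is expanded directly in the $t_\mu$ rather than in the power sums $p_\mu$. Everything else is a direct specialization of Theorem~\ref{thm-mainthm-conn}, entirely parallel to the double Hurwitz computations carried out earlier, so no genuinely new obstacle arises beyond the clean cancellation of the scalar prefactor under passage to connected correlators.
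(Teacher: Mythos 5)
Your proposal is correct and follows essentially the same route as the paper: identify $\sum_{i\geq 0}x_iK^{(i+1)}$ as $\hf_{\bP^1}$ for $f_{\bP^1}(s)=\sum_{i\geq 0}x_i\frac{s^{i+1}}{(i+1)!}$, discard the $\bm t^\pm$-independent prefactor as a constant summand of the free energy, apply Theorem~\ref{thm-mainthm-conn}, and read off coefficients with the automorphism factors $\prod_i m_i(\mu^\pm)!$ (correctly noting that no $\prod_i\mu_i^\pm$ appears since the GW generating series is written directly in the $t^\pm_{\mu}$ rather than the power sums). Your handling of the dropped $(2,0)/(0,2)$ correction term via Corollary~\ref{cor-twosides} also matches the paper's treatment in the parallel Hurwitz theorems.
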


\begin{Example}
	The connected $(1,1)$-point correlators are given by
	\begin{gather*}
			\Bigg\langle (u),
			\exp\Bigg(\sum_{i=1}^\infty x_i\tau_i(w)\Bigg),
			(u) \Bigg\rangle^{\bP^1} \\
\qquad{} =
			\sum_{a=0}^{u-1}
			\exp\bigg( \sum_{k\geq 0} \frac{x_k}{(k+1)!}\bigg( \bigg(a+\half-u\bigg)^{k+1} - \bigg(a+\half\bigg)^{k+1} \bigg) \bigg)\\
 \qquad{} =\sum_{a=0}^{u-1}
			\exp\bigg( f_{\bP^1}\bigg(a+\half -u\bigg) - f_{\bP^1}\bigg(a+\half\bigg) \bigg),
	\end{gather*}
	and the connected $(2,1)$-point correlators are given by
	\begin{gather*}
			\Bigg\langle (u,v),
			\exp\Bigg(\sum_{i=1}^\infty x_i\tau_i(w)\Bigg),
			(u+v) \Bigg\rangle^{\bP^1}\\
			\qquad {}= \frac{1}{1+\delta_{u,v}}
			\sum_{a=0}^{u-1} \bigg[
			- \exp\biggl( \sum_{k\geq 0} \frac{x_k}{(k+1)!}
			\biggl( \biggl(a+\half-u\biggr)^{k+1} - \biggl(a+\half +v \biggr)^{k+1} \biggr) \biggr)\\
			\qquad\quad{} + \exp\biggl(
			\sum_{k\geq 0} \frac{x_k}{(k+1)!}
			\biggl( \biggl(-a-\half-v\biggr)^{k+1} - \biggl(u-\half-a \biggr)^{k+1} \biggr) \biggr)
			\bigg]\\
			\qquad{}= \frac{1}{1+\delta_{u,v}}
			\sum_{a=0}^{u-1} \biggl(
			- \exp
			\biggl( f_{\bP^1}\biggl(a+\half-u\biggr) - f_{\bP^1}\biggl(a+\half +v \biggr) \biggr) \\
			 \qquad\quad{} + \exp
			\biggl( f_{\bP^1}\biggl(-a-\half-v\biggr) - f_{\bP^1}\biggl(u-\half-a \biggr) \biggr)
			\biggr).
	\end{gather*}
\end{Example}

\begin{Remark}
	For a fixed $r$,
	if one takes $x_{r-1} = \beta$ and $x_i=0$ for all $i\neq r$,
	then the function~$f_{\mathbb{P}^1}$ is reduced to $f^{(r)}$
	in the case of double Hurwitz numbers with completed $r$-cycles.
	These are indeed simple cases of GW/H correspondence.
\end{Remark}

\subsection*{Acknowledgements}
We thank the anonymous referees for helpful suggestions.
We also thank Professor Jian Zhou for the on-line course on Hurwitz numbers in TMCSC,
and thank Professor Huijun Fan, Professor Xiaobo Liu and Professor Xiangyu Zhou for encouragement.
The second author is supported by the National Natural Science Foundation of China (No.~12288201).

\pdfbookmark[1]{References}{ref}
\LastPageEnding

\end{document}